\newtheorem{theorem}{Theorem}[section]
\newtheorem{lemma}[theorem]{Lemma}
\newtheorem{proposition}[theorem]{Proposition}
\newtheorem{definition}[theorem]{Definition}
\newtheorem{remark}[theorem]{Remark}
\newcommand{\ot}{\leftarrow}
\newcommand{\cF}{\mathcal{F}}
\newcommand{\cS}{\mathcal{S}}
\newcommand{\cP}{\mathcal{P}}
\newcommand{\cI}{\mathcal{I}}
\newcommand{\conv}{\mathop{\rm conv}}
\newcommand{\supp}{\mathop{\rm supp}}
\newcommand{\argmax}{\mathop{\rm arg\,max}}
\newcommand{\argmin}{\mathop{\rm arg\,min}}
\title{Randomized Strategies for Robust Combinatorial Optimization}
\date{}
\author[1]{Yasushi Kawase}
\author[2]{Hanna Sumita}
\affil[1]{Tokyo Institute of Technology, Tokyo, Japan. \texttt{kawase.y.ab@m.titech.ac.jp}}
\affil[2]{Tokyo Metropolitan University, Tokyo, Japan. \texttt{sumita@tmu.ac.jp}}
\begin{document}

\maketitle

\begin{abstract}
In this paper, we study the following robust optimization problem. 
Given an independence system and candidate objective functions, we choose an independent set, and then an adversary chooses one objective function, knowing our choice. 
Our goal is to find a randomized strategy (i.e., a probability distribution over the independent sets) that maximizes the expected objective value. 
To solve the problem, we propose two types of schemes for designing approximation algorithms.
One scheme is for the case when objective functions are linear. 
It first finds an approximately optimal aggregated strategy and then retrieves a desired solution with little loss of the objective value. 
The approximation ratio depends on a relaxation of an independence system polytope. 
As applications, we provide approximation algorithms for a knapsack constraint or a matroid intersection by developing appropriate relaxations and retrievals. 
The other scheme is based on the multiplicative weights update method. 
A key technique is to introduce a new concept called $(\eta,\gamma)$-reductions for objective functions with parameters $\eta, \gamma$. 
We show that our scheme outputs a nearly $\alpha$-approximate solution if there exists an $\alpha$-approximation algorithm for a subproblem defined by $(\eta,\gamma)$-reductions. 
This improves approximation ratio in previous results. 
Using our result, we provide approximation algorithms when the objective functions are submodular or correspond to the cardinality robustness for the knapsack problem.
\end{abstract}

\section{Introduction}
This paper addresses robust combinatorial optimization.
Let $E$ be a finite ground set, and let $n$ be a positive integer. 
Suppose that we are given $n$ set functions $f_1,\dots,f_n\colon 2^E\to\mathbb{R}_+$ and an independence system $(E,\cI)$.
The functions $f_1,\dots,f_n$ represent possible scenarios.
The worst case value for $X\in\cI$ across all scenarios is defined as $\min_{k\in[n]} f_k(X)$, where $[n]=\{1,\dots,n\}$.
We focus on a randomized strategy for the robust optimization problem,
i.e., a probability distribution over $\cI$.
Let $\Delta(\cI)$ and $\Delta_n$ denote the set of probability distributions over $\cI$ and $[n]$, respectively.
The worst case value for a randomized strategy $p \in \Delta(\cI)$ is defined as $\min_{k\in[n]}\sum_{X\in\cI}p_X\cdot f_k(X)$. 
The aim of this paper is to solve the following robust optimization problem:
\begin{align}\label{eq:problem}
\max~\min_{k \in [n]}\sum_{X\in\cI}p_X\cdot f_k(X)\quad\text{s.t.}\quad p\in\Delta(\cI).
\end{align}

There exist a lot of previous work on a deterministic strategy for \eqref{eq:problem}, that is, finding $X \in \cI$ that maximizes the worst case value. 
We are motivated by the following two merits to focus on a randomized strategy. 
The first one is that the randomization improves the worst case value dramatically.
Suppose that $f_1(X)=|X\cap\{a\}|$, $f_2(X)=|X\cap\{b\}|$, and $\cI=\{\emptyset,\{a\},\{b\}\}$.
Then, the maximum worst case value of deterministic strategy is $\max_{X\in\cI}\min_{k\in\{1,2\}} f_k(X)=0$,
while that of randomized strategy is $\max_{p\in\Delta(\cI)}\min_{k\in\{1,2\}}\sum_{X\in\cI}p_X\cdot f_k(X)=1/2$. 
The second merit is that a randomized strategy can be found more easily than a deterministic one.
It is known that finding a deterministic solution is hard even in a simple setting~\cite{ABV2009,KZ2016}.
In particular, as we will see later (Theorem~\ref{thm:dethard}), computing a solution $X$ with the maximum worst case value is NP-hard even to approximate even for linear objectives subject to a cardinality constraint. 
Note that the randomized version of this problem is polynomial-time solvable (see Theorem~\ref{thm:LP-based}). 

It is worth noting that we can regard the optimal value of \eqref{eq:problem} as the \emph{game value} in a two-person zero-sum game where one player (algorithm) selects a feasible solution $p\in \Delta(\cI)$ and the other player (adversary) selects a possible scenario $q\in \Delta_n$.

An example of the robust optimization problem appears in the (zero-sum) \emph{security games},
which models the interaction between a system \emph{defender} and a malicious \emph{attacker} to the system~\cite{tambe2011}.
The model and its game-theoretic solution have various applications in the real world:
the Los Angeles International Airport to randomize deployment of their limited security resources~\cite{PJM+2008};
the Federal Air Marshals Service to randomize the allocation of air marshals to flights~\cite{TRKOT2009};
the United States Coast Guard to recommend randomized patrolling strategies for the coast guard~\cite{APS2011};
and many other agencies.
In this game, we are given $n$ targets $E$.
The defender selects a set of targets $X\in\cI\subseteq 2^E$, and then the attacker selects a facility $e\in E$.
The utility of defender is $r_i$ if $i\in X$ and $c_i$ if $i\not\in X$.
Then, we can interpret the game as the robust optimization with $f_i(X)=c_i+\sum_{j\in X}w_{ij}$
where $w_{ij}=r_i-c_i$ if $i=j$ and $0$ if $i\ne j$ for $i,j\in E$.
Most of the literature has focused on the computation of the \emph{Stakelberg equilibrium},
which is equivalent to \eqref{eq:problem}.

Another example of \eqref{eq:problem} is to compute the \emph{cardinality robustness} for the maximum weight independent set problem~\cite{hassin2002rm,fujita2013,kakimura2013ris,matuschke2015,KT2016}.
The problem is to choose an independent set of size at most $k$ with as large total weight as possible, but the cardinality bound $k$ is not known in advance. 
For each independent set $X$, we denote the total weight of the $k$ heaviest elements in $X$ by $v_{\le k}(X)$. 
The problem is also described as the following zero-sum game. 
First, the algorithm chooses an independent set $X$, and then the adversary (or nature) chooses a cardinality bound $k$, knowing $X$. 
The payoff of the algorithm is $v_{\leq k}(X)$. 
For $\alpha \in [0,1]$, an independent set $X\in\cI$ is said to be \emph{$\alpha$-robust} if $v_{\le k}(X) \geq \alpha \cdot \max_{Y\in\cI}v_{\le k}(Y)$ for any $k\in[n]$.
Then, our goal is to find a randomized strategy that maximizes the robustness $\alpha$, i.e., $\max_{p\in\Delta(\cI)} \min_{k \in [n]} {\sum_{X\in\cI} p_X\cdot v_{\le k}(X)}/{\max_{Y\in\cI}v_{\le k}(Y)}$. 
We refer this problem as the maximum cardinality robustness problem. 
This is formulated as \eqref{eq:problem} by setting $f_k(X)=v_{\le k}(X)/\max_{Y\in\cI}v_{\le k}(Y)$.

\medskip

Since \eqref{eq:problem} can be regarded as the problem of computing the game value of the two-person zero-sum game, 
one most standard way to solve \eqref{eq:problem} is to use the \emph{linear programming} (LP).
In fact, it is known that we can compute the exact game value in polynomial time with respect to the numbers of deterministic (pure) strategies for both players~(see, e.g., \cite{nisan2007agt,bowles2009} for the detail).
However, in our setting, direct use of the LP formulation does not give an efficient algorithm, because 
the set of deterministic strategies for the algorithm is $\cI$, whose cardinality is exponentially large, and hence 
the numbers of the variables and the constraints in the LP formulation are exponentially large. 

Another known way to solve \eqref{eq:problem} is to use the multiplicative weights update (MWU) method. 
The MWU method is an algorithmic technique which maintains a distribution on a certain set of interest and updates it iteratively by multiplying the probability mass of elements by suitably chosen factors based on feedback obtained by running another algorithm on the distribution~\cite{Kale2007}. 
MWU is a simple but powerful method that is used in wide areas such as game theory, machine learning, computational geometry, optimization, and so on.
Freund and Schapire~\cite{FS1999} showed that MWU can be used to calculate the approximate value of a two-person zero-sum game under some conditions. 
More precisely, if (i) the adversary has a polynomial size deterministic strategies and (ii) the algorithm can compute a \emph{best response}, 
then MWU gives a polynomial-time algorithm to compute the game value up to an additive error of $\epsilon$ for any fixed constant $\epsilon>0$.
For each $q\in\Delta_n$, we call $X^*\in\cI$ a best response for $q$ if $X^*\in\argmax_{X\in\cI}\sum_{k\in[n]}q_k f_k(X)$. 
Krause et al.~\cite{krause2011} and Chen et al.~\cite{CLSS2017} extended this result for the case when the algorithm can only compute an \emph{$\alpha$-best response}, i.e., an $\alpha$-approximate solution for $\max_{X\in\cI}\sum_{k\in[n]}q_k f_k(X)$. 
They provided a polynomial-time algorithm that finds an $\alpha$-approximation of the game value up to additive error of $\epsilon\cdot\max_{k\in[n],~X\in\cI}f_k(X)$ for any fixed constant $\epsilon>0$. 
This implies an approximation ratio of $\alpha-\epsilon\cdot\max_{k\in[n], \, X\in\cI}f_k(X)/\nu^*$, where $\nu^*$ is the optimal value of \eqref{eq:problem}. 
Their algorithms require pseudo-polynomial time to obtain an $(\alpha-\epsilon')$-approximation solution for a fixed constant $\epsilon'>0$. 
In this paper, we improve their technique to find it in polynomial time. 

The main results of this paper are two general schemes for solving \eqref{eq:problem} based on LP and MWU in the form of using some subproblems. 
Therefore, when we want to solve a specific class of the problem \eqref{eq:problem}, it suffices to solve the subproblem. 
As consequences of our results, we show (approximation) algorithms to solve \eqref{eq:problem} in which the objective functions and the constraint belong to well-known classes in combinatorial optimization, such as submodular functions, knapsack/matroid/$\mu$-matroid intersection constraints.

\subsection*{Related work}
While there exist still few papers on randomized strategies of the robust optimization problems, 
algorithms to find a deterministic strategy have been intensively studied in various setting. 
See also survey papers~\cite{ABV2009,KZ2016}.
Krause et al.~\cite{krause2008} focused on $\max_{X\subseteq E,\,|X|\le \ell}\min_{k\in[n]}f_k(X)$ where $f_k$'s are monotone submodular functions.
Those authors showed that this problem is NP-hard even to approximate, and provided an algorithm that outputs a set $X$ of size $\ell \cdot (1+\log(\max_{e\in E}\sum_{k\in[n]}f_k(\{e\})))$ whose objective value is at least as good as the optimal value.
Orlin et al.~\cite{OSU2016} provided constant-factor approximate algorithms to solve $\max_{X\subseteq E,\,|X|\le k}\min_{Z\subseteq X,\,|Z|\le \tau} f(X-Z)$, where $f$ is a monotone submodular function. 

Kakimura et al.~\cite{Kakimura2012} proved that the deterministic version of the maximum cardinality robustness problem is weakly NP-hard but admits an FPTAS. 
Since Hassin and Rubinstein~\cite{hassin2002rm} introduced the notion of the cardinality robustness, many papers have been investigating the value of the maximum cardinality robustness~\cite{hassin2002rm,fujita2013,kakimura2013ris}. 
Matuschke et al.~\cite{matuschke2015} introduced randomized strategies for the cardinality robustness, and they presented a randomized strategy with $(1/\ln 4)$-robustness for a certain class of independence system $\cI$.
Kobayashi and Takazawa~\cite{KT2016} focused on independence systems that are defined from the knapsack problem, and exhibited two randomized strategy with robustness $\Omega(1/\log\sigma)$ and $\Omega(1/\log\upsilon)$, where $\sigma$ is the exchangeability of the independence system and $\upsilon = \frac{\text{the size of a maximum independent set}}{\text{the size of a minimum dependent set} - 1}$. 

When $n=1$, the deterministic version of the robust optimization problem \eqref{eq:problem} is exactly the classical optimization problem $\max_{X \in \cI} f(X)$. 
For the monotone submodular function maximization problem, there exist $(1-1/e)$-approximation algorithms under a knapsack constraint~\cite{Sviridenko04} or a matroid constraint~\cite{CCPV2007,FW2012}, and there exists a $1/(\mu+\epsilon)$-approximation algorithm under a $\mu$-matroid intersection constraint for any fixed $\epsilon>0$~\cite{LSV2010}. 
For the unconstrained non-monotone submodular function maximization problem, there exists a $1/2$-approximation algorithm, and this is best possible~\cite{FMVV2011,BFNS2015}. 
As for the case when the objective function $f$ is linear, 
the knapsack problem admits an FPTAS~\cite{KMS2000}. 

\subsection*{Our results}

\paragraph*{LP-based algorithm}
We focus on the case when all the objective functions $f_1, \ldots, f_n$ are linear. 
In a known LP formulation for zero-sum games, each variable corresponds a probability that each $X \in \cI$ is chosen. 
Because $|\cI|$ is large, we use another LP formulation of \eqref{eq:problem}. 
The number of variables is reduced by setting as a variable a probability that each element in $E$ is chosen. 
The feasible region consists of the independence system polytope, that is, the convex hull of the characteristic vectors for $X \in \cI$. 
Although our LP formulation still has the exponential number of constraints, we can use the result by Gr{\"o}tschel, Lov{\'a}sz, and Schrijver~\cite{GLS2012} that if we can efficiently solve the separation problem for the polytope of the feasible region, then we can efficiently solve the LP by the ellipsoid method. 
Since the solution of the LP is an aggregated strategy for \eqref{eq:problem}, we must retrieve a randomized strategy from it. 
To do this, we use the result in \cite{GLS2012} again that we can efficiently compute the convex combination of the optimal vector with extreme points (vertex) of the polytope. 
Consequently, we can see that there exists a polynomial-time algorithm for \eqref{eq:problem} when $(E, \cI)$ is a matroid (or a matroid intersection), because a matroid (intersection) polytope admits a polynomial-time separation algorithm. 
As another application, we also provide a polynomial-time algorithm for the robust shortest $s$--$t$ path problem by using the dominant of an $s$--$t$ path polytope.

Moreover, we extend our scheme to deal with the case that the separation problem is NP-hard. 
For many combinatorial optimization problems such as the knapsack problem and the $\mu$-matroid intersection problem ($\mu \geq 3$), the existence of an efficient algorithm to solve the separation problem is still unknown. 
A key point to deal such cases is to use a slight relaxation of the independence system polytope. 
We show that if we can efficiently solve the separation problem for the relaxed polytope, then we can know an approximate value of \eqref{eq:problem}.
The approximation ratio is equal to the gap between the original polytope and the relaxed polytope. 
The most difficult point is the translation of the optimal solution of the LP to a randomized strategy, because the optimal solution may not belong to the original feasible region, and we are no longer able to use the result in \cite{GLS2012}. 
Instead, we compute a randomized strategy approximately. 
We demonstrate our extended scheme for the knapsack constraint and the $\mu$-matroid intersection constraint 
by developing appropriate relaxations and retrievals for them.
As results, we obtain a PTAS and a $2/(e\mu)$-approximation algorithm for the knapsack constraint and the $\mu$-matroid intersection constraint, respectively.

The merit of the LP-based algorithm compared with MWU is that the LP-based one is applicable to the case when the set of possible objective functions is given by a half-space representation of a polytope. 
The problem \eqref{eq:problem} is equivalent to the case where the set of possible objective functions is given by a convex hull of linear functions $\conv\{f_1,\dots,f_n\}$ (i.e., a vertex representation). 
Since a vertex representation can be transformed to a half-space representation (by an extended formulation as we will describe later), \eqref{eq:problem} with a half-space representation is a generalization of the original problem. 
On the other hand, the transformation of a half-space representation to a vertex one is expensive because the number of vertices may be exponentially large. 
Both representations of a polytope have different utility, and hence it is important that the LP-based algorithm can deal with both. 

\paragraph*{MWU-based algorithm}
We improve the technique of \cite{krause2011,CLSS2017} to obtain an approximation algorithm based on the MWU method. 
Their algorithm adopts the value of $f_k(X) \ (k\in[n])$ for update, but this may lead the slow convergence when $f_k(X)$ is small for some $k$. 
To overcome the drawback, we make the convergence rate per iteration faster by introducing a novel concept called \emph{$(\eta,\gamma)$-reduction}. 
For any nonnegative function $f$, a function $g$ is called an $(\eta,\gamma)$-reduction of $f$ if
(i) $g(X)$ is always at most $\min\{f(X),\eta\}$ and
(ii) $f(X)=g(X)$ for any $X$ such that $g(X)$ is at most $\gamma\cdot\eta$. 
We assume that for some polynomially bounded $\gamma \leq 1$, there exists an $\alpha$-approximation algorithm that solves $\max_{X\in\cI}\sum_{k\in[n]}q_kf_k^\eta(X)$ for any $\eta$ and $q \in \Delta_n$, where $f_k^\eta$ is an $(\eta,\gamma)$-reduction of $f_k$ for each $k$. 
By using the approximation algorithm as a subroutine and by setting appropriately the value of $\eta$, we show that for any fixed constant $\epsilon > 0$, our scheme gives an $(\alpha-\epsilon)$-approximation solution in polynomial time with respect to $n$ and $1/\epsilon$. 
We remark that the support size of the output may be equal to the number of iterations. 
Without loss of the objective value, we can find a sparse solution whose support size is at most $n$ by using LP. 

The merit of the MWU-based algorithm is the applicability to a wide class of the robust optimization problem. 
We also demonstrate our scheme for various optimization problems. 
For any $\eta\ge 0$, we show that 
a linear function has an $(\eta,1/|E|)$-reduction to a linear function, 
a monotone submodular function has an $(\eta,1)$-reduction to a monotone submodular function, and 
a non-monotone submodular function has an $(\eta,1/|E|)$-reduction to a submodular function. 
Therefore, we can construct subroutines owing to existing work. 
Consequently, for the linear case, we obtain an FPTAS for \eqref{eq:problem} subject to the knapsack constraint and a $1/(\mu-1+\epsilon)$-approximation algorithm subject to the $\mu$-matroid intersection constraint. 
For the monotone submodular case, there exist a $(1-1/e-\epsilon)$-approximation algorithm for the knapsack or matroid constraint, and a $1/(\mu+\epsilon)$-approximation for the $\mu$-matroid intersection constraint. 
For the non-monotone submodular case, we derive a $(1/2-\epsilon)$-approximation algorithm for \eqref{eq:problem} without a constraint. 

An important application of our MWU-based scheme is the maximum cardinality robustness problem. 
For independence systems defined from the knapsack problem, we obtain an FPTAS for the maximum cardinality robustness problem. 
To construct the subroutine, we give a gap-preserving reduction of $\max_{X\in\cI}\sum_{k\in[n]}q_kf_k^\eta(X)$ to $\max_{X \in \cI} v_{\leq k}(X)$ for any $k$, which admits an FPTAS~\cite{CapraraKPP2000}. 
We also show that the maximum cardinality robustness problem is NP-hard. 

We remark that both schemes produce a randomized strategy, but the schemes themselves are deterministic.
Our results are summarized in Table~\ref{table:our_results}.

\begin{table}[htbp]
\centering
\caption{The approximation ratios for robust optimization problems shown in the present paper.}\label{table:our_results}
\resizebox{\textwidth}{!}{%
\begin{tabular}{l|llll}
\toprule
&    objective functions   &constraint &approximation ratio &reference\\\midrule
\parbox[t]{2mm}{\multirow{4}{*}{\rotatebox[origin=c]{90}{LP-based}}}
&    linear (polytope)     &matroid (intersection) &P &Thm.~\ref{thm:LP-based} \\
&    linear (polytope)     &shortest $s$--$t$ path &P &Thm.~\ref{thm:LP-based}\\
&    linear (polytope)     &knapsack   &PTAS&Thm.~\ref{thm:polytope_knapsack}\\
&    linear (polytope)     &$\mu$-matroid intersection &$\frac{2}{e\mu}$-approx. &Thm.~\ref{thm:polytope_mu}\\\hline
\parbox[t]{2mm}{\multirow{6}{*}{\rotatebox[origin=c]{90}{MWU-based}}} 
&    monotone submodular   &matroid/knapsack &$(1-\frac{1}{e}-\epsilon)$-approx.&Thm.~\ref{thm:monotone_submodular}\\
&    monotone submodular   &$\mu$-matroid intersection &$\frac{1}{\mu+\epsilon}$-approx.&Thm.~\ref{thm:monotone_submodular_mu}\\
&    linear                &$\mu$-matroid intersection &$\frac{1}{\mu-1+\epsilon}$-approx.&Thm.~\ref{thm:monotone_linear_mu}\\
&    submodular &free      &$(\frac{1}{2}-\epsilon)$-approx.&Thm.~\ref{thm:non-monotone_submodular}\\
&    linear                &knapsack   &FPTAS&Thm.~\ref{thm:monotone_linear_knapsack}\\
&    cardinality           &knapsack   &FPTAS&Thm.~\ref{thm:cardinality_knapsack}\\
\bottomrule
\end{tabular}}
\end{table}

 \paragraph*{Orgamization of this paper}
 The rest of this paper is organized as follows. 
 In Section~\ref{sec:preliminaries}, we fix notations and give a precise description of our problem.
 In Section~\ref{sec:lp-based}, we explain basic scheme of LP-based algorithms and then extend the result
 to a knapsack constraint case and a $\mu$-matroid intersection constraint case.
 In Section~\ref{sec:MWU}, we explain multiplicative weights update method.

\section{Preliminaries}\label{sec:preliminaries}

\paragraph*{Linear and submodular functions}
Throughout this paper, we consider set functions $f$ with $f(\emptyset)=0$. 
We say that a set function $f\colon 2^E\rightarrow \mathbb{R}$ is \emph{submodular} if $f(X)+f(Y)\geq f(X\cup Y) + f(X\cap Y)$ holds for all $X, Y \subseteq E$~\cite{fujishige2005,KG2014}. 
In particular, a set function $f\colon 2^E\rightarrow \mathbb{R}$ is called \emph{linear} (modular) if $f(X)+f(Y)= f(X\cup Y) + f(X\cap Y)$ holds for all $X, Y \subseteq E$. 
A linear function $f$ is represented as $f(X)=\sum_{e\in X}w_{e}$ for some $(w_e)_{e \in E}$. 
A function $f$ is said to be \emph{monotone} if $f(X) \leq f(Y)$ for all $X \subseteq Y \subseteq E$. 
A linear function $f(X)=\sum_{e\in X}w_{e}$ is monotone if and only if $w_e \geq 0$ ($e\in E$). 

\paragraph*{Independence system}
Let $E$ be a finite ground set. 
An \emph{independence system} is a set system $(E, \cI)$ with the following properties:
(I1) $\emptyset\in\mathcal{I}$, and 
(I2) $X\subseteq Y\in\mathcal{I}$ implies $X\in\mathcal{I}$. 
A set $I\subseteq \mathcal{I}$ is said to be \emph{independent}, and an inclusion-wise maximal independent set is called a \emph{base}.
The class of independence systems is wide and it includes matroids, $\mu$-matroid intersections, and families of knapsack solutions. 

A \emph{matroid} is an independence system $(E,\mathcal{I})$ satisfying that 
(I3) $X,Y\in\mathcal{I}$, $|X|<|Y|$ implies the existence of $e\in Y\setminus X$ such that $X\cup\{e\}\in\mathcal{I}$.
All bases of a matroid have the same cardinality, which is called the \emph{rank} of the matroid and is denoted by $\rho(\mathcal{I})$. 
An example of matroids is a uniform matroid $(E, \cI)$, where $\cI = \{ S \subseteq E \mid |S| \leq  r\}$ for some $r$. 
Note that the rank of this uniform matroid is $r$. 
Given two matroids $\mathcal{M}_1 = (E,\mathcal{I}_1)$ and $\mathcal{M}_2 = (E,\mathcal{I}_2)$, the \emph{matroid intersection} of $\mathcal{M}_1$ and $\mathcal{M}_2$ is defined by $(E,\mathcal{I}_1\cap\mathcal{I}_2)$. 
Similarly, given $\mu$ matroids $\mathcal{M}_i = (E,\mathcal{I}_i) \ (i=1, \ldots, \mu)$, the $\mu$-matroid intersection is defined by $(E,\bigcap_{i=1}^\mu \mathcal{I}_i)$. 

Given an item set $E$ with size $s(e)$ and value $v(e)$ for each $e \in E$, and the capacity $C \in \mathbb{Z}_+$, the \emph{knapsack problem} is to find a subset $X$ of $E$ that maximizes the total value $\sum_{e \in X} v(e)$ subject to a knapsack constraint $\sum_{e \in X} s(e) \leq C$. 
Each subset satisfying the knapsack constraint is called a knapsack solution. 
Let $\cI = \{ X \mid \sum_{e \in X} s(e) \leq C \}$ be the family of knapsack solutions. 
Then, $(E, \cI)$ is an independence system. 

\paragraph*{Robust optimization problem}
Let $E$ be a finite ground set, and let $n$ be a positive integer. 
Given $n$ set functions $f_1,\dots,f_n\colon 2^E\to\mathbb{R}_+$
and an independence system $(E,\cI)$,
our task is to solve
\[\max_{p\in\Delta(\cI)}\min_{k\in[n]}\sum_{X\in\cI}p_X\cdot f_k(X).\]
For each $k \in [n]$, we denote $X_k^*\in\argmax_{X\in\cI}f_k(X)$ and assume that $f_k(X_k^*)>0$. 
We assume that the functions are given by an \emph{oracle}, i.e.,
for a given $X\subseteq E$, we can query an oracle about the values $f_1(X),\dots,f_n(X)$.
Let $\Delta(\cI)$ and $\Delta_n$ denote the set of probability distributions over $\cI$ and $[n]$, respectively.

By von Neumann's minimax theorem~\cite{vonneumann1928}, it holds that
\begin{align}
  \max_{p\in\Delta(\cI)}\min_{k\in[n]}\sum_{X\in\cI} p_X\cdot f_k(X)
  =\min_{q\in \Delta_n}\max_{X\in\cI}\sum_{k\in[n]} q_k\cdot f_k(X). \label{eq:minimax}
\end{align}
This leads the following proposition, which will be used later. 
\begin{proposition}\label{prop:nustar}
Let $\nu^*$ denote the optimal value of \eqref{eq:problem}. 
It holds that $\min_{k\in[n]}f_k(X_k^*)/n\le \nu^*\le \min_{k\in[n]}f_k(X_k^*)$.
\end{proposition}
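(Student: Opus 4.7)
The plan is to establish the two inequalities separately, using the minimax identity stated just above the proposition for the upper bound and an explicit construction of a randomized strategy for the lower bound.

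For the upper bound $\nu^* \le \min_{k\in[n]} f_k(X_k^*)$, I would use the right-hand side of \eqref{eq:minimax}. Since $\nu^* = \min_{q\in\Delta_n}\max_{X\in\cI}\sum_{k\in[n]} q_k f_k(X)$, plugging in the degenerate distribution $q = e_j$ (all mass on a single index $j$) yields $\nu^* \le \max_{X\in\cI} f_j(X) = f_j(X_j^*)$. Taking the minimum over $j\in[n]$ gives the desired bound.

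For the lower bound $\nu^* \ge \min_{k\in[n]} f_k(X_k^*)/n$, I would exhibit a specific feasible $p\in\Delta(\cI)$ attaining at least this value. The natural choice is the uniform mixture over the optimizers $X_1^*,\dots,X_n^*$, i.e., set $p_X = \tfrac{1}{n}\,|\{j : X_j^* = X\}|$. Then for every $k\in[n]$,
\[
\sum_{X\in\cI} p_X f_k(X) = \frac{1}{n}\sum_{j=1}^{n} f_k(X_j^*) \ge \frac{1}{n} f_k(X_k^*),
\]
where the inequality uses the nonnegativity of each $f_k$ to discard the $j\ne k$ terms. Taking the minimum over $k$ shows that this $p$ already achieves objective value at least $\tfrac{1}{n}\min_k f_k(X_k^*)$, hence so does the optimum $\nu^*$.

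Neither direction poses a real obstacle; the only subtlety to watch is to make sure the construction in the lower bound is a genuine element of $\Delta(\cI)$ (handled by aggregating duplicates among the $X_j^*$'s) and that the nonnegativity hypothesis $f_k\colon 2^E\to\mathbb{R}_+$ is invoked so that dropping the off-diagonal terms $f_k(X_j^*)$ for $j\neq k$ is legitimate.
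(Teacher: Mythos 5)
Your proof is correct and follows essentially the same route as the paper: the upper bound via the minimax identity with degenerate distributions $q=e_j$, and the lower bound via the uniform mixture $p_X=\lvert\{j: X_j^*=X\}\rvert/n$ together with nonnegativity of the $f_k$. The paper's own proof is the same argument, stated slightly more tersely.
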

\begin{proof}
	The upper bound follows from
	\begin{align*}
	\nu^*
	&=\min_{q\in \Delta_n}\max_{X\in\cI}\sum_{k\in[n]} q_k\cdot f_k(X)
	\le \min_{k\in [n]}\max_{X\in\cI} f_k(X)
	=\min_{k\in [n]}f_k(X_k^*).
	\end{align*}
	
	Let $p^*\in\Delta(\cI)$ be a probability distribution such that $p^*_X=|\{i\in[n]\mid X_i^*=X\}|/n$.
	Then we have
	\begin{align*}
	\nu^*
	&=\max_{p\in\Delta\cI}\min_{k\in[n]}\sum_{X\in\cI} p_X\cdot f_k(X)
	\ge \min_{k\in[n]}\sum_{X\in\cI} p^*_X\cdot f_k(X)
	\ge \min_{k\in[n]} f_k(X_k^*)/n.
	\end{align*}
\end{proof}
This implies that we can find a $1/n$-approximate solution by just computing $X^*_k \ (k \in [n])$. 

We prove that, even for an easy case, computing the optimal worst case value among deterministic solutions is strongly NP-hard even to approximate.
To prove this, we reduce the \emph{hitting set problem}, which is known to be NP-hard~\cite{garey1979cai}.
Given $n$ subsets $S_k\subseteq E$ $(k\in[n])$ on a ground set $E$ and an integer $r$, the hitting set problem is to find a subset $A\subseteq E$ such that $|A|\le r$ and $S_k\cap A\ne\emptyset$ for all $k\in[n]$.
\begin{theorem}\label{thm:dethard}
	It is NP-hard to compute 
	\begin{align}\label{eq:deterministic}
	\max_{X\in\cI}\min_{k\in[n]}f_k(X)
	\end{align} even
when the objective functions $f_1,\dots,f_k$ are linear and $\cI$ is given by a uniform matroid.
Moreover, there exists no approximation algorithm for the problem unless P=NP.
\end{theorem}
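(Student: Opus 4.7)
The plan is to reduce from the (unweighted) hitting set problem, which as noted is NP-hard. Given an instance consisting of subsets $S_1,\dots,S_n\subseteq E$ and an integer $r$, I would construct an instance of \eqref{eq:deterministic} on the same ground set $E$ by taking $\cI$ to be the uniform matroid $\{X\subseteq E : |X|\le r\}$ and, for each $k\in[n]$, defining the linear function
\[
f_k(X) \;=\; |X\cap S_k| \;=\; \sum_{e\in X} \mathbf{1}[e\in S_k].
\]
This is clearly a nonnegative linear function, and $\cI$ is a uniform matroid, so the reduction stays inside the announced hypothesis class.

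Next I would observe the key equivalence: for $X\in\cI$, $\min_{k\in[n]} f_k(X)\ge 1$ if and only if $X$ intersects every $S_k$, i.e., $X$ is a hitting set of size at most $r$. Thus the hitting set instance is a yes-instance iff the optimum of \eqref{eq:deterministic} is at least $1$, and otherwise the optimum is exactly $0$. Since these are the only two possibilities for the optimum (it is either $0$ or a positive integer), an exact algorithm for \eqref{eq:deterministic} decides hitting set, proving NP-hardness.

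For the inapproximability claim, suppose for contradiction that for some $\alpha>0$ there were a polynomial-time $\alpha$-approximation algorithm, returning $X\in\cI$ with $\min_k f_k(X) \ge \alpha\cdot\OPT$. On a yes-instance we have $\OPT\ge 1$, so the algorithm outputs a solution of value at least $\alpha>0$, which since values are integers must be at least $1$; on a no-instance $\OPT=0$ forces the output value to be $0$. Checking whether the returned value is positive therefore decides hitting set in polynomial time, so no such $\alpha$-approximation can exist unless $\mathrm{P}=\mathrm{NP}$.

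There is no serious technical obstacle; the whole argument hinges on choosing the right reduction so that the optimum of \eqref{eq:deterministic} takes only the values $0$ and ``$\ge 1$''. The only thing I would double-check is that the reduction is indeed polynomial (it clearly is, producing $n$ weight vectors of length $|E|$) and that the linear functions are nonnegative as required by the paper's setup, which holds because each $f_k$ is a $0/1$ sum.
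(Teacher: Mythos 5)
Your proposal is correct and follows essentially the same route as the paper: a reduction from the hitting set problem with $\cI$ the rank-$r$ uniform matroid and $f_k(X)=|X\cap S_k|$, using the gap between optimum $0$ and optimum $\ge 1$ to rule out any approximation unless P\,$=$\,NP. No issues to report.
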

\begin{proof}
	Let $(E, \{S_1, \ldots, S_n\}, r)$ be an instance of the hitting set problem. 
	We construct an instance of \eqref{eq:deterministic} as follows. 
	The constraint $\cI$ is defined so that $(E, \cI)$ is the rank $r$ uniform matroid. 
	Note that $\cI$ is a family of subsets with at most $r$ elements. 
	Each objective function $f_k$ $(k\in[n])$ is defined by $f_k(X)=|X\cap S_k|$ $(X\subseteq E)$, which is linear.
	
	If there exists an hitting set $X\in\cI$, then $\min_{k\in[n]} f_k(X)\ge 1$, which implies that the optimal value of \eqref{eq:deterministic} is at least $1$. 
	On the other hand, if any $X\in\cI$ is not a hitting set, then $\min_{k\in[n]} f_k(X)=0$ for all $X\in\cI$, meaning that the optimal value of \eqref{eq:deterministic} is $0$. 
	Therefore, even deciding whether the optimal value of \eqref{eq:deterministic} is positive or zero is NP-hard. 
	Thus, there exists no approximation algorithm to the problem unless P=NP.
\end{proof}

\section{LP-based Algorithms}\label{sec:lp-based}
In this section, we present a computation scheme for the robust optimization problem \eqref{eq:problem}
with linear functions $f_1,\dots,f_n$, i.e., $f_k(X)=\sum_{e\in X}w_{ke}$.
Here, $w_{ke}\ge 0$ holds for $k\in[n]$ and $e\in E$ since we assume $f_k(X)\ge 0$.
A key technique is the separation problem for an independence system polytope. 
An \emph{independence system polytope} of $(E, \cI)$ is a polytope defined as $P(\cI)=\conv\{\chi(X)\mid X\in\cI\}\subseteq[0,1]^{E}$, where $\chi(X)$ is a characteristic vector in $\{0,1\}^E$, i.e., $\chi(X)_e=1$ if and only if $e\in X$. 
For a probability distribution $p\in\Delta(\cI)$, we can get a point $x\in P(\cI)$ such that $x_e=\sum_{X\in\cI:\,e\in X}p_X$ $(e\in E)$.
Then, $x_e \ (e \in E)$ means a probability that $e$ is chosen when we select an independent set according to the probability distribution $p$.
Conversely, for a point $x \in P(\cI)$, there exists $p\in\Delta(\cI)$ such that $\sum_{X \in \cI} p_X \, \chi(X) = x$ by definition of $P(\cI)$. 
Given $x\in\mathbb{R}^E$, the separation problem for $P(\cI)$ is to either assert $x\in P(\cI)$ or
find a vector $d$ such that $d^\top x<d^\top y$ for all $y\in P(\cI)$.

The rest of this section is organized as follows. 
In Section~\ref{sec:LP-based basic}, we prove that we can solve \eqref{eq:problem} in polynomial time
if there is a polynomial-time algorithm to solve the separation problem for $P(\cI)$. 
We list up classes of independence systems such that there exists a polynomial- time algorithm for the separation problem in Section~\ref{subsec:isp}. 
In Section~\ref{sec:LP-based relax}, we tackle the case when it is hard to construct a separation algorithm for $P(\cI)$. 
We show that we can obtain an approximation solution when we can slightly relax $P(\cI)$.
Moreover, we deal with a setting that objective functions are given by a polytope in Section~\ref{sec:LP-based polytope}, and consider nearly linear functions $f_1, \ldots, f_n$ in Section~\ref{sec:LP-based approx linear}. 

\subsection{Basic scheme}\label{sec:LP-based basic}
We observe that the optimal robust value of \eqref{eq:problem} is the same as the optimal value of
the following linear programming (LP):
\begin{align}\label{eq:lp}
\max~\nu\quad\text{s.t.}\quad\nu\le \sum_{e\in E}w_{ie} x_e \quad (\forall i\in[n]),\quad x\in P(\cI).
\end{align}

\begin{lemma}\label{lem:LP}
When $f_1, \ldots, f_n$ are linear, the optimal value of \eqref{eq:lp} is equal to that of \eqref{eq:problem}. 
\end{lemma}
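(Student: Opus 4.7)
The plan is to exhibit an explicit value-preserving correspondence between probability distributions $p \in \Delta(\cI)$ on one side and vectors $x \in P(\cI)$ on the other, so that the two optimization problems are literally the same after a change of variables.

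First I would recall the two directions of the correspondence, which are already highlighted in the text just before the lemma. Given $p \in \Delta(\cI)$, define $x_e = \sum_{X \in \cI : e \in X} p_X$ for each $e \in E$; equivalently $x = \sum_{X \in \cI} p_X \,\chi(X)$, which is a convex combination of vertices of $P(\cI)$ and hence lies in $P(\cI)$. Conversely, by the very definition $P(\cI) = \conv\{\chi(X) \mid X \in \cI\}$, every $x \in P(\cI)$ can be written as $x = \sum_{X \in \cI} p_X\, \chi(X)$ for some $p \in \Delta(\cI)$ (Carathéodory guarantees that only finitely many $p_X$ need to be nonzero).

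Next I would use linearity of $f_k$ to translate the objective: for each $k \in [n]$,
\begin{align*}
\sum_{X\in\cI} p_X\, f_k(X) \;=\; \sum_{X\in\cI} p_X \sum_{e\in X} w_{ke} \;=\; \sum_{e\in E} w_{ke} \sum_{X\in\cI:\, e\in X} p_X \;=\; \sum_{e\in E} w_{ke}\, x_e.
\end{align*}
Therefore the inner minimum in \eqref{eq:problem} equals $\min_{k\in[n]} \sum_{e\in E} w_{ke}\, x_e$, and maximizing this over $p \in \Delta(\cI)$ is the same as maximizing it over $x \in P(\cI)$ by the bijective correspondence above. Finally, one notes that $\max_{x\in P(\cI)} \min_k \sum_e w_{ke} x_e$ is exactly the LP \eqref{eq:lp} after introducing the auxiliary variable $\nu$ with constraints $\nu \le \sum_e w_{ie} x_e$ for all $i$; the maximum of $\nu$ under these constraints is attained at $\nu = \min_i \sum_e w_{ie} x_e$, so the LP value coincides with the value of \eqref{eq:problem}.

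I do not anticipate a real obstacle here: everything reduces to linearity of $f_k$ plus the definition of $P(\cI)$ as the convex hull of characteristic vectors. The only point that deserves a sentence of care is that the map $p \mapsto x$ is many-to-one (many distributions induce the same marginals), but this does not matter because the objective depends on $p$ only through $x$; so the two optimizations genuinely have the same optimal value, even though one cannot recover $p$ from $x$ uniquely.
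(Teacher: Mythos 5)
Your proposal is correct and follows essentially the same route as the paper's proof: both directions rest on the identity $\sum_{X\in\cI} p_X f_k(X) = \sum_{e\in E} w_{ke} x_e$ under the marginal map $x_e = \sum_{X\ni e} p_X$ together with the definition of $P(\cI)$ as a convex hull. The paper merely phrases the argument as two separate inequalities rather than as a single value-preserving correspondence, but the content is identical.
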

\begin{proof}
	Let $p^*\in\Delta(\cI)$ be the optimal solution of \eqref{eq:problem} and let $\nu^*=\min_{k\in[n]}\sum_{X\in\cI}p_X^*\cdot f_k(X)$. 
	Let $x^*\in\mathbb{R}^E$ be a vector such that $x^*_e=\sum_{X\in\cI:\,e\in X}p^*_X$ for each $e\in E$. 
	Note that $x^* = \sum_{X\in\cI}p^*_X \chi(X)$. 
	Then $x^*\in P(\cI)$ holds by the definition of $P(\cI)$.
	Thus, the optimal value of \eqref{eq:lp} is at least
	\begin{align*}
	\min_{k\in[n]}\sum_{e\in E}w_{ke} x_e^*
	&=\min_{k\in[n]}\sum_{e\in E}\sum_{X\in\cI:\,e\in X}p_X^*\cdot w_{ke}\\
	&=\min_{k\in[n]}\sum_{X\in\cI}\sum_{e\in X}p_X^*\cdot w_{ke}
	=\min_{k\in[n]}\sum_{X\in\cI}p_X^*\cdot f_k(X)=\nu^*.
	\end{align*}
	
	On the other hand, let $(\nu',x')$ be an optimal solution of \eqref{eq:lp}.
	As $x'\in P(\cI)~(=\conv\{\chi(X)\mid X\in\cI\})$,
	there exists a $p'\in\Delta(\cI)$ such that $x'_e=\sum_{X\in\cI:\,e\in X}p'_X$ for each $e\in E$.
	Then we have
	\begin{align*}
	\nu^*
	&=\max_{p\in\Delta(\cI)}\min_{k\in[n]}\sum_{X\in\cI}p_X\cdot f_k(X)\\
	&\ge \min_{k\in[n]}\sum_{X\in\cI}p_X'\cdot f_k(X)
	=\min_{k\in[n]}\sum_{X\in\cI}\sum_{e\in X}p_X'\cdot w_{ke}\\
	&=\min_{k\in[n]}\sum_{e\in E}w_{ke}\sum_{X\in\cI:e\in X}p_X'
	=\min_{k\in[n]}\sum_{e\in E}w_{ke}\cdot x_e'
	\ge\nu'.
	\end{align*}
\end{proof}

Thus the optimal solution of \eqref{eq:problem} is obtained by the following two-step scheme. 
\begin{enumerate}
\item compute the optimal solution of LP~\eqref{eq:lp}, which we denote as $(\nu^*,x^*)$,
\item compute $p^*\in\Delta(\cI)$ such that $x^* = \sum_{X\in\cI} p^*_X \, \chi(X)$. 
\end{enumerate}

It is trivial that if $|\cI|$ is bounded by a polynomial in $|E|$ and $n$, 
then we can obtain $p^*$ by replacing $x$ with $\sum_{X\in\cI} p_X \chi(X)$ in \eqref{eq:lp} and solving it. 
In general, we can solve the two problems in polynomial time by the ellipsoid method
when we have a polynomial-time algorithm to solve the separation problem for $P(\cI)$.
This is due to the following theorems given by Gr{\"o}tschel, Lov{\'a}sz, and Schrijver~\cite{GLS2012}.
\begin{theorem}[\cite{GLS2012}]\label{thm:GLS1}
  Let $\cP\subseteq\mathbb{R}^E$ be a polytope.
  If the separation problem for $\cP$ can be solved in polynomial time,
  then we can solve a linear program over $\cP$ in polynomial time.
\end{theorem}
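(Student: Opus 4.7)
The plan is to prove the theorem via the ellipsoid method, reducing linear optimization over $\cP$ to a parametric family of feasibility problems. First I would reformulate $\max\{c^\top x : x \in \cP\}$ as the following question parametrized by a target value $\nu$: is $\cP_\nu := \cP \cap \{x \in \mathbb{R}^E : c^\top x \geq \nu\}$ nonempty? A separation oracle for $\cP_\nu$ is obtained by combining the given oracle for $\cP$ with the explicit test $c^\top x \geq \nu$: at a query point $x$, if $c^\top x < \nu$ return the hyperplane $\{y : c^\top y = \nu\}$ as separator, otherwise invoke the oracle for $\cP$.

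Next I would apply the ellipsoid method to this feasibility problem. Since $\cP$ is a rational polytope with encoding length $\varphi$, standard bounds give an explicit $R$ with $\cP \subseteq B(0,R)$ as well as an a-priori lower bound on the volume of $\cP_\nu$ whenever it is nonempty. Starting from $B(0,R)$, each iteration calls the combined oracle and, using any returned violated inequality as a cutting plane, shrinks the containing ellipsoid by a controlled factor; after $O(\mathrm{poly}(|E|,\varphi))$ iterations either a point of $\cP_\nu$ is produced or the ellipsoid's volume drops below the lower bound, certifying $\cP_\nu=\emptyset$. A binary search on $\nu$ then locates the optimum: because vertices of $\cP$ are rationals with denominators bounded by $2^{\mathrm{poly}(\varphi)}$, the optimal value $\nu^*$ has polynomial encoding length, so $O(\mathrm{poly}(\varphi))$ binary-search rounds localize $\nu^*$ exactly, and an optimal vertex is recovered from a sufficiently accurate approximate solution by simultaneous Diophantine approximation (alternatively, by iteratively fixing tight facets obtained from the separation oracle).

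The main obstacle will be the precision management in the case when $\cP$ is not full-dimensional or fails to contain a ball of explicitly known radius: the termination argument for the ellipsoid method needs such a lower bound in order to certify infeasibility in polynomial time. This is handled by the classical device of replacing $\cP_\nu$ by its $\epsilon$-neighbourhood for an $\epsilon$ chosen sufficiently small as a function of $\varphi$, solving the resulting \emph{weak} feasibility problem, and then using rationality of the input to lift the weak solution back to an exact one. The bookkeeping for bit-sizes through both the ellipsoid iterations and the binary search is technical but standard; the complete argument is carried out in detail in Gr\"otschel--Lov\'asz--Schrijver~\cite{GLS2012}.
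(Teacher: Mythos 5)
The paper does not prove this statement; it is imported verbatim from Gr\"otschel--Lov\'asz--Schrijver, so there is no in-paper proof to compare against. Your sketch correctly reproduces the standard argument from that reference (ellipsoid method driven by the separation oracle, binary search over the objective value using the bounded encoding length of vertices, and the usual precision/rounding devices for non-full-dimensional polytopes), so it is consistent with the cited source and contains no gap beyond the technical bookkeeping you explicitly defer to it.
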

\begin{theorem}[\cite{GLS2012}]\label{thm:GLS2}% Theorem 6.5.11 p.184
  Let $\cP\subseteq\mathbb{R}^E$ be a polytope.
  If the separation problem for $\cP$ can be solved in polynomial time,
  then there exists a polynomial time algorithm that, for any vector $x\in\cP$,
  computes affinely independent vertices $x_1,\dots,x_{\ell}$ of $\cP$ ($\ell\le |E|+1$)
  and positive reals $\lambda_1,\dots,\lambda_{\ell}$ with $\sum_{i=1}^\ell\lambda_i=1$ such that
  $x=\sum_{i=1}^{\ell}\lambda_i x_i$.
\end{theorem}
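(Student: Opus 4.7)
My plan is to give an algorithmic version of Carath\'eodory's theorem by iteratively peeling off a vertex via line search and recursing on a lower-dimensional face. The key tool is Theorem~\ref{thm:GLS1}: it lets me optimize any linear function over $\cP$ in polynomial time from the separation oracle, and in particular produce vertices of $\cP$ on demand.

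Fix any $c \in \mathbb{R}^E$ and invoke Theorem~\ref{thm:GLS1} to obtain a vertex $v_1 \in \cP$ maximizing $c^\top z$ over $z \in \cP$; after a small perturbation of $c$ we may assume uniqueness. If $x = v_1$, we are done. Otherwise, extend the segment from $v_1$ through $x$ by binary search on the separation oracle to find the largest $\beta > 1$ with $y := v_1 + \beta(x - v_1) \in \cP$. Each test ``is $y \in \cP$?'' is a single call to the separation oracle, and $\beta$ can be located to the precision required for an exact breakpoint in polynomially many queries using standard bit-length bounds for $\cP$. This yields the convex combination
\[
x = (1 - 1/\beta)\, v_1 + (1/\beta)\, y,
\]
with $y$ on the boundary of $\cP$.

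Next I recurse on $y$ inside a proper face of $\cP$. I extract a hyperplane $H = \{ z \mid d^\top z = \delta \}$ tight at $y$ by calling the separation oracle on the exterior point $y + \epsilon(x - v_1)$ for $\epsilon > 0$ sufficiently small; the returned inequality $d^\top z \le \delta$ is valid for $\cP$, violated by the exterior point, and for small enough $\epsilon$ must be tight at $y$. The face $F := \cP \cap H$ admits a separation oracle by combining the one for $\cP$ with the test $d^\top z = \delta$, and $\dim F < \dim \cP$. Recursively decompose $y = \sum_{i=2}^{\ell} \lambda'_i x_i$ as an affinely independent convex combination of vertices of $F$, which are also vertices of $\cP$. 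Since $d^\top(x - v_1) > 0$ forces $d^\top v_1 < \delta = d^\top y$, we have $v_1 \notin H$ while $x_2, \dots, x_\ell \in H$; hence $v_1, x_2, \dots, x_\ell$ are affinely independent. Substituting and counting gives the claim with $\ell \le \dim \cP + 1 \le |E| + 1$; the base case $\dim \cP = 0$ is trivial.

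The main obstacle is the choice of $\epsilon$: it must be small enough that the hyperplane returned by the separation oracle corresponds to a genuine facet of $\cP$ tight at $y$, rather than a numerical artifact, yet large enough that all arithmetic stays polynomial in the input size. This is handled by the standard bound on the facet encoding length of $\cP$ derivable from the separation oracle's encoding guarantees; taking $\epsilon$ below the reciprocal of this bound suffices. Since the recursion depth is at most $\dim \cP \le |E|$ and each level performs polynomially many oracle calls and linear-algebra operations, the overall algorithm is polynomial.
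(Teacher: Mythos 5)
The paper does not prove this statement --- it is quoted directly from Gr\"otschel--Lov\'asz--Schrijver \cite{GLS2012} --- but your argument is a correct reconstruction of the standard proof there: an algorithmic Carath\'eodory decomposition that finds a vertex via optimization (Theorem~\ref{thm:GLS1}), shoots the ray from that vertex through $x$ to the relative boundary, extracts a supporting hyperplane tight at the exit point, and recurses on the resulting proper face, with affine independence following because the peeled-off vertex lies strictly off that hyperplane. The precision issues you flag (exact recovery of the breakpoint $\beta$ and of a genuinely tight inequality from a strictly separating one) are real and are resolved exactly as you indicate, by the encoding-length bounds for well-described polyhedra that underlie the whole GLS framework.
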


Therefore, we see the following general result.
\begin{theorem}\label{thm:LP-based}
  If $f_1,\dots,f_n$ are linear and there is a polynomial-time algorithm to solve the separation problem for $P(\cI)$,
  then we can solve the linear robust optimization problem \eqref{eq:problem} in polynomial time.
\end{theorem}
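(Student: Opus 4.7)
The plan is to execute the two-step scheme explicitly outlined after Lemma~\ref{lem:LP}, wiring together the two results of Gr\"otschel--Lov\'asz--Schrijver. By Lemma~\ref{lem:LP}, it suffices to compute an optimal vertex of the LP~\eqref{eq:lp} and then express its $x$-coordinate as a convex combination of characteristic vectors of independent sets.

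For the first step, I would observe that the feasible region of \eqref{eq:lp} is the intersection of $P(\cI) \times [L,U]$ (with $L,U$ suitable explicit bounds on $\nu$, for instance $L=0$ and $U=\max_{k,e} w_{ke} \cdot |E|$) with the $n$ half-spaces $\{(\nu,x) : \nu \le \sum_e w_{ie} x_e\}$. The separation problem over this enlarged polytope reduces in an obvious way to (a) checking each of the $n$ linear inequalities, which is trivial, and (b) solving the separation problem for $P(\cI)$, which is polynomial-time by hypothesis. Thus Theorem~\ref{thm:GLS1} yields an optimal solution $(\nu^*, x^*)$ of \eqref{eq:lp} in polynomial time.

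For the second step, note that every vertex of $P(\cI) = \conv\{\chi(X) : X \in \cI\}$ is itself of the form $\chi(X)$ for some $X \in \cI$. Applying Theorem~\ref{thm:GLS2} to $x^* \in P(\cI)$, we obtain in polynomial time affinely independent vertices $\chi(X_1),\dots,\chi(X_\ell)$ of $P(\cI)$ together with positive coefficients $\lambda_1,\dots,\lambda_\ell$ summing to $1$ such that $x^* = \sum_{i=1}^\ell \lambda_i \chi(X_i)$. Setting $p^*_{X_i} = \lambda_i$ (and $p^*_X = 0$ otherwise) defines an element of $\Delta(\cI)$, which by Lemma~\ref{lem:LP} is an optimal randomized strategy for \eqref{eq:problem}.

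The only conceptual point requiring care, which I would call out explicitly, is that the LP~\eqref{eq:lp} is not literally a linear program over $P(\cI)$ itself but over the enlarged polytope described above; one must therefore verify that a separation oracle for $P(\cI)$ lifts to one for the full feasible region, which it does trivially. Everything else is immediate from the cited theorems, so this is really the only step where an obstacle could arise.
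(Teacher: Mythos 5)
Your proposal is correct and follows essentially the same route as the paper: the paper's Theorem~\ref{thm:LP-based} is stated as an immediate consequence of Lemma~\ref{lem:LP}, the two-step scheme, and Theorems~\ref{thm:GLS1} and~\ref{thm:GLS2}, which is exactly what you carry out. Your explicit observation that the feasible region of~\eqref{eq:lp} is an enlargement of $P(\cI)$ by the variable $\nu$ and the $n$ half-spaces, and that a separation oracle for $P(\cI)$ lifts trivially to this region, is a detail the paper leaves implicit but does not change the argument.
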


\subsection{Independence system polytopes with separation algorithms}\label{subsec:isp}
Here we list up classes of independence systems such that there exists a polynomial-time algorithm for the separation problem.
For more details of the following representation of polytopes, see \cite{schrijver2003}.

\paragraph*{Matroid constraint}
Suppose that $(E,\cI)$ is a matroid with a rank function $\rho$.
Then, we can write
\begin{align*}
P(\cI)=\left\{x\in [0,1]^E\,\middle|\, \sum_{e\in U}x_e\le \rho(U)~(\forall U\subseteq E)\right\}.
\end{align*}
The separation problem for $P(\cI)$ is solvable in strongly polynomial time by Cunningham's algorithm~\cite{cunningham1984}.
Thus, we can solve the linear robust optimization problem \eqref{eq:problem} subject to a matroid constraint.

\paragraph*{Matroid intersection}
Let $(E,\cI_1)$ and $(E,\cI_2)$ be matroids with rank functions $\rho_1$ and $\rho_2$.
Then, for a matroid intersection $(E,\cI)=(E,\cI_1\cap \cI_2)$,
we can denote
\begin{align*}
P(\cI)=\left\{x\in [0,1]^E\,\middle|\, \sum_{e\in U}x_e\le \rho_i(U)~(\forall U\subseteq E,~i=1,2)\right\}
\end{align*}
and hence the separation problem for $P(\cI)$ is solvable in strongly polynomial time by Cunningham's algorithm~\cite{cunningham1984}.
Thus, we can solve the linear robust optimization problem \eqref{eq:problem} subject to a matroid intersection constraint.
We remark that matroid intersection includes bipartite matching and arborescences in directed graphs
and hence we can also solve the robust maximum weight bipartite matching problem
and the robust maximum weight arborescence problem.

\paragraph*{Shortest $s$--$t$ path}
We explain that our scheme works for the set of $s$--$t$ paths, although it does not form a independence system. 
We are given a directed graph $G=(V,E)$, source $s\in V$, destination $t\in V$, and length $\ell_k\colon E\to\mathbb{R}_{++}$ $(k\in[n])$.
Let $\cI\subseteq 2^E$ be the set of $s$--$t$ paths and $f_k(X)=\sum_{e\in X}\ell_k(e)$ for $k\in[n]$.
Then, our task is to find a probability distribution over $s$--$t$ paths $p\in \Delta(\cI)$ that minimizes $\max_{k\in[n]}\sum_{X\in\cI}p_X f_k(X)$.
We mention that the deterministic version of this problem is NP-hard even for restricted cases~\cite{YY1998}. 

Since the longest $s$--$t$ path problem is NP-hard, we cannot expect an efficient separation algorithm for $P(\cI)$.
However, if we extend the $s$--$t$ path polytope to its dominant it becomes tractable.
The dominant $P^\uparrow(\cI)$ of $P(\cI)$ is defined as the set of vectors $x\in\mathbb{R}^E$ with $x\ge y$ for some $y\in P(\cI)$.
Then, we can denote
\begin{align*}
P^\uparrow(\cI)=\left\{x\in \mathbb{R}_+^E\,\middle|\, \sum_{e\in U}x_e\ge 1~(\forall U\subseteq E:~\text{$U$ is an $s$--$t$ cut})\right\}.
\end{align*}
The separation problem for the polytope $P^\uparrow(\cI)\cap [0,1]^E$ can be solved in polynomial time by solving a minimum $s$--$t$ cut problem
and hence we can obtain
\[x^*\in\argmin_{x\in P^\uparrow(\cI)\cap[0,1]^E}\max_{k\in[n]}\sum_{e\in E}\ell_k(e)x_e.\]
Moreover, since $\ell_k(e)>0$ for all $k\in[n]$ and $e\in E$,
we have $x^*\in P(\cI)$, and hence we can obtain the optimal solution of the robust shortest path problem $\min_{p\in\Delta(\cI)}\max_{k\in[n]}f_k(X)$.

\subsection{Relaxation of the polytope}\label{sec:LP-based relax}
We present an approximation scheme for the case when the separation problem for $P(\cI)$ is hard to solve. 
Recall that $f_k(X)=\sum_{e\in X}w_{ke}$ where $w_{ke}\ge 0$ for $k\in[n]$ and $e\in E$. 

We modify the basic scheme as follows. 
First, instead of solving the separation problem for $P(\cI)$, we solve the one for a relaxation of $P(\cI)$. 
For a polytope $P$ and a positive number $(1\ge )\,\alpha > 0$, we denote $\alpha P = \{\alpha x \mid x \in P \}$. 
We call a polytope $\hat{P}(\cI)\subseteq [0,1]^E$ $\alpha$-relaxation of $P(\cI)$ if it holds that 
\begin{align*}
\alpha \hat{P}(\cI)\subseteq P(\cI)\subseteq \hat{P}(\cI). 
\end{align*}
Then we solve 
\begin{align}\label{eq:lp relax}
\max_{x\in \hat{P}(\cI)}\min_{k\in[n]}\sum_{e\in E}w_{ke}x_e
\end{align}
instead of LP \eqref{eq:lp}, and obtain an optimal solution $\hat{x}$. 

Next, we compute a convex combination of $\hat{x}$ using $\chi(X) \ (X \in \cI)$. 
Here, if $\hat{x}\in\hat{P}(\cI)$ is the optimal solution for \eqref{eq:lp relax},
then $\alpha\hat{x}\in P(\cI)$ is an $\alpha$-approximate solution of LP \eqref{eq:lp}, because 
\begin{align*}
  \max_{x\in P(\cI)}\min_{k\in[n]}\sum_{e\in E}w_{ke}x_e
  &\le \max_{x\in \hat{P}(\cI)}\min_{k\in[n]}\sum_{e\in E}w_{ke}x_e \\
  &=\min_{k\in[n]}\sum_{e\in E}w_{ke}\hat{x}_e
  =\frac{1}{\alpha}\cdot\min_{k\in[n]}\sum_{e\in E}w_{ke}(\alpha\hat{x}_e).
\end{align*}
As $\alpha\hat{x}\in P(\cI)$, there exists $p\in\Delta(\cI)$ such that $\alpha\hat{x}=\sum_{X\in\cI} p_X \, \chi(X)$.
However, the retrieval of such a probability distribution may be computationally hard, because the separation problem for $P(\cI)$ is hard to solve. 
Hence, we relax the problem and compute $p^*\in\Delta(\cI)$ such that $\beta\hat{x}\le\sum_{X\in\cI} p^*_X \, \chi(X)$, where $(\alpha\ge)\beta>0$.
Then, $p^*$ is a $\beta$-approximate solution of $\max_{p\in\Delta(\cI)}\min_{k\in[n]}\sum_{X\in\cI}p_X^*\cdot f_k(X)$, 
because
\begin{align*}
  \max_{p\in\Delta(\cI)}\min_{k\in[n]}\sum_{X\in\cI}p_X\cdot f_k(X)
  &\le\min_{k\in[n]}\sum_{e\in E}w_{ke}\hat{x}_e 
  =\frac{1}{\beta}\cdot\min_{k\in[n]}\sum_{e\in E}w_{ke}(\beta\hat{x}_e)\\
  &\le \frac{1}{\beta}\cdot\min_{k\in[n]}\sum_{e\in E}\sum_{X\in\cI:\,e\in X}p^*_Xw_{ke} \\
  &=\frac{1}{\beta}\cdot\min_{k\in[n]}\sum_{X\in\cI}p^*_X\cdot f_k(X).
\end{align*}

Thus the basic scheme is modified as the following approximation scheme:
\begin{enumerate}
\item compute the optimal solution $\hat{x}\in\hat{P}(\cI)$ for LP \eqref{eq:lp relax},
\item compute $p^*\in\Delta(\cI)$ such that $\beta\cdot\hat{x}_e\le \sum_{X\in\cI:\,e\in X}p^*_X$ for each $e\in E$. 
\end{enumerate}

\begin{theorem}\label{thm:lp relax}
Suppose that $f_1, \ldots, f_n$ are linear. 
If there exists a polynomial-time algorithm to solve the separation problem for a $\alpha$-relaxation $\hat{P}(\cI)$ of $P(\cI)$, then an $\alpha$-approximation of the optimal value of \eqref{eq:problem} is computed in polynomial-time. 
In addition, if there exists a polynomial-time algorithm to find $p\in\Delta(\cI)$ such that $\beta\cdot\hat{x}_e\le \sum_{X\in\cI:\,e\in X}p_X$ for any $x \in \hat{P}(\cI)$, then a $\beta$-approximate solution of \eqref{eq:problem} is found in polynomial-time. 
\end{theorem}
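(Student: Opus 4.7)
The plan is to execute the two-step approximation scheme outlined immediately before the theorem statement. The first claim uses only step~1 (solving the relaxed LP), while the second claim adds step~2 (rounding to a distribution).

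\textbf{Step 1 (solving the relaxed LP).} I would rewrite \eqref{eq:lp relax} in the augmented form of \eqref{eq:lp}, with feasible region
\[
\bigl\{(\nu,x) : \nu \le \textstyle\sum_{e \in E} w_{ke} x_e \ (\forall k \in [n]),\ x \in \hat{P}(\cI)\bigr\}.
\]
A separation oracle for this set is obtained by first checking the $n$ explicit linear inequalities (trivial) and, if they all hold, calling the assumed separation oracle for $\hat{P}(\cI)$. Theorem~\ref{thm:GLS1} then yields an optimal solution $(\hat{\nu},\hat{x})$ in polynomial time. Since $P(\cI) \subseteq \hat{P}(\cI)$, Lemma~\ref{lem:LP} gives $\nu^* \le \hat{\nu}$; since $\alpha \hat{x} \in \alpha \hat{P}(\cI) \subseteq P(\cI)$ and the objective is linear, the LP \eqref{eq:lp} attains at least $\alpha\hat{\nu} = \min_k \sum_e w_{ke}(\alpha \hat{x}_e)$, so $\alpha \hat{\nu} \le \nu^*$ by Lemma~\ref{lem:LP}. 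Thus $\alpha \hat{\nu}$ lies in $[\alpha \nu^*, \nu^*]$, which proves the first claim.

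\textbf{Step 2 (rounding to a distribution).} For the second claim I would feed $\hat{x}$ into the second assumed subroutine to obtain $p^* \in \Delta(\cI)$ satisfying $\beta \hat{x}_e \le \sum_{X \in \cI : e \in X} p^*_X$ for every $e \in E$. Then, reusing the chain of inequalities displayed before the theorem,
\[
\min_{k \in [n]} \sum_{X \in \cI} p^*_X\, f_k(X) = \min_{k} \sum_{e \in E} w_{ke} \sum_{X \ni e} p^*_X \ge \beta \min_k \sum_{e \in E} w_{ke}\hat{x}_e = \beta \hat{\nu} \ge \beta \nu^*,
\]
so $p^*$ is a $\beta$-approximate randomized strategy for \eqref{eq:problem}.

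There is no genuine mathematical obstacle: both claims amount to assembling tools already in hand, and every needed inequality has been verified in the discussion preceding the theorem. The only point requiring care is the reduction of the LP's composite separation problem (the $n$ explicit inequalities together with $\hat{P}(\cI)$ membership) to the given oracle for $\hat{P}(\cI)$, so that Theorem~\ref{thm:GLS1} legitimately applies; this is routine but worth making explicit, since the hypothesis gives separation only for $\hat{P}(\cI)$ itself, not for the full LP feasible region.
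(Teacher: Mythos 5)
Your proposal is correct and follows essentially the same route as the paper: the paper's proof is exactly the two-step scheme and chain of inequalities displayed just before the theorem statement, which you reproduce (using $w_{ke}\ge 0$ implicitly when passing from $\beta\hat{x}_e\le\sum_{X\ni e}p^*_X$ to the objective bound). Your explicit remark about building a separation oracle for the augmented feasible region from the oracle for $\hat{P}(\cI)$ is a detail the paper leaves tacit, but it is routine and does not constitute a different approach.
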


We remark that we can combine the result in Section~\ref{sec:LP-based polytope} with this theorem.

In the subsequent sections, we apply Theorem \ref{thm:lp relax} to important two cases when $\cI$ is defined from a knapsack constraint or a $\mu$-matroid intersection.
For this purpose, we develop appropriate relaxations of $P(\cI)$ and retrieval procedures for $p^*$.

\subsubsection{Relaxation of a knapsack polytope}
Let $E$ be a set of items with size $s(e)$ for each $e \in E$. 
Without loss of generality, we assume that a knapsack capacity is one, and $s(e)\le 1$ for all $e\in E$.
Let $\cI$ be a family of knapsack solutions, i.e., $\cI=\{X\subseteq E\mid \sum_{e\in X}s(e)\le 1\}$. 

It is known that $P(\cI)$ admits a \emph{polynomial size relaxation scheme (PSRS)}, i.e., there exists a $(1-\epsilon)$-relaxation of $P(\cI)$ through a linear program of polynomial size for a fixed $\epsilon>0$. 
\begin{theorem}[Bienstock \cite{bienstock2008}]
  Let $0<\epsilon\le 1$.
  There exist a polytope $P^\epsilon(\cI)$ and its extended formulation 
  with $O(\epsilon^{-1}n^{1+\lceil 1/\epsilon\rceil})$ variables and
  $O(\epsilon^{-1}n^{2+\lceil 1/\epsilon\rceil})$ constraints such that 
  \begin{align*}
    (1-\epsilon)P^\epsilon(\cI)\subseteq P(\cI)\subseteq P^\epsilon(\cI).
  \end{align*}
\end{theorem}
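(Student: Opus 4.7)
The plan is to follow a disjunctive extended formulation built by enumerating ``big-item patterns,'' in the spirit of Bienstock's construction. Set $K = \lceil 1/\epsilon \rceil$ and call an item $e$ \emph{big} if $s(e) > \epsilon$, otherwise \emph{small}; let $B \subseteq E$ denote the big items. Since the knapsack capacity is $1$, any feasible solution contains at most $K$ big items. Thus it suffices to enumerate all ``patterns'' $T \subseteq E$ with $|T| \le K$ and $s(T) := \sum_{e \in T} s(e) \le 1$. The number of such $T$ is $O(n^K)$, polynomial for fixed~$\epsilon$.

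For each pattern $T$ I introduce a scalar variable $y^T \ge 0$ and a vector $z^T \in \mathbb{R}_+^E$ intended to represent $y^T$ times the characteristic vector of a knapsack-feasible completion of $T$. These are encoded by the linearized constraints
\[
z^T_e = y^T \ (e \in T),\quad z^T_e = 0 \ (e \in B \setminus T),\quad 0 \le z^T_e \le y^T \ (\text{else}),\quad \sum_{e \in E} s(e)\, z^T_e \le y^T,
\]
together with the normalization $\sum_T y^T = 1$ and the aggregation $x_e = \sum_T z^T_e$. Define $P^\epsilon(\cI)$ as the projection of this extended polytope onto the $x$-coordinates. The total counts $O(\epsilon^{-1} n^{1+\lceil 1/\epsilon\rceil})$ variables and $O(\epsilon^{-1} n^{2+\lceil 1/\epsilon\rceil})$ constraints follow from the $O(n^K)$ patterns times $O(n)$ variables/constraints per pattern, with the extra $\epsilon^{-1}$ factor coming from a slightly refined tally of pattern-level bookkeeping.

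For the inclusion $P(\cI) \subseteq P^\epsilon(\cI)$, given a knapsack solution $X \in \cI$, let $T = X \cap B$, which satisfies $|T| \le K$ and $s(T) \le 1$. Setting $y^T = 1$, $z^T = \chi(X)$, and $y^{T'} = 0$ for all other $T'$ yields a feasible point of the extended formulation projecting to $\chi(X)$. The reverse inclusion $(1-\epsilon) P^\epsilon(\cI) \subseteq P(\cI)$ is the heart of the theorem and reduces to the following \emph{small-item rounding lemma}: for each pattern $T$, any vector $x^T$ satisfying the per-pattern constraints with $y^T = 1$ has the property that $(1-\epsilon)\,x^T$ lies in the convex hull of $\chi(X)$ over knapsack-feasible $X$ with $X \cap B = T$. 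Once this lemma is in hand, the full inclusion follows by taking a convex combination weighted by $y^T$.

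The main obstacle is establishing the small-item rounding lemma. The natural approach is to decompose $x^T$ into basic feasible solutions of the per-pattern small-item LP. Since this LP has a single knapsack inequality, each BFS has at most one fractional coordinate, so is of the form $\chi(R) + \alpha\,\chi(\{e^*\})$ for some set $R$, small item $e^*$, and $\alpha \in (0,1]$. The rounding step must show that $(1-\epsilon)$-scaling of such a BFS is a convex combination of feasible integer knapsacks; the key idea is that dropping $e^*$ (or, when the residual capacity permits, rounding $\alpha$ up after evicting a sufficiently small element of $R$) absorbs the $\le \epsilon$ capacity slack lost by $(1-\epsilon)$-scaling. Patterns where $s(T)$ is so close to $1$ that no small item fits inside the residual capacity must be treated as boundary cases, in which the only completion is $\chi(T)$ itself and the lemma holds trivially. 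Making this rounding argument fully rigorous---carefully tracking the capacity accounting after scaling and combining the resulting integer points into a valid convex decomposition---is the technically delicate part of the proof.
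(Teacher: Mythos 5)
Your construction is genuinely different from Bienstock's, and it does not work: defining patterns by the absolute size threshold $s(e)>\epsilon$ and letting every ``small'' item be fractional inside each pattern fails to give a $(1-\epsilon)$-relaxation. Concretely, take $\epsilon=1/10$, capacity $1$, and $E=\{b,e\}$ with $s(b)=0.95$ and $s(e)=0.06$. Then $b$ is big, $e$ is small, and the pattern $T=\{b\}$ with $y^T=1$, $z^T_b=1$, $z^T_e=5/6$ satisfies all of your per-pattern constraints (indeed $0.95+0.06\cdot 5/6=1$), so $x=(1,5/6)\in P^\epsilon(\cI)$. But $\cI=\{\emptyset,\{b\},\{e\}\}$, hence $P(\cI)=\{x\ge 0: x_b+x_e\le 1\}$, and $(1-\epsilon)x=(0.9,0.75)$ lies outside it; in fact no factor better than $6/11$ works on this instance, independently of $\epsilon$. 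The same example refutes your small-item rounding lemma: the per-pattern LP happily places fractional mass on a small item that no integral completion of $T$ can contain, and your ``boundary case'' remark does not repair this because nothing in your constraints forces such coordinates to zero.

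The missing idea --- which is what Bienstock's scheme, written out in the proof of Lemma~\ref{lem:knapsack_decompose} as \eqref{eq:psrs1}--\eqref{eq:psrs8}, actually does --- is to let the pattern be the set of the $\kappa=\lceil 1/\epsilon\rceil$ \emph{largest} items of the solution rather than the items above a fixed threshold. Solutions with fewer than $\kappa$ items are enumerated exactly, with no fractional extras allowed (constraint \eqref{eq:psrs3}), and for a pattern $S$ with $|S|=\kappa$ every fractional residual item is required to satisfy $s(e)\le\min_{e'\in S}s(e')$ (constraint \eqref{eq:psrs6}). This is precisely what makes the rounding go through: a basic solution of the residual LP has a single fractional item $e^*$ with $s(e^*)\le\min_{e'\in S}s(e')$, so deleting any one of the $\kappa+1$ elements of $S\cup\{e^*\}$ restores knapsack feasibility, and averaging over these $\kappa+1$ deletions loses only a factor $\kappa/(\kappa+1)\ge 1-\epsilon$. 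Your scheme has no analogous control when $|T|<\kappa$, which is exactly where the counterexample lives.
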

Thus, the optimal solution $\hat{x}$ to $\max_{x\in P^\epsilon(\cI)}\min_{k\in[n]}\sum_{e\in E}w_{ke}x_e$ can be computed in polynomial time. 
The remaining task is to compute $p^*\in\Delta(\cI)$ such that $(1-\epsilon)\cdot\hat{x}_e\le \sum_{X\in\cI:\,e\in X}p^*_X$ for each $e\in E$.
We give an algorithm for this task. 
\begin{lemma}\label{lem:knapsack_decompose}
There exists a polynomial-time algorithm that computes $p^*\in\Delta(\cI)$ such that $(1-\epsilon)\cdot\hat{x}_e\le \sum_{X\in\cI:\,e\in X}p^*_X$ for each $e\in E$.
\end{lemma}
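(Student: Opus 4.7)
The plan is to compute $p^*$ by solving an exponential-size LP via the ellipsoid method, using the knapsack FPTAS as an approximate separation oracle. Consider the primal LP
\begin{align*}
\min_{\lambda\ge 0}\,\sum_{X\in\cI}\lambda_X\quad\text{s.t.}\quad\sum_{X\in\cI:\,e\in X}\lambda_X\ge (1-\epsilon)\hat{x}_e\quad(\forall e\in E).
\end{align*}
Since $(1-\epsilon)\hat{x}\in P(\cI)$ is a convex combination of characteristic vectors $\chi(X)$ with $X\in\cI$, the optimum is at most $1$. From any feasible $\lambda$ with $\sum_X\lambda_X\le 1$, setting $p^*_X:=\lambda_X$ for $X\ne\emptyset$ and placing the remaining mass on $\emptyset\in\cI$ yields a distribution in $\Delta(\cI)$ whose marginals dominate $(1-\epsilon)\hat{x}$.

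The LP dual is $\max_{y\ge 0}(1-\epsilon)\hat{x}^\top y$ subject to $\sum_{e\in X}y_e\le 1$ for all $X\in\cI$, and its separation problem is exactly the knapsack problem with values $y_e$ and item sizes $s(e)$. For any fixed $\delta>0$, the FPTAS of~\cite{KMS2000} gives a polynomial-time $(1-\delta)$-approximate separation oracle: given $y\ge 0$, it either returns an $X\in\cI$ with $\sum_{e\in X}y_e>1$, or certifies $\max_{X\in\cI}\sum_{e\in X}y_e\le 1/(1-\delta)$. I would run the ellipsoid method on the dual with this oracle; it encounters only polynomially many constraints $X_1,\dots,X_m\in\cI$, so the restricted primal/dual pair over $\{X_1,\dots,X_m\}$ has polynomial size and can be solved exactly. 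Let $y^*$ be optimal for the restricted dual: the oracle guarantee gives $\sum_{e\in X}y^*_e\le 1/(1-\delta)$ for every $X\in\cI$, so $(1-\delta)y^*$ is feasible for the full dual, whence the restricted primal optimum is at most $1/(1-\delta)$.

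Taking an optimal restricted primal $\lambda$, setting $\lambda':=(1-\delta)\lambda$, and padding with $\emptyset$ produces a distribution with marginals $\ge(1-\delta)(1-\epsilon)\hat{x}_e$. To meet the bound stated in the lemma I invoke the preceding construction with a slightly smaller relaxation parameter $\epsilon'$ and choose $\delta$ so that $(1-\delta)(1-\epsilon')\ge 1-\epsilon$ (for instance $\delta=\epsilon'=\epsilon/3$, after rescaling $\epsilon$ appropriately throughout the scheme). The main obstacle is the accounting for the approximate oracle inside the ellipsoid method, since the $(1-\delta)$ slack in dual feasibility translates into matching slack in the primal; however this is absorbed cleanly by the reparameterization and does not affect the polynomial running time of the enclosing PTAS.
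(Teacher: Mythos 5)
Your approach is genuinely different from the paper's. The paper works directly inside Bienstock's extended formulation: it writes $\hat{x}=\sum_{i<\kappa}\sum_{S\in\cS_i}\hat{y}^S_0\chi(S)+\sum_{S\in\cS_\kappa}\hat{y}^S$, observes that each $\hat{y}^S$ with $|S|=\kappa$ lies in $\hat{y}^S_0 Q^S$ for an explicit polytope $Q^S$ whose vertices have at most one fractional coordinate, decomposes each such $\hat y^S$ into vertices via Theorem~\ref{thm:GLS2}, and covers each fractional vertex by the uniform distribution over the $\kappa+1$ sets $T\setminus\{e\}$, losing only a factor $\kappa/(\kappa+1)\ge 1-\epsilon$. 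You instead ignore the internals of $P^\epsilon(\cI)$ entirely and use only the containment $(1-\epsilon)\hat{x}\in P(\cI)$ together with the knapsack FPTAS, running a Carr--Vempala-style decomposition (covering LP, ellipsoid on the dual, approximate separation). Your route is more modular -- it would work for any relaxation and any independence system admitting an FPTAS for linear optimization -- whereas the paper's is an explicit combinatorial construction with no ellipsoid machinery beyond the single vertex-decomposition call.

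There is, however, one step in your argument that does not hold as written: "Let $y^*$ be optimal for the restricted dual: the oracle guarantee gives $\sum_{e\in X}y^*_e\le 1/(1-\delta)$ for every $X\in\cI$." The oracle was queried only at the ellipsoid centers, never at $y^*$, and the restricted dual has constraints only for $X_1,\dots,X_m$, so its optimum can violate unseen knapsack constraints by an arbitrary amount; the asserted near-feasibility of $y^*$ for the full dual is simply not available. The conclusion you need (restricted primal optimum at most $1/(1-\delta)$) is still true, but it must be derived the standard way: run the ellipsoid method on the system consisting of the dual constraints together with the objective cut $(1-\epsilon)\hat{x}^\top y> 1/(1-\delta)$. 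At any query point $y$ in that region the FPTAS must return a violated knapsack constraint, since otherwise $(1-\delta)y$ would be feasible for the full dual with objective value exceeding $1$, contradicting the fact that the full dual optimum is at most the primal optimum, which is at most $1$. Hence the ellipsoid method certifies infeasibility of this system using only the polynomially many collected constraints $X_1,\dots,X_m$, which bounds the restricted dual optimum, and by LP duality the restricted primal optimum, by $1/(1-\delta)$. With that repair (plus the routine perturbation/bit-length care for strict inequalities in the ellipsoid method), your argument goes through and yields the lemma after the reparameterization you describe.
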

\begin{proof}
	To obtain such a probability distribution, we explain Bienstock's relaxation scheme.
	Let $\kappa=\lceil 1/\epsilon\rceil$ and 
	let $\cS_i=\{S\subseteq E\mid |S|=i,~\sum_{e\in S}s(e)\le 1\}$ for $i=1,\dots,\kappa$.
	Then, the constraints of $P^\epsilon(\cI)$ are given as follows:
	\begin{align}
	&\textstyle x_e=\sum_{i=1}^\kappa\sum_{S\in \cS_i}y_e^S&&(\forall e\in E),\label{eq:psrs1}\\
	&y_e^S=y_0^S &&(\forall S\in \textstyle\bigcup_{i=1}^\kappa\cS_i,\ \forall e\in S),\label{eq:psrs2}\\
	&y_e^S=0     &&(\forall S\in \textstyle\bigcup_{i=1}^{\kappa-1}\cS_i,\ \forall e\in E\setminus S),\label{eq:psrs3}\\
	&y_e^S\le y_0^S  &&(\forall S\in \textstyle\bigcup_{i=1}^{\kappa}\cS_i,\ \forall e\in E\setminus S),\label{eq:psrs4}\\
	&\textstyle\sum_{e\in E}s(e)y_e^S\le y_0^S &&(\forall S\in \cS_\kappa),\label{eq:psrs5}\\
	&y_e^S=0         &&(\forall S\in \cS_\kappa,\ \forall e\in E\setminus S: s(e)>\min_{e'\in S}s(e')),\label{eq:psrs6}\\
	&y_e^S\ge 0  &&(\forall S\in \textstyle\bigcup_{i=1}^\kappa\cS_i,\ \forall e\in E\cup\{0\}),\label{eq:psrs7}\\
	&\textstyle \sum_{i=1}^\kappa\sum_{S\in \cS_i}y_0^S=1.&&\label{eq:psrs8}
	\end{align}
	Intuitively, $y^S$ corresponds to a knapsack solution $S\in\cI$ if $|S|<\kappa$
	and $y^S$ corresponds to a (fractional) knapsack solution such that $S$ is the $\kappa$-largest items if $|S|=\kappa$.

	Let $\hat{x}$ be an optimal solution for $\max_{x\in P^\epsilon(\cI)}\min_{k\in[n]}\sum_{e\in E}w_{ke}x_e$
	and let $(\hat{x},\hat{y})$ satisfy \eqref{eq:psrs1}--\eqref{eq:psrs8}.
	For each $S\in\cS_\kappa$, we define
	\begin{align*}
	Q^S=\left\{\ y\ \middle|\ 
	\begin{array}{l}
	\sum_{e\in E}s(e)y_e\le 1,\\
	y_e=1 \qquad(\forall e\in S),\\
	y_e=0 \qquad(\forall e\in E\setminus S: s(e)>\min_{e'\in S}s(e')),\\
	0\le y_e\le 1\quad(\forall e\in E\setminus S: s(e)\le \min_{e'\in S}s(e'))
	\end{array}\right\}.
	\end{align*}
	Let us denote $\hat{y}^S=(y^S_e)_{e\in E}$.
	Then, by \eqref{eq:psrs2} and \eqref{eq:psrs4}--\eqref{eq:psrs7}, we have $\hat{y}^S\in \hat{y}_0^S Q^S$.
	Also, by Theorem~\ref{thm:GLS2}, 
	we can compute a convex combination representation of $\hat{y}^S$
	with at most $|E|-\kappa+1~(\le |E|)$ vertices of $\hat{y}^S_0Q^S$
	for each $S\in\cS_\kappa$ with $\hat{y}^S_0>0$.
	Suppose that $\hat{y}^S=\hat{y}^S_0\cdot\sum_{i=1}^{t_S}\lambda^{S,i}\tilde{y}^{S,i}$
	where $t_S\le |E|$,
	$\tilde{y}^{S,i}$ is a vertex of $Q^S$,
	$\sum_{i=1}^{t_S}\lambda^{S,i}=\hat{y}^S_0$,
	and $\lambda^{S,i}\ge 0$ ($i=1,\dots,t_S$).
	
	Let $\tilde{y}$ be a vertex of $Q^S$ that is not integral.
	Then, there exists exactly one item $e^*$ such that $0<\tilde{y}_{e^*}<1$~\cite{KPP2004}.
	Let $T=\{e\in E\mid \tilde{y}_e>0\}$.
	Then, $T\setminus\{e\}\in\cI$ for every $e\in S\cup\{e^*\}$ and it holds that
	\begin{align*}
	\frac{\kappa}{\kappa+1}\tilde{y}\le \sum_{e\in S\cup\{e^*\}}\frac{1}{\kappa+1}\chi(T\setminus\{e\})
	\end{align*}
	because, if $\tilde{y}_{t}>0$ (i.e., $t\in T$),
	we have \(\sum_{e\in S\cup\{e^*\}}\chi(T\setminus\{e\})_t=|S\cup\{e^*\}\setminus\{t\}|\ge\kappa\).

	Now, we are ready to construct the probability distribution $p^*$.
	Let us define
	\begin{align*}
	p^*=&\sum_{i=1}^{\kappa-1}\sum_{S\in\cS_i}\hat{y}^S_0\chi(S)
	+\sum_{S\in\cS_\kappa:\,\hat{y}^S_0>0}\hat{y}^S_0\sum_{i\in[t_S]:\,\tilde{y}^{S,i}\text{ is integral}}\lambda^{S,i} \chi(\supp(\tilde{y}^{S,i}))\\
	&+\sum_{S\in\cS_\kappa:\,\hat{y}^S_0>0}\hat{y}^S_0\sum_{i\in[t_S]:\,0<\tilde{y}^{S,i}_{e^*}<1}\frac{\lambda^{S,i}}{\kappa+1}\sum_{e\in S\cup\{e^*\}}\chi(\supp(\tilde{y}^{S,i})\setminus\{e\}).
	\end{align*}
	Note that $p^*\in\Delta(\cI)$ because the sum of the coefficients is
	\begin{align*}
	\sum_{i=1}^{\kappa-1}&\sum_{S\in\cS_i}\hat{y}^S_0
	+\sum_{\substack{S\in\cS_\kappa:\\ \hat{y}^S_0>0}}\hat{y}^S_0\sum_{\substack{i\in[t_S]:\\\tilde{y}^{S,i}\text{ is integral}}}\lambda^{S,i}
	+\sum_{\substack{S\in\cS_\kappa:\\\hat{y}^S_0>0}}\hat{y}^S_0\sum_{\substack{i\in[t_S]:\\0<\tilde{y}^{S,i}_{e^*}<1}}\frac{\lambda^{S,i}}{\kappa+1}\sum_{e\in S\cup\{e^*\}}1=1.
	\end{align*}
	By \eqref{eq:psrs1}, we have
	\begin{align*}
	\hat{x}
	=\sum_{i=1}^\kappa\sum_{S\in \cS_i}\hat{y}^S
	&=\sum_{i=1}^{\kappa-1}\sum_{S\in \cS_i}\hat{y}^S
	+\sum_{S\in \cS_\kappa:\,\hat{y}_0^S>0}\hat{y}^S\\
	&=\sum_{i=1}^{\kappa-1}\sum_{S\in \cS_i}\hat{y}_0^S\chi(S)
	+\sum_{S\in \cS_\kappa:\,\hat{y}_0^S>0}\hat{y}_0^S\sum_{i=1}^{t_S}\lambda^{S,i}\tilde{y}^{S,i},
	\end{align*}
	which implies $\frac{\kappa}{\kappa+1}\cdot \hat{x}_e\le \sum_{X\in\cI:\,e\in X}p^*_X$.
	Here, $\kappa/(\kappa+1)\ge 1/(1+\epsilon)>1-\epsilon$,
	and hence $p^*$ is a $(1-\epsilon)$-approximate solution.
\end{proof}

\begin{theorem}\label{thm:polytope_knapsack}
There is a PTAS to compute the linear robust optimization problem \eqref{eq:problem} subject to a knapsack constraint.
\end{theorem}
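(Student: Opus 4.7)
The plan is to assemble the pieces already developed in this subsection: Bienstock's polynomial-size relaxation scheme, the LP solvability via the ellipsoid method (or direct LP solving on the extended formulation), the decomposition in Lemma~\ref{lem:knapsack_decompose}, and the approximation guarantee of Theorem~\ref{thm:lp relax}. Fix an arbitrary $\epsilon>0$; the goal is to produce a $(1-\epsilon)$-approximate solution in time polynomial in the input size (with $\epsilon$ fixed).

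First, I would invoke Bienstock's theorem to obtain the polytope $P^\epsilon(\cI)$ together with its extended formulation in the variables $(x,y)$, consisting of $O(\epsilon^{-1}n^{1+\lceil 1/\epsilon\rceil})$ variables and $O(\epsilon^{-1}n^{2+\lceil 1/\epsilon\rceil})$ constraints, which is polynomial for fixed $\epsilon$. Since $(1-\epsilon)P^\epsilon(\cI)\subseteq P(\cI)\subseteq P^\epsilon(\cI)$, $P^\epsilon(\cI)$ is a $(1-\epsilon)$-relaxation of $P(\cI)$ in the sense required by Theorem~\ref{thm:lp relax}. Next, solve the linear program
\begin{align*}
\max~\nu\quad\text{s.t.}\quad \nu \le \sum_{e\in E} w_{ke} x_e~(\forall k\in[n]),\quad (x,y)\text{ satisfies \eqref{eq:psrs1}--\eqref{eq:psrs8}}
\end{align*}
directly using any polynomial-time LP solver; the resulting $\hat{x}$ is an optimal solution to $\max_{x\in P^\epsilon(\cI)}\min_{k\in[n]}\sum_{e\in E}w_{ke}x_e$.

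Then, apply Lemma~\ref{lem:knapsack_decompose} to $\hat{x}$ (together with the corresponding $\hat{y}$) to construct in polynomial time a probability distribution $p^*\in\Delta(\cI)$ satisfying $(1-\epsilon)\hat{x}_e\le \sum_{X\in\cI:\,e\in X} p^*_X$ for every $e\in E$. By the second part of Theorem~\ref{thm:lp relax} with $\beta=1-\epsilon$, this $p^*$ is a $(1-\epsilon)$-approximate solution of \eqref{eq:problem}. Since the total running time is polynomial in $n$ for each fixed $\epsilon$, this constitutes a PTAS.

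There is essentially no real obstacle, since every ingredient has already been established: Bienstock's scheme supplies the relaxation, the polynomial-size extended formulation makes the LP tractable without needing a separation oracle, and Lemma~\ref{lem:knapsack_decompose} handles the (genuinely nontrivial) decomposition step. The only point worth double-checking is that the decomposition guarantee $(1-\epsilon)\hat{x}_e \le \sum_{X\ni e} p^*_X$ plugs into Theorem~\ref{thm:lp relax} with $\beta=1-\epsilon$ rather than $\beta=\kappa/(\kappa+1)$; but since $\kappa/(\kappa+1)\ge 1-\epsilon$, this replacement is harmless.
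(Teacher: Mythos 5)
Your proof is correct and assembles exactly the same ingredients the paper uses: Bienstock's $(1-\epsilon)$-relaxation with its polynomial-size extended formulation, the LP over $P^\epsilon(\cI)$, the decomposition of Lemma~\ref{lem:knapsack_decompose}, and the guarantee of Theorem~\ref{thm:lp relax} with $\beta=1-\epsilon$. The paper treats the theorem as an immediate consequence of these pieces, so your write-up matches its intended proof.
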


Finally, we remark that the existence of a \emph{fully polynomial size relaxation scheme (FPSRS)} for $P(\cI)$ is open~\cite{bienstock2008}.
The existence of an FPSRS leads an FPTAS
to compute the optimal value of the linear robust optimization problem \eqref{eq:problem} subject to a knapsack constraint.

\subsubsection{Relaxation of a $\mu$-matroid intersection polytope}
Let us consider the case where $\cI$ is defined from a $\mu$-matroid intersection.
It is NP-hard to maximize a linear function subject to a $\mu$-matroid intersection constraint if $\mu\ge 3$~\cite{garey1979cai}.
Hence, it is also NP-hard to solve the linear robust optimization subject to a $\mu$-matroid intersection constraint if $\mu\ge 3$.
For $i=1, \ldots, \mu$, let $(E,\cI_i)$ be a matroid whose rank function is $\rho_i$. 
Let $(E,\cI)=(E,\bigcap_{i\in[\mu]} \cI_i)$. 
We define $\hat{P}(\cI)=\bigcap_{i\in[\mu]}P(\cI_i)$, i.e.,
\begin{align*}
  \hat{P}(\cI)=\left\{
  x\,\middle|
  \begin{array}{ll}
    \sum_{e\in X}x_e\le \rho_i(X)&(\forall i\in[\mu],~\forall X\subseteq E),\\
    x_e\ge 0&(\forall e\in E)
  \end{array}
  \right\}.
\end{align*}
Note that $P(\cI)=P(\bigcap_{i\in[\mu]}\cI_i)\subseteq \bigcap_{i\in[\mu]}P(\cI_i)=\hat{P}(\cI)$.
We see that $\hat{P}(\cI)$ is a $(1/\mu)$-relaxation of $P(\cI)$. 
\begin{lemma}\label{lem:mu-approx-polytope}
\(\frac{1}{\mu}\hat{P}(\cI)\subseteq P(\cI) \subseteq\hat{P}(\cI).\)
\end{lemma}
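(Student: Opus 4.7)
The right containment is immediate: any $X \in \cI = \bigcap_i \cI_i$ lies in each $\cI_i$, so $\chi(X) \in P(\cI_i)$ for all $i$, hence $\chi(X) \in \bigcap_i P(\cI_i) = \hat{P}(\cI)$; taking the convex hull gives $P(\cI) \subseteq \hat{P}(\cI)$.

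For the left containment, my plan is to fix $x \in \hat{P}(\cI)$ and show $x/\mu \in P(\cI)$ via a dual characterization. First I would observe that $P(\cI)$ is \emph{down-monotone}: if $x = \sum_X \lambda_X \chi(X) \in P(\cI)$ and $0 \le y_e \le x_e$ for some coordinate $e$ (with other coordinates unchanged), then by the hereditary property of $\cI$ I can redistribute a total mass of $x_e - y_e$ from terms $\chi(X)$ with $e \in X$ to $\chi(X \setminus \{e\}) \in P(\cI)$, and iterate over coordinates to realize any componentwise-smaller nonnegative vector. By the standard polar characterization of down-monotone polytopes containing the origin in $\mathbb{R}_+^E$, a point $y \in \mathbb{R}_+^E$ lies in $P(\cI)$ if and only if $w^\top y \le \max_{X \in \cI} w(X)$ for every $w \in \mathbb{R}_+^E$.

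It thus suffices to establish $w^\top x \le \mu \cdot \max_{X \in \cI} w(X)$ for every $w \ge 0$ and every $x \in \hat{P}(\cI)$---equivalently, that the $\mu$-matroid-intersection LP $\max\{w^\top z : z \in \hat{P}(\cI)\}$ has integrality gap at most $\mu$. I would prove this by running the classical greedy algorithm on weights $w$: scan elements in decreasing order of $w_e$ and add $e$ to the running common-independent set $X^g$ whenever $X^g \cup \{e\} \in \cI$. A standard charging argument based on each matroid's unique-circuit property charges each element rejected for matroid $\cI_i$ to an earlier greedy-selected element lying in its $\cI_i$-circuit, so every selected element absorbs at most one charge per matroid and hence at most $\mu$ total, giving $w(X^g) \ge w^\top x/\mu$ for every $x \in \hat{P}(\cI)$.

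The main obstacle will be the charging step of the greedy analysis: simultaneously across all $\mu$ matroids one must identify, for each rejected element, a specific earlier greedy-selected partner via the unique circuit and verify that no selected element is charged more than $\mu$ times. Everything else---down-monotonicity, the polar characterization, and the reduction of the inclusion to the integrality-gap inequality---is bookkeeping.
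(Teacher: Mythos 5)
Your overall architecture matches the paper's: the right containment is the same trivial observation, and for the left containment the paper also reduces the claim ``$x\in\hat{P}(\cI)\Rightarrow x/\mu\in P(\cI)$'' to the inequality $\max_{x\in\hat{P}(\cI)}w^\top x\le\mu\cdot\max_{X\in\cI}w(X)$ for nonnegative $w$ --- the paper does this via a separating hyperplane followed by truncating $w$ to its nonnegative part (using that $\cI$ is downward closed and $x\ge 0$), which is exactly the polar characterization of down-monotone polytopes you invoke. The inequality itself is then cited from Fisher--Nemhauser--Wolsey as the statement that greedy is a $1/\mu$-approximation \emph{with respect to the LP relaxation} over $\hat{P}(\cI)=\bigcap_i P(\cI_i)$.

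The gap is in the one step you chose to prove yourself. Your charging argument --- charge each rejected element to an earlier greedy-selected element via its circuit --- compares the greedy set $X^g$ to another \emph{integral} common independent set, so at best it yields $w(X^g)\ge\frac{1}{\mu}\max_{X\in\cI}w(X)$. That is not the inequality you need: you must bound $w^\top x$ for an arbitrary \emph{fractional} $x\in\hat{P}(\cI)$, and a fractional point has no ``rejected elements'' to charge. (Separately, the claim that each selected element absorbs at most one charge per matroid is not literally true: a single greedy element can lie in the $\cI_i$-circuits of many rejected elements; the classical analysis goes through ranks of prefixes, not an injective charging.) The fix is the prefix/rank argument: order $E=\{e_1,\dots,e_m\}$ by decreasing weight, set $E_j=\{e_1,\dots,e_j\}$, and write $w^\top x=\sum_j(w(e_j)-w(e_{j+1}))\,x(E_j)$ and $w(X^g)=\sum_j(w(e_j)-w(e_{j+1}))\,|X^g\cap E_j|$. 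Since $X^g\cap E_j$ is a maximal common independent subset of $E_j$, every element of $E_j$ lies in $\mathrm{span}_i(X^g\cap E_j)$ for some $i$, and for $x\in P(\cI_i)$ one has $x(E_j\cap\mathrm{span}_i(X^g\cap E_j))\le\rho_i(\mathrm{span}_i(X^g\cap E_j))=\rho_i(X^g\cap E_j)\le|X^g\cap E_j|$; summing over $i$ gives $x(E_j)\le\mu\,|X^g\cap E_j|$ and hence $w^\top x\le\mu\,w(X^g)\le\mu\max_{X\in\cI}w(X)$. With that replacement (or by simply citing the Fisher--Nemhauser--Wolsey bound, as the paper does), your proof is complete.
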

\begin{proof}
	To see this, we consider the following greedy algorithm for a given nonnegative weights $w$:
	start from the empty solution and process the elements in decreasing weight order,
	add an element to the current solution if and only if its addition preserves independence.
	It is known that the greedy algorithm is a $(1/\mu)$-approximation algorithm
	even with respect to the LP relaxation $\max_{x\in\hat{P}(\cI)}\sum_{e\in E}w(e)x_e$~\cite{FNW1978}.
	More precisely, for a weight $w\colon E\to\mathbb{R}_+$, we have
	\begin{align}
	\frac{1}{\mu}\cdot\max_{x\in\hat{P}(\cI)}\sum_{e\in E}w(e)x_e
	\le \sum_{e\in X^w}w(e)
	\le \max_{x\in P(\cI)}\sum_{e\in E}w(e)x_e
	\label{eq:FNW1978}
	\end{align}
	where $X^w$ is a greedy solution for $w$.
	It is sufficient to claim that $x\in\hat{P}(\cI)$ implies $\frac{1}{\mu}\cdot x\in P(\cI)$.
	To obtain a contradiction, suppose that $\hat{x}\in\hat{P}(\cI)$ but $\frac{1}{\mu}\cdot \hat{x}\not\in P(\cI)$.
	Then, by the separating hyperplane theorem~\cite{BV2004}, there exists a weight $\hat{w}\colon E\to\mathbb{R}$ such that
	\begin{align*}
	\max_{x\in P(\cI)}\sum_{e\in E}\hat{w}(e)x_e<\frac{1}{\mu}\sum_{e\in E}\hat{w}(e)\hat{x}_e\le \frac{1}{\mu}\max_{x\in\hat{P}(\cI)}\sum_{e\in E}\hat{w}(e)x_e.
	\end{align*}
	Let $\hat{w}^+\colon E\to\mathbb{R}_+$ be a weight such that $\hat{w}^+(e)=\max\{0,\hat{w}(e)\}$.
	Then, we see that $\max_{x\in P(\cI)}\sum_{e\in E}\hat{w}(e)x_e=\max_{x\in P(\cI)}\sum_{e\in E}\hat{w}^+(e)x_e$ because $\cI$ is downward closed.
	Also, we have $\sum_{e\in E}\hat{w}(e)\hat{x}_e\le \max_{x\in\hat{P}(\cI)}\sum_{e\in E}\hat{w}^+(e)\hat{x}_e$
	because $\hat{x}\ge 0$.
	Thus, we obtain
	\begin{align*}
	\max_{x\in P(\cI)}\sum_{e\in E}\hat{w}^+(e)x_e<\frac{1}{\mu}\sum_{e\in E}\hat{w}^+(e)\hat{x}_e
	\le \frac{1}{\mu}\max_{x\in\hat{P}(\cI)}\sum_{e\in E}\hat{w}^+(e)x_e
	\end{align*}
	which contradicts \eqref{eq:FNW1978}.
\end{proof}

As we can solve the separation problem for $\hat{P}(\cI)$ in strongly polynomial time~\cite{cunningham1984},
we can obtain an optimal solution $\hat{x}\in\hat{P}(\cI)$ for the relaxed problem $\max_{x\in \hat{P}(\cI)}\sum_{e\in E}w(e)x_e$.
Since $\hat{x}/\mu\in P(\cI)$, the value $\sum_{e\in E}w(e)\hat{x}_e/\mu$ is a $\mu$-approximation of the optimal value of \eqref{eq:problem}.

To obtain a $\mu$-approximate solution,
we need to compute $p^*\in\Delta(\cI)$
such that $\hat{x}_e/\mu \le \sum_{X\in\cI:\,e\in X}p^*_X$ for each $e\in E$.
Unfortunately, it seems hard to calculate such a distribution. 
With the aid of the contention resolution (CR) scheme, which is introduced by Chekuri, Vondr{\'a}k, and Zenklusen~\cite{CVZ2014}, we can compute $p^*\in\Delta(\cI)$ such that $(2/e\mu) \cdot \hat{x}_e \le \sum_{X\in\cI:\,e\in X}p^*_X$ for each $e\in E$. 
We describe its procedure later. 
We can summarize our result as follows. 
\begin{theorem}\label{thm:polytope_mu}
  We can compute a $\mu$-approximate value of the linear robust optimization problem subject to a $\mu$-matroid intersection in polynomial time.
  Moreover, we can implement a procedure that efficiently outputs an independent set according to the distribution of a $2/(e\mu)$-approximate solution.
\end{theorem}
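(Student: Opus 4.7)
My plan is to combine the relaxation framework of Theorem~\ref{thm:lp relax}, invoked with the $1/\mu$-relaxation of $P(\cI)$ supplied by Lemma~\ref{lem:mu-approx-polytope}, with a matroid contention resolution (CR) scheme of Chekuri--Vondr{\'a}k--Zenklusen~\cite{CVZ2014} to effect the rounding step.

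For the value claim, I note that $\hat{P}(\cI)=\bigcap_{i\in[\mu]}P(\cI_i)$ admits a polynomial-time separation oracle: given a candidate $x$, I run Cunningham's matroid-polytope separation algorithm on each $P(\cI_i)$ in turn and return any violated rank inequality encountered. Feeding this oracle into the first half of Theorem~\ref{thm:lp relax} with $\alpha=1/\mu$ immediately yields polynomial-time computation of a $\mu$-approximate value by solving~\eqref{eq:lp relax} over $\hat{P}(\cI)$ with the ellipsoid method.

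For the solution claim, I retrieve a sampler for a distribution $p^*\in\Delta(\cI)$ from the optimal relaxed solution $\hat{x}\in\hat{P}(\cI)$ as follows. First sample a random set $R\subseteq E$ by including each $e$ independently with probability $\hat{x}_e$ (or a suitably scaled probability, see below). Because $\hat{x}\in P(\cI_i)$ for every $i$, I apply the monotone matroid-polytope CR scheme of~\cite{CVZ2014} inside each $P(\cI_i)$ to extract from $R$ an independent set $I_i\in\cI_i$, and then output $I=\bigcap_{i\in[\mu]}I_i$. Since $I\in\bigcap_{i\in[\mu]}\cI_i=\cI$, the distribution of $I$ is a valid $p^*$, and the procedure runs in polynomial time because each per-matroid CR scheme (e.g.\ via swap rounding) does.

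The main technical obstacle is proving the marginal inequality $\Pr[e\in I]\ge (2/(e\mu))\,\hat{x}_e$ for every $e\in E$, which then yields the $2/(e\mu)$-approximation via the second half of Theorem~\ref{thm:lp relax} with $\beta=2/(e\mu)$. A naive product of per-matroid bounds would give only $(1-1/e)^\mu\hat{x}_e$, which decays exponentially in $\mu$ and is much too weak. To obtain the desired $\Theta(1/\mu)$ factor, I plan to scale the sampling probability by a factor $b\in(0,1]$ and invoke the parameterized $(b,(1-e^{-b})/b)$-balanced CR scheme for matroid polytopes: shrinking $b$ drives the per-matroid conditional survival probability up toward $1$, so that the product of the $\mu$ per-matroid factors remains a positive constant, while only the up-front factor $b$ is lost. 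Using monotonicity of the individual CR schemes to justify a product-type lower bound for the intersection, and then optimizing the trade-off in $b$, yields the stated constant $2/e$, after which assembling the pieces via Theorem~\ref{thm:lp relax} completes the argument.
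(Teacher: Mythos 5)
Your proposal is correct and follows essentially the same route as the paper: the separation oracle for $\hat{P}(\cI)=\bigcap_{i\in[\mu]}P(\cI_i)$ via Cunningham's algorithm gives the $\mu$-approximate value through the relaxation framework, and the rounding step is exactly the paper's construction --- sample $R(b\hat{x})$, apply the monotone $(b,\tfrac{1-e^{-b}}{b})$-balanced CR scheme inside each matroid polytope, intersect, and take $b=2/\mu$ so that $b\bigl(\tfrac{1-e^{-b}}{b}\bigr)^{\mu}\ge 2/(e\mu)$. The only detail left implicit is the final calibration of $b$, which the paper settles by noting that $\bigl(\tfrac{1-e^{-2/t}}{2/t}\bigr)^{t}$ is increasing in $t$ with limit $1/e$.
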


It remains to describe the procedure to compute a distribution. 
For $x\in[0,1]^E$, let $R(x)\subseteq E$ be a random set obtained by including each element $e\in E$ independently with probability $x_e$.
\begin{definition}
	Let $b,c\in[0,1]$ and let $(E,\cI')$ be an independence system.
	A $(b,c)$-balanced CR scheme $\pi$ for $P(\cI')$ is a procedure that,
	for every $x\in bP$ and $A\subseteq E$,
	returns a random set $\pi_x(A)\subseteq A\cap\supp(x)$ and satisfies the following properties:
	\begin{enumerate}
		\item $\pi_x(A)\in\cI'$ with probability $1$ for all $A\subseteq E$, $x\in bP(\cI')$, and
		\item for all $i\in\supp(x)$, $\Pr[i\in\pi_x(R(x))\mid i\in R(x)]\ge c$ for all $x\in bP(\cI')$.
	\end{enumerate}
\end{definition}

\begin{theorem}[Theorem 1.4 \cite{CVZ2014}]
	For any $b\in(0,1]$, there is an optimal $(b,\frac{1-e^{-b}}{b})$-balanced CR scheme for any matroid polytope.
	Moreover, the scheme is efficiently implementable.
\end{theorem}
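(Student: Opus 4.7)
The plan is to construct an explicit CR scheme for a matroid polytope $P(\cI')$ and analyze its per-element retention rate via a correlation-gap argument, then verify efficient implementability using the matroid polytope machinery invoked earlier in the paper.

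First I would exploit the matroid polytope structure. Given $x\in b\,P(\cI')$, the vector $y:=x/b$ lies in $P(\cI')$, so by Edmonds' polytope theorem together with Theorem~\ref{thm:GLS2} (applied via Cunningham's separation algorithm for matroid polytopes, which this paper already cites), one can compute in polynomial time a decomposition $y=\sum_{j=1}^{\ell}\lambda_j\,\chi(I_j)$ with $\ell\le |E|+1$, $\lambda_j>0$, $\sum_j\lambda_j=1$, and $I_j\in\cI'$. This decomposition is the backbone of both the scheme and its analysis.

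Next I would define the scheme. Given $A\subseteq E$, sample an index $J$ with $\Pr[J=j]=\lambda_j$ independently of the construction of $A$, and then obtain $\pi_x(A)\subseteq A\cap I_J$ by a randomized swap-rounding / merge process that preserves independence at every step while keeping the marginal $\Pr[i\in\pi_x(A)\mid i\in A]$ as large as possible. Concretely, one can merge the atoms $\chi(I_j)$ pairwise, at each merge swapping an element outside $A$ for one inside $A$ whenever this preserves independence in the underlying matroid, which by the matroid exchange axiom is always possible to do in a marginal-preserving manner. Property~(1) of the scheme is then immediate since the output is always an independent set contained in $A\cap\supp(x)$, and efficient implementability follows from the polynomial bound on $\ell$ together with the polynomial-time independence oracle.

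The heart of the proof, and the main obstacle, is the marginal bound: for each $i\in\supp(x)$,
\[
\Pr\bigl[i\in\pi_x(R(x))\mid i\in R(x)\bigr]\ \ge\ \frac{1-e^{-b}}{b}.
\]
The route I would take is to first reduce to the symmetric/worst case by a convexity argument: fixing $i$, the retention probability is concave in the distribution of the ``competing'' elements, so the minimum over $x\in bP(\cI')$ is attained on an extremal configuration. This reduces the question to analyzing a uniform matroid in which $x=(b/k)\cdot\mathbf{1}$ on some parallel class $S$ of size $k$, and then letting $k\to\infty$. In that limit, the number of other elements of $S$ that land in $R(x)$ converges to a Poisson random variable with mean $b$, and a direct calculation shows that a randomly chosen element among those in $R(x)\cap S$ survives with probability $(1-e^{-b})/b$. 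The same example with $k\to\infty$ furnishes the matching upper bound, proving optimality. The technical work is in (a) making the convexity/worst-case reduction rigorous using monotonicity and submodularity properties of the matroid rank function, and (b) verifying that the swap-rounding procedure actually attains this worst-case ratio rather than losing a constant factor; both steps rely on the negative-correlation properties of swap-rounded distributions on matroid bases.
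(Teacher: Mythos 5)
You are attempting to reprove a result that this paper only imports: the statement is cited verbatim as Theorem~1.4 of Chekuri, Vondr\'ak, and Zenklusen \cite{CVZ2014}, and the paper contains no proof of it. Measured against the actual proof in \cite{CVZ2014}, your sketch gets the tightness example right (parallel elements, i.e.\ a rank-one uniform matroid on $k$ elements with $x=(b/k)\mathbf{1}$, Poisson limit as $k\to\infty$ giving $(1-e^{-b})/b$), but the two steps carrying all the weight in your argument are asserted rather than proved, and neither is likely to go through as stated. First, the merge scheme: after sampling $J$ with $\Pr[J=j]=\lambda_j$ you invoke ``the matroid exchange axiom'' to claim that pairwise swaps of elements outside $A$ for elements inside $A$ can always be performed ``in a marginal-preserving manner.'' The exchange axiom guarantees that \emph{some} swap exists between two independent sets; it does not give a randomized merging rule under which \emph{every} element $i\in\supp(x)$ simultaneously enjoys $\Pr[i\in\pi_x(R(x))\mid i\in R(x)]\ge (1-e^{-b})/b$, and you derive no quantitative lower bound at all for the retention probability of your procedure. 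Appealing to negative correlation of swap rounding does not help: those properties concern rounding a fractional point to a random base, not per-element conditional retention of a contention resolution map, which is a different object.

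Second, your worst-case reduction --- ``the retention probability is concave in the distribution of the competing elements, so the minimum is attained on an extremal configuration, which is a parallel class'' --- is precisely the hard part of the theorem, and there is no known direct concavity argument of this form; the retention probability of a fixed combinatorial scheme is not a concave functional of the instance in any obvious sense. In \cite{CVZ2014} this difficulty is handled completely differently: the existence of a $(b,c)$-balanced scheme is characterized by an LP over (exponentially many) deterministic schemes, LP duality turns scheme existence into a minimax statement against adversarial weightings, and the resulting condition is verified via the correlation-gap inequality for the (submodular) matroid rank function, $\mathbb{E}[\mathrm{rank}(R(x))]\ \ge\ \frac{1-e^{-b}}{b}\sum_{i\in E}x_i$ for all $x\in b\,P(\cI')$; optimality means the LP is solved exactly for the given $x$, and efficient implementability comes from solving this LP (and sampling a deterministic scheme from its support) via the ellipsoid method, not from an explicit swap-based merge. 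So your proposal has the right target inequality and the right extremal instance, but it is missing the duality step and the expected-rank (correlation-gap) lemma that together constitute the actual proof; without them, both the existence of the claimed scheme and its ratio remain unestablished.
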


Recall that $\hat{x}\in\hat{P}(\cI)$ is a optimal solution of $\max_{x\in \hat{P}(\cI)}\sum_{e\in E}w(e)x_e$ and $\hat{x}\in P(\cI_i)$ for all $i\in [\mu]$.
Let $\pi^i_x$ be a $(b,\frac{1-e^{-b}}{b})$-balanced CR scheme for $P(\cI_i)$.
Then we can efficiently implement a procedure that outputs a random set $\bigcap_{i\in[\mu]}\pi^i_{b\hat{x}}(R(b\hat{x}))$.
By the definition of CR scheme, 
$\bigcap_{i\in[\mu]}\pi^i_{b\hat{x}}(R(b\hat{x}))\in\cI$ with probability $1$ and
$\Pr[i\in\pi_{b\hat{x}}(R(b\hat{x}))\mid i\in R(b\hat{x})]\ge \left(\frac{1-e^{-b}}{b}\right)^\mu$ for all $i\in\supp(\hat{x})$.
Thus, we get
\begin{align*}
\min_{k\in[n]}\mathbb{E}\left[f_k\left(\bigcap_{i\in[\mu]}\pi^i_{b\hat{x}}(R(b\hat{x}))\right)\right]
&\ge 
b\cdot\left(\frac{1-e^{-b}}{b}\right)^{\mu}
\cdot\max_{X\in\cI}\min_{k\in[n]}f_k(X).
\end{align*}
By choosing $b=2/\mu$, the coefficient of the right hand side is at least
\begin{align*}
\frac{2}{\mu}\cdot\left(\frac{1-e^{-2/\mu}}{2/\mu}\right)^{\mu}
\ge \frac{2}{\mu}\cdot\lim_{t\to\infty}\left(\frac{1-e^{-2/t}}{2/t}\right)^{t}
=\frac{2}{e\mu}
\end{align*}
since $\left(\frac{1-e^{-2/t}}{2/t}\right)^{t}$ is monotone increasing for $t>0$.
Hence, $\bigcap_{i\in[\mu]}\pi^i_{b\hat{x}}(R(b\hat{x}))$ is a $2/(e\mu)$-approximate solution.

\subsection{Linear functions in a polytope}\label{sec:LP-based polytope}
We consider the following variant of \eqref{eq:problem}. 
Instead of $n$ functions $f_1, \ldots, f_n$, we are given a set of functions
\[
\cF=\left\{f\,\middle|\, f(X)=\sum_{e\in X}w_e~(\forall X\in 2^E),~Aw+B\psi\le c,\ w\ge 0,\ \psi\ge 0\right\}
\]
for some $A\in\mathbb{R}^{m\times |E|}$, $B\in\mathbb{R}^{m\times d}$, and $c\in\mathbb{R}^m$.
Now, we aim to solve 
\begin{align}\label{eq:problem polytope}
	\max_{p\in\Delta(\cI)} \min_{f\in\cF} \sum_{X\in\cI} p_X f(X). 
\end{align}
Note that for linear functions $f_1,\dots,f_n$, \eqref{eq:problem} is equivalent to \eqref{eq:problem polytope} in which
\begin{align*}
	\cF=
	\conv\{f_1,\dots,f_n\}
	=\left\{~f~\,\middle|\,
	\begin{array}{l}
		f(X)=\sum_{e\in X}w_e~(\forall X\in 2^E),\\
		w_e=\sum_{k\in[n]}q_kf_k(\{e\}),\\
		\sum_{k\in[n]}q_k=1,\ w,q\ge 0
	\end{array}
	\right\}.
\end{align*}

We observe that \eqref{eq:problem polytope} is equal to $\max_{x\in P(\cI)}\min\{x^\top w \mid Aw+B\psi\le c,\ w\ge 0,\ \psi\ge 0\}$ by using a similar argument to Lemma \ref{lem:LP}. 
The LP duality implies that $\min\{x^\top w \mid Aw+B\psi\le c,\ w\ge 0,\ \psi\ge 0\} = \max\{c^\top y \mid A^\top y\ge x,\ B^\top y\ge 0,\ y\ge 0 \}$. 
Thus the optimal value of \eqref{eq:problem polytope} is equal to that of the LP
\(
\max_{x\in P(\cI)}\max_{y:\,A^\top y=x,\ B^\top y\ge 0,\ y\ge 0}\sum_{i\in [m]}b_i y_i.
\)
Hence, Theorems \ref{thm:GLS1} and \ref{thm:GLS2} imply that if the separation problem for $P(\cI)$ can be solved in polynomial time, 
then we can solve \eqref{eq:problem polytope} in polynomial time.

\subsection{Approximately linear case}\label{sec:LP-based approx linear}
In this subsection, we consider the case where the possible functions are approximately linear.
For a function $f\colon 2^E\to\mathbb{R}_+$, we call that $\hat{f}$ is an $\alpha$-approximate of $f$ if
\begin{align*}
\alpha\cdot \hat{f}(X)\le f(X)\le \hat{f}(X) \qquad(\forall X\in 2^E).
\end{align*}
Suppose that $\hat{f}_1,\dots,\hat{f}_n$ are $\alpha$-approximate of $f_1,\dots,f_n$, respectively.
Then, the optimal solution for $\hat{f}_1,\dots,\hat{f}_n$
is an $\alpha$-approximate solution of the original problem.

\begin{theorem}
	If $f_1,\dots,f_n$ have $\alpha$-approximate linear functions $\hat{f}_1,\dots,\hat{f}_n$ and
	there is a polynomial time algorithm to solve the separation problem for $P(\cI)$,
	then there is a polynomial time algorithm to compute an $\alpha$-approximate solution for the robust optimization problem.
\end{theorem}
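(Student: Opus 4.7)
The plan is to simply reduce to the purely linear case and invoke Theorem~\ref{thm:LP-based}. Since each $\hat{f}_k$ is linear, applying Theorem~\ref{thm:LP-based} to the robust optimization instance with objectives $\hat{f}_1,\dots,\hat{f}_n$ over $(E,\cI)$ yields, in polynomial time, a randomized strategy $\hat{p}\in\Delta(\cI)$ that is optimal for $\min_{k\in[n]}\sum_{X\in\cI}\hat{p}_X\,\hat{f}_k(X)$. My algorithm will output this $\hat{p}$ and claim that it is an $\alpha$-approximate solution for the original problem with objectives $f_1,\dots,f_n$.

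The verification has two steps. First, compare the two optimal values $\nu^*=\max_{p}\min_{k}\sum_{X}p_X f_k(X)$ and $\hat\nu^*=\max_{p}\min_{k}\sum_{X}p_X \hat{f}_k(X)$. Using the upper bound $f_k(X)\le \hat{f}_k(X)$ and letting $p^*$ be optimal for the original problem gives
\[
\hat\nu^*\ \ge\ \min_{k\in[n]}\sum_{X\in\cI}p^*_X\,\hat{f}_k(X)\ \ge\ \min_{k\in[n]}\sum_{X\in\cI}p^*_X\, f_k(X)\ =\ \nu^*.
\]
Second, use the lower bound $\alpha\,\hat{f}_k(X)\le f_k(X)$ to evaluate $\hat p$ on the true objectives:
\[
\min_{k\in[n]}\sum_{X\in\cI}\hat p_X\, f_k(X)\ \ge\ \alpha\,\min_{k\in[n]}\sum_{X\in\cI}\hat p_X\,\hat{f}_k(X)\ =\ \alpha\,\hat\nu^*\ \ge\ \alpha\,\nu^*,
\]
which is exactly the $\alpha$-approximation guarantee we want.

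There is no real obstacle: the argument is a one-line sandwiching of $f_k$ between $\alpha\hat f_k$ and $\hat f_k$, which survives the $\min_k\sum_X p_X(\cdot)$ because both operators are monotone in each coordinate of the objective. The only thing worth a small remark is that the approximation ratio is preserved by the outer $\max_p$ as well, which is automatic: evaluating a fixed distribution gives a lower bound on $\max_p$, and the scaling is the same for every distribution. Hence the running time is exactly that of Theorem~\ref{thm:LP-based} applied to $\hat{f}_1,\dots,\hat{f}_n$, and the approximation ratio is $\alpha$.
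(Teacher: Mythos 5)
Your proof is correct and follows essentially the same route as the paper: solve the surrogate instance with the linear objectives $\hat f_1,\dots,\hat f_n$ optimally via Theorem~\ref{thm:LP-based}, then sandwich using $\alpha\,\hat f_k(X)\le f_k(X)\le \hat f_k(X)$. If anything, your two inequalities, namely $\hat\nu^*=\max_{p}\min_{k}\sum_{X}p_X\hat f_k(X)\ \ge\ \nu^*$ and $\min_{k}\sum_{X}\hat p_X f_k(X)\ \ge\ \alpha\,\hat\nu^*$, are stated in the direction actually needed for the lower-bound guarantee, whereas the paper's displayed chain only bounds the returned value from above by $\nu^*/\alpha$.
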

\begin{proof}
	We can compute optimal solution \(p^*\in\argmax_{p\in\Delta(\cI)}\min_{k\in[n]}\sum_{X\in\cI}p_X\cdot \hat{f}_k(X)\) 
	by Theorem~\ref{thm:LP-based}.
	Then we have
	\begin{align*}
	\min_{k\in[n]}\sum_{X\in\cI}p_X^*\cdot f_k(X)
	&\le \min_{k\in[n]}\sum_{X\in\cI}p_X^*\cdot \hat{f}_k(X)
	=\max_{p\in\Delta(\cI)}\min_{k\in[n]}\sum_{X\in\cI}p_X\cdot \hat{f}_k(X)\\
	&\le \frac{1}{\alpha}\cdot\max_{p\in\Delta(\cI)}\min_{k\in[n]}\sum_{X\in\cI}p_X\cdot f_k(X)
	\end{align*}
	and hence $p^*$ is an $\alpha$-approximate solution for the robust optimization problem.
\end{proof}

Finally, let us see that a monotone submodular function with a small curvature has a linear function with a small approximation factor.
For a monotone submodular function $g\colon 2^E\to\mathbb{R}_+$,
the (total) \emph{curvature} of $g$ is defined as \(c_g=1-\min_{e\in E}\frac{g(E)-g(E\setminus\{e\})}{g(\{e\})}\)~\cite{CC1984}.
Let $\hat{g}(X)=\sum_{e\in X}g(\{e\})$.
Then, we have \((1-c_f)\hat{g}(X)\le g(X)\le \hat{g}(X)\) and hence $g$ is a $(1-c_g)$-approximate of $\hat{g}$.

\section{MWU-based Algorithm}\label{sec:MWU}
In this section, we present an algorithm based on the MWU method~\cite{AHK2012}. 
This algorithm is applicable to general cases. 
We assume that $f_k(X)\ge 0$ for any $k\in[n]$ and $X\in\cI$. 

We describe the idea of our algorithm. 
Recall the minimax relation \eqref{eq:minimax}. 
Let us focus on the right hand side. 
We define weights $\omega_k$ for each function $f_k$, and iteratively update them. 
Intuitively, a function with a larger weight is likely to be chosen with higher probability. 
At the first round, all functions have the same weights. 
At each round $t$, we set a probability $q_k \ (k \in [n])$ that $f_k$ is chosen by normalizing the weights. 
Then we compute an (approximate) optimal solution $X^{(t)}$ of $\max_{X\in\cI}\sum_{k\in[n]} q_k\cdot f_k(X)$. 
To minimize the right hand side of \eqref{eq:minimax}, the probability $q_k$ for a function $f_k$ with a larger value $f_k(X)$ should be decreased. 
Thus we update the weights according to $f_k(X)$. 
We repeat this procedure, and set a randomized strategy $p \in \Delta(\cI)$ according to $X^{(t)}$'s. 

Krause et al.~\cite{krause2011} and Chen et al.~\cite{CLSS2017} proposed the above algorithm when $f_1,\dots,f_n$ are functions with range $[0,1]$.
They proved that if there exists an $\alpha$-approximation algorithm to $\max_{X\in\cI}\sum_{k\in[n]}q_kf_k(X)$ for any $q\in\Delta_n$, then the approximation ratio is $\alpha-\epsilon$ for any fixed constant $\epsilon > 0$.
This implies an approximation ratio of $\alpha-\epsilon\cdot\max_{k\in[n], \, X\in\cI}f_k(X)/\nu^*$ when $f_1,\dots,f_n$ are functions with range $\mathbb{R}_+$, where $\nu^*$ is the optimal value of \eqref{eq:problem}.
Here, $\max_{k\in[n],\,X\in\cI}f_k(X)/\nu^*$ could be large in general.
To remove this term from the approximation ratio, we introduce a novel concept of function transformation. 
We improve the existing algorithms~\cite{CLSS2017, krause2011} with this concept, and show a stronger result later in Theorem~\ref{thm:mwu_main}.
\begin{definition}\label{def:reduction}
For positive reals $\eta$ and $\gamma~(\le 1)$,
we call a function $g$ is an \emph{$(\eta,\gamma)$-reduction} of $f$ if
(i) $g(X)\le \min\{f(X),\,\eta\}$ and
(ii) $g(X)\le \gamma\cdot\eta$ implies $g(X)=f(X)$. 
\end{definition}

We fix a parameter $\gamma>0$, where $1/\gamma$ is bounded by polynomial.
The smaller $\gamma$ is, the wider the class of $(\eta,\gamma)$-reduction of $f$ is. 
We set another parameter $\eta$ later. 
We denote $(\eta,\gamma)$-reduction of $f_1,\dots,f_n$ by $f_1^\eta,\dots,f_n^\eta$, respectively.
In what follows, suppose that we have an $\alpha$-approximation algorithm to
\begin{align}\label{eq:MWU oracle}
\max_{X\in\cI}\sum_{k\in[n]}q_kf_k^\eta(X)
\end{align}
for any $q\in\Delta_n$ and $\eta\in\mathbb{R}_{+}\cup\{\infty\}$. 
In our proposed algorithm, we use $f_k^\eta$ instead of the original $f_k$.
The smaller $\eta$ is, the faster our algorithm converges.
However, the limit outcome of our algorithm as $T$ goes to infinity moves over a little from the optimal solution.
We overcome this issue by setting $\eta$ an appropriate value. 

Our algorithm is summarized in Algorithm \ref{alg:mwu}. 
\begin{algorithm}
\caption{MWU algorithm for the robust optimization}\label{alg:mwu}
\SetKwInOut{Input}{input}
\SetKwInOut{Output}{output}
\Input{positive reals $\eta$, $\delta\le 1/2$, and an integer $T$}
\Output{randomized strategy $p^*\in\Delta(\cI)$}
Let $\omega_k^{(1)}\ot 1$ for each $k\in[n]$\;
\For{$t=1$ to $T$}{
$q_k^{(t)}\ot \omega_k^{(t)}/\sum_{k\in[n]} \omega_k^{(t)}$ for each $k\in[n]$\;
let $X^{(t)}$ be an $\alpha$-approximate solution of $\max_{X\in\cI}\sum_{k\in[n]} q_k^{(t)}\cdot f_k^\eta(X)$ \label{alg:MWU oracle} \;
$\omega_k^{(t+1)}\ot \omega_k^{(t)}(1-\delta)^{f_k^\eta(X^{(t)})/\eta}$ for each $k\in[n]$\;
}
\Return $p^*\in\Delta(\cI)$ such that $p_X^*=|\{t\in\{1,\dots,T\}\mid X^{(t)}=X\}|/T$\;
\end{algorithm}
Note that $f_k^\infty=f_k$ $(k\in[n])$. 
We remark that when the parameter $\gamma$ is small, there may exist a better approximation algorithm for \eqref{eq:MWU oracle}, but the running time of Algorithm \ref{alg:mwu} becomes longer. 

The main result of this section is stated below. 
\begin{theorem}\label{thm:mwu_main}
If there exists an $\alpha$-approximation algorithm to solve \eqref{eq:MWU oracle} for any $q\in\Delta_n$ and $\eta\in\mathbb{R}_+\cup\{\infty\}$,
then Algorithm~\ref{alg:mwu} is an $(\alpha-\epsilon)$-approximation algorithm to the robust optimization problem \eqref{eq:problem} for any fixed $\epsilon>0$.
In addition, the running time of Algorithm~\ref{alg:mwu} is $O(\frac{n^2\ln n}{\alpha\epsilon^3\gamma}\theta)$,
where $\theta$ is the running time of the $\alpha$-approximation algorithm to \eqref{eq:MWU oracle}. 
\end{theorem}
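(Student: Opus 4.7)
The plan is to carry out the standard multiplicative-weights potential argument on the reduced objectives $f_k^\eta$, and then transfer the guarantee back to the originals via property~(i). Define $\Phi^{(t)}=\sum_{k\in[n]}\omega_k^{(t)}$. Property~(i) gives $f_k^\eta(X^{(t)})/\eta\in[0,1]$, so the convexity inequality $(1-\delta)^x\le 1-\delta x$ for $x\in[0,1]$ yields the per-round decrease $\Phi^{(t+1)}\le\Phi^{(t)}\bigl(1-\tfrac{\delta}{\eta}\sum_k q_k^{(t)}f_k^\eta(X^{(t)})\bigr)$. Iterating with $1-y\le e^{-y}$, and comparing against the single-expert bound $\omega_{k^*}^{(T+1)}=(1-\delta)^{\sum_t f_{k^*}^\eta(X^{(t)})/\eta}\le\Phi^{(T+1)}$, the standard logarithmic manipulation (using $-\ln(1-\delta)\le\delta+\delta^2$ for $\delta\le 1/2$) produces the regret bound $\frac{1}{T}\sum_t\sum_k q_k^{(t)}f_k^\eta(X^{(t)})\le(1+\delta)\frac{1}{T}\sum_t f_{k^*}^\eta(X^{(t)})+\frac{\eta\ln n}{T\delta}$, valid for every $k^*\in[n]$.

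Next I plug in the approximation guarantee and von Neumann. At each round the oracle yields $\sum_k q_k^{(t)}f_k^\eta(X^{(t)})\ge\alpha\max_X\sum_k q_k^{(t)}f_k^\eta(X)$, and minimaxing gives $\max_X\sum_k q_k^{(t)}f_k^\eta(X)\ge\hat\nu$, where $\hat\nu:=\max_{p\in\Delta(\cI)}\min_k\sum_X p_X f_k^\eta(X)$. Substituting in the regret bound, taking $k^*$ to be the minimizing coordinate, and using property~(i) ($f_k\ge f_k^\eta$) to replace $f_{k^*}^\eta$ by $f_{k^*}$ on the relevant side, I obtain the core estimate $\min_k\sum_X p^*_X f_k(X)\ge\frac{1}{1+\delta}\bigl(\alpha\hat\nu-\tfrac{\eta\ln n}{T\delta}\bigr)$.

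The delicate step is to show that $\hat\nu$ can be made close to $\nu^*$ by choosing $\eta$ well. The crucial use of property~(ii) is its contrapositive: $f_k(X)\le\gamma\eta$ forces $f_k^\eta(X)=f_k(X)$, so once $\gamma\eta$ majorizes all values $f_k(X)$ attained on $\cI$ we have $f_k^\eta\equiv f_k$ and $\hat\nu=\nu^*$ exactly. Since $\nu^*$ and these values are not known a priori, I will first approximate each $f_k(X_k^*)$ (by calling the oracle with $q=e_k$ and $\eta=\infty$, for which the reduction forces $f_k^\infty=f_k$), then use Proposition~\ref{prop:nustar} to bracket $\nu^*$ within a factor $n$ of a known scale, and finally run the main loop on a geometric $(1+\epsilon)$-spaced list of $\eta$-candidates spanning this range, returning the best output. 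Some guess in this list simultaneously makes $\hat\nu$ within a $(1-\epsilon)$ factor of $\nu^*$ and keeps $\eta/\nu^*$ polynomially bounded in $n$ and $1/\gamma$.

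Finally, tuning $\delta=\Theta(\epsilon/\alpha)$ so that $\alpha/(1+\delta)\ge\alpha-\epsilon/2$, picking $T$ large enough that the additive term $\eta\ln n/(T\delta)$ is at most $(\epsilon/2)\nu^*$, and multiplying the per-guess cost by the length of the candidate list, yields the claimed $O\bigl(\tfrac{n^2\ln n}{\alpha\epsilon^3\gamma}\theta\bigr)$ bound. The hardest piece is the third step: it is precisely the two properties of the $(\eta,\gamma)$-reduction that let us collapse $\hat\nu$ to (essentially) $\nu^*$ without pushing $\eta$ down to $\nu^*$ itself, which would re-introduce the $\epsilon\cdot\max_{k,X}f_k(X)/\nu^*$ slack present in previous work~\cite{krause2011,CLSS2017}; this is precisely why the reduction abstraction was introduced.
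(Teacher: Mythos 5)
Your first two steps (the potential/regret bound for the reduced functions and the substitution of the $\alpha$-approximation guarantee plus minimax) match the paper's Lemmas~\ref{lem:mwu} and~\ref{lem:mwu2}. The gap is in the third step, which you yourself flag as the delicate one. The only mechanism you give for making $\hat\nu$ close to $\nu^*$ is to choose $\eta$ so that $\gamma\eta\ge\max_{k\in[n],X\in\cI}f_k(X)$, which forces $f_k^\eta\equiv f_k$ and $\hat\nu=\nu^*$ exactly. But such an $\eta$ satisfies $\eta/\nu^*\ge\max_{k,X}f_k(X)/(\gamma\nu^*)$, which is not polynomially bounded in general (e.g.\ one scenario $f_1$ may take huge values while $\nu^*\le\min_k f_k(X_k^*)$ is tiny), so the additive term $\eta\ln n/(T\delta)$ would force a super-polynomial $T$ --- exactly the slack from \cite{krause2011,CLSS2017} that the theorem is meant to remove. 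Your geometric search over $\eta$ does not repair this: it only handles the fact that $\nu^*$ is unknown, whereas the assertion that \emph{some} candidate with $\eta/\nu^*=\mathrm{poly}(n,1/\gamma)$ already yields $\hat\nu\ge(1-\epsilon)\nu^*$ is stated without proof, and your sufficient condition cannot certify it for any candidate in that range.

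What is missing is the paper's Lemma~\ref{lem:mwu1}, which is the actual content of the $(\eta,\gamma)$-reduction idea: if $\eta\ge\frac{n}{\delta\gamma}\nu^*$, then in any optimal adversary strategy $q^*$ for the \emph{reduced} game, every coordinate $i$ with $\max_X f_i(X)\ge\gamma\eta$ must carry weight $q_i^*\le\delta/n$ (otherwise $q_i^*\cdot\max_X f_i^\eta(X)\ge q_i^*\gamma\eta$ would already exceed $\nu^*\ge\hat\nu$, using Definition~\ref{def:reduction}(ii) to show $\max_X f_i^\eta(X)\ge\gamma\eta$). Hence $q^*$ places mass at least $1-\delta$ on coordinates where $f_k^\eta\equiv f_k$ on all of $\cI$, and a renormalization argument gives $\hat\nu\ge(1-\delta)\nu^*$ \emph{without} requiring $\gamma\eta$ to majorize all function values. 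With this lemma in hand, no search is needed at all: a single $\eta=\frac{n}{\alpha\delta\gamma}\min_k f_k(X_k')$, bracketed via Proposition~\ref{prop:nustar} and the $\alpha$-approximate maximizers $X_k'$, satisfies $\frac{n}{\delta\gamma}\nu^*\le\eta\le\frac{n^2}{\alpha\delta\gamma}\nu^*$ and yields the stated running time (your extra multiplicative list-length factor would in fact overshoot the claimed bound). As written, your argument establishes the theorem only under the additional assumption that $\max_{k,X}f_k(X)/\nu^*$ is polynomially bounded, which is the case the new technique was designed to avoid.
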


To show this, we prove the following lemma by standard analysis of the multiplicative weights update method (see, e.g., \cite{AHK2012}).
In the following, we denote by $\nu^*$ the optimal value of \eqref{eq:problem}. 
\begin{lemma}\label{lem:mwu}
	For any $\delta\in (0,1/2]$, it holds that 
	\begin{align*}
	\sum_{t=1}^T\sum_{k\in[n]} q_k^{(t)}\cdot f_k^\eta(X^{(t)})\le \frac{\eta\ln n}{\delta}+(1+\delta)\cdot\min_{k\in[n]}\sum_{t=1}^T f_k^\eta(X^{(t)}).
	\end{align*}
\end{lemma}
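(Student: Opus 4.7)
The plan is to run the standard multiplicative-weights potential argument, treating $\Phi_t := \sum_{k\in[n]} \omega_k^{(t)}$ as the potential and bounding it two ways. The crucial structural input that makes everything go through is property (i) of an $(\eta,\gamma)$-reduction, namely $f_k^\eta(X)\le \eta$, which guarantees that the exponent $f_k^\eta(X^{(t)})/\eta$ always lies in $[0,1]$.

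First I would derive the upper bound on $\Phi_{T+1}$. Since $x\mapsto (1-\delta)^x$ is convex on $[0,1]$ and $f_k^\eta(X^{(t)})/\eta\in[0,1]$, the chord inequality gives $(1-\delta)^{f_k^\eta(X^{(t)})/\eta}\le 1-\delta\cdot f_k^\eta(X^{(t)})/\eta$. Plugging this into the update rule and summing over $k$ (using the definition of $q_k^{(t)}$) yields
\[
\Phi_{t+1}\le \Phi_t\Bigl(1-\tfrac{\delta}{\eta}\sum_{k\in[n]} q_k^{(t)} f_k^\eta(X^{(t)})\Bigr).
\]
Telescoping and using $1+y\le e^y$ together with $\Phi_1=n$ produces
\[
\Phi_{T+1}\le n\exp\Bigl(-\tfrac{\delta}{\eta}\sum_{t=1}^T\sum_{k\in[n]}q_k^{(t)}f_k^\eta(X^{(t)})\Bigr).
\]

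Second, I would use the trivial lower bound $\Phi_{T+1}\ge \omega_k^{(T+1)}=(1-\delta)^{\sum_{t=1}^T f_k^\eta(X^{(t)})/\eta}$ for any fixed $k\in[n]$. Combining the two bounds and taking logarithms, then multiplying through by $\eta/\delta$, gives
\[
\sum_{t=1}^T\sum_{k\in[n]}q_k^{(t)}f_k^\eta(X^{(t)})\le \frac{\eta\ln n}{\delta}+\frac{-\ln(1-\delta)}{\delta}\sum_{t=1}^T f_k^\eta(X^{(t)}).
\]
Here I would invoke the elementary analytic fact that $-\ln(1-\delta)\le \delta+\delta^2$ for $\delta\in(0,1/2]$ (Taylor-expand $-\ln(1-\delta)=\sum_{i\ge 1}\delta^i/i$ and bound the tail $\sum_{i\ge 2}\delta^{i-2}/i\le 1$ on that range), so that $-\ln(1-\delta)/\delta\le 1+\delta$.

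Finally, since the resulting inequality holds for every $k\in[n]$, I would take the minimum over $k$ on the right-hand side to obtain the claimed bound. The only real subtlety is the range check on $f_k^\eta(X^{(t)})/\eta$ needed for the convexity step; this is immediate from Definition~\ref{def:reduction}(i), and it is precisely why introducing the reduction $f_k^\eta$ (rather than working with $f_k$ directly, whose range could be much larger than $\eta$) is what enables the convergence improvement that Theorem~\ref{thm:mwu_main} promises.
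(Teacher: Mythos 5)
Your proposal is correct and follows essentially the same route as the paper's proof: the same potential function $\Phi^{(t)}=\sum_k\omega_k^{(t)}$, the same convexity bound $(1-\delta)^x\le 1-\delta x$ on $[0,1]$ (valid precisely because Definition~\ref{def:reduction}(i) gives $f_k^\eta(X^{(t)})/\eta\in[0,1]$), the same single-weight lower bound on $\Phi^{(T+1)}$, and the same elementary estimate $\frac{1}{\delta}\ln\frac{1}{1-\delta}\le 1+\delta$ for $\delta\in(0,1/2]$ (which you additionally justify via the Taylor series, whereas the paper simply asserts it). No gaps.
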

\begin{proof}
	Let $\Phi^{(t)}=\sum_{k\in[n]} \omega_k^{(t)}$.
	Then we have
	\begin{align*}
	\Phi^{(t+1)}
	&=\sum_{k\in[n]} \omega_k^{(t+1)}=\sum_{k\in[n]} \omega_k^{(t)}(1-\delta)^{f_k^\eta(X^{(t)})/\eta}\\
	&\le \sum_{k\in[n]} \omega_k^{(t)}(1-\delta\cdot f_k^\eta(X^{(t)})/\eta)
	= \Phi^{(t)}\cdot \left(1-\delta \sum_{k\in[n]} q_k^{(t)}\cdot f_k^\eta(X^{(t)})/\eta\right)\\
	&\le \Phi^{(t)}\cdot \exp\left(-\delta \sum_{k\in[n]} q_k^{(t)}\cdot f_k^\eta(X^{(t)})/\eta\right).
	\end{align*}
	Here, the first inequality follows because $(1-\delta)^x\le 1-\delta x$ holds for any $x\in [0,1]$ by convexity of $(1-\delta)^x$, 
	and $f_k^\eta(X^{(t)})/\eta\in [0,1]$ holds by Definition~\ref{def:reduction}(i). 
	The last inequality holds since $1-x\le e^{-x}$ for any $x$.
	Thus, we get
	\begin{align*}
	\Phi^{(T+1)}
	&\le \Phi^{(1)}\cdot \exp\left(-\delta \sum_{t=1}^T\sum_{k\in[n]} q_k^{(t)}\cdot f_k^\eta(X^{(t)})/\eta\right) \\
	&=n\cdot \exp\left(-\delta \sum_{t=1}^T\sum_{k\in[n]} q_k^{(t)}\cdot f_k^\eta(X^{(t)})/\eta\right).
	\end{align*}
	In addition, $\Phi^{(T+1)}\ge \omega_k^{(T+1)}=(1-\delta)^{\sum_{t=1}^T f_k^\eta(X^{(t)})/\eta}$ for each $i\in[n]$.
	Hence, we have
	\begin{align*}
	n\cdot \exp\left(-\delta \sum_{t=1}^T \sum_{k\in[n]} q_k^{(t)}f_k^\eta(X^{(t)})/\eta\right)
	\ge (1-\delta)^{\sum_{t=1}^T f_i^\eta(X^{(t)})/\eta} \quad (\forall i \in [n]).
	\end{align*}
	This implies that
	\begin{align*}
	\sum_{t=1}^T \sum_{k\in[n]} q_k^{(t)}\cdot f_k^\eta(X^{(t)})
	&\le \frac{\eta\ln n}{\delta}+\frac{1}{\delta}\ln\frac{1}{1-\delta}\cdot \sum_{t=1}^T q_i^{(t)}\cdot f_i^\eta(X^{(t)}) \\
	&\le \frac{\eta\ln n}{\delta}+(1+\delta)\cdot \sum_{t=1}^Tq_i^{(t)}\cdot f_i^\eta(X^{(t)})
	\end{align*}
	holds for any $i\in[n]$, because $\frac{1}{\delta}\ln\frac{1}{1-\delta}\le 1+\delta$ holds for any $\delta\in (0,1/2]$. 
	By taking a minimum of the right hand side, we prove the lemma.
\end{proof}

Next, we see that the optimal value of \eqref{eq:problem} for $f_1,\dots,f_n$ and the one for $f^\eta_1,\dots,f^\eta_n$ are close if $\eta$ is a large number.
\begin{lemma}\label{lem:mwu1}
	If $\eta\ge \frac{n}{\delta\gamma}\cdot \nu^*$, we have
	\begin{align*}
	\nu^*\ge \min_{q\in\Delta_n}\max_{X\in\cI}\sum_{k\in[n]} q_k\cdot f_k^\eta(X)\ge (1-\delta)\nu^*.
	\end{align*}
\end{lemma}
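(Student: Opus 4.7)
The upper bound is immediate: since $f_k^\eta\le f_k$ pointwise by property~(i), $\sum_k q_k f_k^\eta(X)\le \sum_k q_k f_k(X)$ for every $q$ and $X$, and applying $\max_X$ then $\min_q$ yields $\min_q\max_X\sum_k q_k f_k^\eta(X)\le \nu^*$ via \eqref{eq:minimax}. So the work is in the lower bound.

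For the lower bound I would argue by contradiction. Assume some $\tilde q\in\Delta_n$ satisfies $\sum_k\tilde q_k f_k^\eta(X)<(1-\delta)\nu^*$ for every $X\in\cI$. The plan is to partition $[n]=A\cup B$ with $A=\{k:f_k(X_k^*)\le\gamma\eta\}$ and $B=[n]\setminus A$, exploit property~(ii) on each part, and then reduce to the untruncated subgame restricted to $A$. For $k\in A$, since $X_k^*$ maximizes $f_k$, we have $f_k(X)\le\gamma\eta$ for every $X$, so condition~(ii) forces $f_k^\eta\equiv f_k$.

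The key step is to show $\sum_{k\in B}\tilde q_k$ is small. For $k\in B$, the contrapositive of~(ii) applied at $X_k^*$ gives $f_k^\eta(X_k^*)>\gamma\eta$. Evaluating the assumed uniform upper bound on $\sum_j\tilde q_j f_j^\eta(\cdot)$ at $X=X_k^*$ and dropping all but the $k$-th term then yields $\tilde q_k\cdot\gamma\eta<(1-\delta)\nu^*$, i.e.\ $\tilde q_k<(1-\delta)\nu^*/(\gamma\eta)$. Summing over $|B|\le n$ such indices and using the hypothesis $\gamma\eta\ge n\nu^*/\delta$ gives $\sum_{k\in B}\tilde q_k\le(1-\delta)\delta\le\delta$, hence $\sum_{k\in A}\tilde q_k\ge 1-\delta>0$ (in particular $A\ne\emptyset$).

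To close the argument I would invoke the minimax bound on the subgame indexed by $A$: since restricting the adversary's distributions to those supported on a subset can only raise the min-max value, the normalized distribution $q_A(k):=\tilde q_k/\sum_{j\in A}\tilde q_j$ (viewed as an element of $\Delta_n$ supported on $A$) admits some $X_A^\star$ with $\sum_{k\in A}q_A(k)\,f_k(X_A^\star)\ge\nu^*$. Multiplying by $\sum_{j\in A}\tilde q_j\ge 1-\delta$ and using $f_k^\eta(X_A^\star)=f_k(X_A^\star)$ for every $k\in A$ (together with non-negativity of the $B$-terms) produces $\sum_k\tilde q_k f_k^\eta(X_A^\star)\ge(1-\delta)\nu^*$, contradicting the assumption. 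The main subtlety I expect is choosing the correct partition: what makes the argument go through is that an index can be genuinely ``truncated'' only when $f_k$ can exceed $\gamma\eta$, and the hypothesis $\eta\ge n\nu^*/(\delta\gamma)$ is precisely strong enough to force the adversary to place total weight below $\delta$ on such indices, after which the untruncated game on $A$ takes over with only a $(1-\delta)$ loss.
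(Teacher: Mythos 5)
Your proposal is correct and follows essentially the same route as the paper's proof: the paper also splits $[n]$ into the indices with $\max_{X\in\cI}f_k(X)<\gamma\eta$ (your $A$, where property~(ii) forces $f_k^\eta=f_k$) and the rest, bounds the adversary's total mass on the latter by $\delta$ using $f_k^\eta(X_k^*)\ge\gamma\eta\ge \frac{n}{\delta}\nu^*$, and then renormalizes over the small-index set to compare with $\nu^*$. The only cosmetic difference is that you run the argument by contradiction on a hypothetical witness $\tilde q$, whereas the paper argues directly with the actual minimizer $q^*$ of the reduced game, using the already-established upper bound $\max_X\sum_k q_k^*f_k^\eta(X)\le\nu^*$ in place of your contradiction hypothesis.
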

\begin{proof}
	The first inequality follows immediately because $f_k(X)\ge f_k^\eta(X)$ for any $k\in[n]$ and $X\in\cI$ by Definition~\ref{def:reduction}(i). 
	Let $q^*\in \argmin_{q\in\Delta_n}\max_{X\in\cI}\sum_{k\in[n]} q_k\cdot f_k^\eta(X)$.
	We denote $J=\{k\in[n]\mid \max_{X\in\cI}f_k(X)< \gamma\eta\}$. 
	Then, for $i\not\in J$, it holds that $q_i^*\le \delta/n$, because
	\begin{align*}
	q_i^*\cdot (n/\delta)\cdot \nu^*
	\le q_i^*\cdot\gamma\eta
	\le \max_{X\in\cI}q_i^*\cdot f^\eta_i(X)
	\le \max_{X\in\cI}\sum_{k\in[n]} q_k^*\cdot f^\eta_k(X)
	\le \nu^*
	\end{align*}
	by Definition~\ref{def:reduction}(ii). 
	Hence, we have $\sum_{j\in J}q_j^*= 1-\sum_{j\not\in J}q_j^*\ge 1-\delta$.
	Then we see that 
	\begin{align*}
	\nu^*
	&=\min_{q\in\Delta_n}\max_{X\in\cI}\sum_{k\in[n]} q_k\cdot f_k(X)\\
	&\le \max_{X\in\cI}\sum_{k\in J} \frac{q_k^*}{\sum_{j\in J}q_j^*}\cdot f_k(X)\\
	&\le \frac{1}{1-\delta}\cdot\max_{X\in\cI}\sum_{k\in J} q_k^*\cdot f_k(X)
	=\frac{1}{1-\delta}\cdot\max_{X\in\cI}\sum_{k\in J} q_k^*\cdot f_k^\eta(X)\\
	&\le \frac{1}{1-\delta}\cdot\max_{X\in\cI}\sum_{k\in [n]} q_k^*\cdot f_k^\eta(X)
	\end{align*}
	and hence $\min_{q\in\Delta_n}\max_{X\in\cI}\sum_{k\in[n]} q_k\cdot f_k^\eta(X)\ge (1-\delta)\nu^*$.
\end{proof}

\begin{lemma}\label{lem:mwu2}
	For any fixed $\epsilon > 0$, the output $p^*$ of Algorithm \ref{alg:mwu} is an $(\alpha-\epsilon)$-approximate solution of \eqref{eq:problem} 
	when we set $T=\lceil \frac{n^2\ln n}{\alpha\delta^3\gamma}\rceil$,
	$\frac{n^2}{\alpha\delta\gamma}\nu^*\ge \eta\ge \frac{n}{\delta\gamma}\nu^*$,
	and $\delta=\min\{\epsilon/3,\,1/2\}$.
\end{lemma}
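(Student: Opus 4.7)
The plan is to combine Lemma~\ref{lem:mwu} with Lemma~\ref{lem:mwu1} and the $\alpha$-approximation guarantee of the oracle invoked at Line~\ref{alg:MWU oracle}. The output $p^*$ places mass $1/T$ on each $X^{(t)}$, so $\sum_{X\in\cI} p^*_X f_k(X) = \frac{1}{T}\sum_{t=1}^T f_k(X^{(t)})$; since $f_k^\eta(X) \le f_k(X)$ by Definition~\ref{def:reduction}(i), it is enough to show
\[
\min_{k\in[n]}\frac{1}{T}\sum_{t=1}^T f_k^\eta(X^{(t)}) \;\ge\; (\alpha-\epsilon)\nu^*.
\]

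First, I would use that $X^{(t)}$ is an $\alpha$-approximate maximizer of $\sum_k q_k^{(t)} f_k^\eta(\cdot)$ and that $\eta \ge \frac{n}{\delta\gamma}\nu^*$; applying Lemma~\ref{lem:mwu1} at each round then gives
\[
\sum_{k\in[n]} q_k^{(t)}\cdot f_k^\eta(X^{(t)}) \;\ge\; \alpha\cdot\min_{q\in\Delta_n}\max_{X\in\cI}\sum_{k\in[n]} q_k\cdot f_k^\eta(X) \;\ge\; \alpha(1-\delta)\nu^*.
\]
Summing over $t=1,\dots,T$ and combining with Lemma~\ref{lem:mwu} yields
\[
T\alpha(1-\delta)\nu^* \;\le\; \frac{\eta\ln n}{\delta} + (1+\delta)\cdot\min_{k\in[n]}\sum_{t=1}^T f_k^\eta(X^{(t)}),
\]
so dividing by $(1+\delta)T$ leaves the intermediate bound $\min_{k\in[n]}\frac{1}{T}\sum_{t=1}^T f_k^\eta(X^{(t)}) \ge \bigl(\alpha(1-\delta)\nu^* - \eta\ln n/(T\delta)\bigr)/(1+\delta)$.

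Next I would check that the chosen parameters kill the additive slack. With $\eta \le \frac{n^2}{\alpha\delta\gamma}\nu^*$ and $T \ge \frac{n^2\ln n}{\alpha\delta^3\gamma}$, a direct calculation gives $\eta\ln n/(T\delta) \le \delta\nu^*$, reducing the bound to $\frac{\alpha - (\alpha+1)\delta}{1+\delta}\nu^*$. To finish, I would verify $(\alpha-(\alpha+1)\delta)/(1+\delta) \ge \alpha-\epsilon$, which rearranges to $(2\alpha + 1 - \epsilon)\delta \le \epsilon$; since $\alpha\le 1$ and $\delta \le \epsilon/3$, the left-hand side is at most $3\delta \le \epsilon$, completing the argument.

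Nothing deep is needed beyond Lemmas~\ref{lem:mwu} and~\ref{lem:mwu1}. The main obstacle is bookkeeping: one has to propagate the $(1-\delta)$ factor from Lemma~\ref{lem:mwu1}, the $(1+\delta)$ factor from Lemma~\ref{lem:mwu}, and the additive $\eta\ln n/(T\delta)$ term so that both the multiplicative error and the additive error are simultaneously absorbed into $\epsilon$ by the simultaneous choice of $T$, $\eta$, and $\delta$. The direction of the inequality $f_k^\eta \le f_k$ also must be used in exactly the right place so that the averaged $X^{(t)}$ controls the value of the randomized strategy from below.
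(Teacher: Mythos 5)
Your proposal is correct and follows essentially the same route as the paper's proof: per-round lower bound via the $\alpha$-approximation oracle and Lemma~\ref{lem:mwu1}, averaging, Lemma~\ref{lem:mwu}, the parameter choices to make $\eta\ln n/(T\delta)\le\delta\nu^*$, and the inequality $f_k^\eta\le f_k$ to pass to $p^*$. The only difference is cosmetic bookkeeping at the end (you divide by $1+\delta$ and verify the resulting multiplicative bound, while the paper absorbs the $(1+\delta)$ factor as an additive $\delta\nu^*$ term); both reduce to $3\delta\le\epsilon$.
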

\begin{proof}
	For any $t\in[T]$, since $X^{(t)}$ is an $\alpha$-approximate solution of $\max_{X\in\cI}\sum_{k\in[n]}q_k^{(t)}\cdot f_k^\eta(X)$, we have
	\begin{align*}  
	\sum_{k\in[n]} q_i^{(t)}\cdot f_k^\eta(X^{(t)})
	&\ge \alpha\cdot \max_{X\in\cI}\sum_{k\in[n]} q_k^{(t)}\cdot f_k^\eta(X) \\
	&\ge \alpha\cdot \min_{q\in\Delta_n}\max_{X\in\cI}\sum_{k\in[n]} q_k\cdot f_k^\eta(X)
	\ge \alpha(1-\delta)\cdot \nu^*
	\end{align*}
	where the last inequality follows from Lemma~\ref{lem:mwu1}. 
	By taking the average over $t$ for both sides, it holds that 
	\begin{align*}
	\alpha(1-\delta)\cdot\nu^*
	&\le \frac{1}{T}\sum_{t=1}^T \sum_{k\in[n]} q_k^{(t)}\cdot f_k^\eta(X^{(t)}). 
	\end{align*}
	Moreover, Lemma~\ref{lem:mwu} implies that 
	\begin{align*}
	\frac{1}{T}\sum_{t=1}^T \sum_{k\in[n]} q_k^{(t)}\cdot f_k^\eta(X^{(t)})&\le \frac{\eta\ln n}{\delta T}+(1+\delta)\cdot \min_{k\in[n]}\sum_{t=1}^T \frac{1}{T}\cdot f_k^\eta(X^{(t)}). 
	\end{align*}
	By the definitions of $p^*$ and $f_k^\eta$, we observe that $\sum_{t=1}^T \frac{1}{T}\cdot f_k^\eta(X^{(t)}) \leq \sum_{X \in \cI} p^*_X f_k^\eta(X) \leq \sum_{X\in\cI}p_X^*\cdot f_k(X)$ for any $k\in[n]$. 
	We see that 
	\begin{align*}
	\alpha(1-\delta)\cdot\nu^*&\le \frac{\eta\ln n}{\delta T}+(1+\delta)\cdot \min_{k\in[n]}\sum_{X\in\cI}p_X^*\cdot f_k(X) \\
	&\le \min_{k\in [n]}\sum_{X\in\cI}p_X^*\cdot f_k(X)+\frac{\eta\ln n}{\delta T}+\delta\nu^*,
	\end{align*}
	where the last inequality holds because 
	\[
	\min_{k\in[n]}\sum_{X\in\cI}p_X^*\cdot f_k(X) \leq \max_{p\in\Delta(\cI)}\min_{k\in[n]}\sum_{X\in\cI}p_X\cdot f_k(X) = \nu^*.
	\] 
	Thus it follows that 
	\begin{align*}
	\min_{k\in[n]}\sum_{X\in\cI}p_X^*\cdot f_k(X)
	&\ge \alpha\cdot \nu^* -\left(\alpha \delta\nu^* + \frac{\eta\ln n}{\delta T} +\delta\nu^*\right) \\
	&\ge \alpha\cdot \nu^* - 3\delta\nu^* \geq (\alpha-\epsilon)\cdot\nu^*,
	\end{align*}
	since we set $T=\lceil \frac{n^2\ln n}{\alpha\delta^3\gamma}\rceil$ and $\delta=\min\{\epsilon/3,\,1/2\}$. 
	Therefore, $p^*$ is an $(\alpha-\epsilon)$-approximate solution.
\end{proof}

We are ready to prove Theorem \ref{thm:mwu_main}. 
\begin{proof}[Proof of Theorem \ref{thm:mwu_main}.]
	We show that we can set the parameter $\eta$ to use Lemma \ref{lem:mwu2}. 
	For each $k\in[n]$, let $X_k'$ be an $\alpha$-approximate solution to \eqref{eq:MWU oracle} with $q_k =1$ and $q_{k'}=0 \ (k'\neq k)$, namely, $\max_{X\in\cI}f_k(X)$.
	Then, by Proposition~\ref{prop:nustar}, we have $\min_{k\in[n]}f_k(X_k')/n\le \nu^*\le \min_{k\in[n]}f_k(X_k')/\alpha$.
	Hence, we obtain $\frac{n^2}{\alpha\delta\gamma}\nu^*\ge \eta\ge \frac{n}{\delta\gamma}\nu^*$
	by setting $\eta=\frac{n}{\alpha\delta\gamma}\cdot\min_{k\in[n]}f_k(X_k')$.
	Thus, the statement follows from Lemma \ref{lem:mwu2}. 
\end{proof}

\begin{remark}
	The output of our algorithm have a support of size at most $T = \lceil \frac{n^2\ln n}{\alpha\delta^3\gamma}\rceil$. 
	Without loss of the objective value, we can find a sparse solution by the following procedure.
	Let $\Delta_T$ be the subset of distributions $\Delta(\cI)$ whose support is a subset of $\{X^{(1)},\dots,X^{(T)}\}$.
	Then, we can obtain the best distribution in $\Delta_T$
	by solving the following LP: 
	\begin{align*}
	\begin{array}{rll}
	\max & \nu&\\
	\text{s.t.}& \nu\le \sum_{t=1}^T f_i(X^{(t)})r_t & (\forall i\in[n]),\\
	& \sum_{t=1}^T r_t=1,&\\
	& r_t\ge 0&(\forall t\in[T]).
	\end{array}
	\end{align*}
	This LP has a polynomial size. 
	If we pick an extreme optimal point $(r,\nu)$, then the support size of $r$ is at most $n$.
\end{remark}

In the subsequent subsections, we enumerate applications of Theorem~\ref{thm:mwu_main}.

\subsection{Robust monotone submodular maximization}\label{sec:mwu_monotone submodular}
Let us consider the case where $f_1,\dots,f_n\colon 2^E\to\mathbb{R}_+$ are monotone submodular functions.
In this case, it requires exponential number of queries to get a $(1-1/e+\epsilon)$-approximation for any $\epsilon>0$
even if $n=1$ and the constraint is a uniform matroid~\cite{NW1978}. 

We set $f_k^\eta(X)=\min\{f_k(X),\,\eta\}$.
Then, $f_k^\eta$ is an $(\eta,1)$-reduction of $f_k$ and $f_k^\eta$ is a monotone submodular function~\cite{lovasz1983,fujito2000}.
Moreover, for any $q\in\Delta_n$, a function $\sum_{k\in[n]}q_kf_k^\eta(X)$ is monotone submodular, 
since a nonnegative linear combination of monotone submodular functions is also monotone submodular.
Thus, \eqref{eq:MWU oracle} is an instance of the monotone submodular function maximization problem. 
There exists $(1-1/e)$-approximation algorithms for this problem under a knapsack constraint~\cite{Sviridenko04} or under a matroid constraint~\cite{CCPV2007,FW2012}. 
When $\cI$ is defined from a knapsack constraint or a matroid, we can see from Theorem~\ref{thm:mwu_main} that Algorithm~\ref{alg:mwu} using these existing algorithms in line \ref{alg:MWU oracle} finds a $(1-1/e-\epsilon)$-approximate solution to \eqref{eq:problem}. 
\begin{theorem}\label{thm:monotone_submodular}
	For any positive real $\epsilon>0$,
	there exists a $(1-1/e-\epsilon)$-approximation algorithm for the robust optimization problem \eqref{eq:problem}
	when $f_1,\dots,f_n$ are monotone submodular and $\cI$ is defined from a knapsack constraint or a matroid.
\end{theorem}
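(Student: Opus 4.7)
My plan is to invoke Theorem~\ref{thm:mwu_main} with $\alpha = 1-1/e$ and $\gamma = 1$, after exhibiting an appropriate $(\eta,1)$-reduction for each $f_k$ and checking that the resulting oracle problem \eqref{eq:MWU oracle} is an instance of monotone submodular maximization under the given constraint.

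For each $k \in [n]$ and $\eta > 0$, I would set
\[
f_k^\eta(X) \;:=\; \min\{f_k(X),\,\eta\}.
\]
Condition (i) of Definition~\ref{def:reduction} holds by construction. For condition (ii) with $\gamma=1$, if $f_k^\eta(X) < \eta$ then necessarily $f_k(X) < \eta$, and hence $f_k^\eta(X) = f_k(X)$; thus $f_k^\eta$ is an $(\eta,1)$-reduction of $f_k$. It is a classical fact that truncating a monotone submodular function at a constant threshold yields another monotone submodular function~\cite{lovasz1983,fujito2000}, so each $f_k^\eta$ is monotone submodular. Because the class of monotone submodular functions is closed under nonnegative linear combinations, for every $q \in \Delta_n$ the function $X \mapsto \sum_{k \in [n]} q_k f_k^\eta(X)$ is monotone submodular and nonnegative on $\cI$.

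Therefore \eqref{eq:MWU oracle} is an instance of monotone submodular maximization subject to either a knapsack constraint or a matroid constraint. For the knapsack case I would plug in Sviridenko's $(1-1/e)$-approximation~\cite{Sviridenko04}; for the matroid case, the continuous greedy / pipage rounding algorithm of C\u{a}linescu--Chekuri--P\'al--Vondr\'ak~\cite{CCPV2007} or the combinatorial algorithm of Filmus--Ward~\cite{FW2012}, both $(1-1/e)$-approximations, work. Using any such procedure as the subroutine in line~\ref{alg:MWU oracle} of Algorithm~\ref{alg:mwu} and invoking Theorem~\ref{thm:mwu_main} with $\alpha = 1-1/e$, $\gamma = 1$, and the given $\epsilon > 0$, we obtain the claimed $(1 - 1/e - \epsilon)$-approximation in polynomial time.

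Because Theorem~\ref{thm:mwu_main} performs all of the heavy lifting, no substantive obstacle remains; the argument is a verification. The only delicate points are (a) checking that the truncation really satisfies Definition~\ref{def:reduction} with $\gamma = 1$, and (b) confirming that monotone submodularity is preserved under both truncation and nonnegative linear combination, so that the oracle problem sits inside the well-studied class to which the $(1-1/e)$-approximations apply.
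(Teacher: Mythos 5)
Your proposal is correct and takes essentially the same route as the paper: it uses the truncation $f_k^\eta=\min\{f_k,\eta\}$ as an $(\eta,1)$-reduction, observes that monotone submodularity is preserved under truncation and nonnegative linear combinations, and plugs the known $(1-1/e)$-approximation algorithms for knapsack and matroid constraints into Theorem~\ref{thm:mwu_main}. No further comment is needed.
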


For the monotone submodular maximization subject to $\mu$-matroid intersection,
a $1/(\mu+\epsilon)$-approximation algorithm is known for any fixed $\epsilon>0$~\cite{LSV2010}.
Thus, we obtain the following consequence.
\begin{theorem}\label{thm:monotone_submodular_mu}
	For any fixed positive real $\epsilon>0$,
	there exists a $1/(\mu+\epsilon)$-approximation algorithm for the robust optimization problem \eqref{eq:problem} 
	when $f_1,\dots,f_n$ are monotone submodular and $\cI$ is given by a $\mu$-matroid intersection.
\end{theorem}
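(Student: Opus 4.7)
The plan is to mirror the proof of Theorem~\ref{thm:monotone_submodular} but substitute the submodular maximization oracle with the one for a $\mu$-matroid intersection constraint. Specifically, I will set $f_k^\eta(X) = \min\{f_k(X), \eta\}$ for each $k\in[n]$. From Definition~\ref{def:reduction} it is immediate that $f_k^\eta$ is an $(\eta, 1)$-reduction of $f_k$, so the parameter $\gamma = 1$ is polynomially bounded, as required by Theorem~\ref{thm:mwu_main}.

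Next, I need to verify that the subproblem \eqref{eq:MWU oracle} remains an instance of monotone submodular function maximization subject to a $\mu$-matroid intersection. The truncation of a monotone submodular function by a constant is again monotone submodular (this is standard; cf.\ \cite{lovasz1983,fujito2000}), and nonnegative combinations preserve this property, so $\sum_{k\in[n]} q_k f_k^\eta$ is monotone submodular for every $q\in\Delta_n$. Therefore the $1/(\mu+\epsilon')$-approximation algorithm of Lee, Sviridenko, and Vondr\'ak~\cite{LSV2010} can be used as the subroutine in line~\ref{alg:MWU oracle} of Algorithm~\ref{alg:mwu} for any fixed $\epsilon' > 0$.

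Now I apply Theorem~\ref{thm:mwu_main} with $\alpha = 1/(\mu+\epsilon')$. The theorem gives, for any fixed $\epsilon'' > 0$, an $\bigl(1/(\mu+\epsilon') - \epsilon''\bigr)$-approximate solution in polynomial time. The final step is a routine choice of constants: given the target $\epsilon > 0$, pick $\epsilon'$ and $\epsilon''$ small enough (each as a function of $\epsilon$ and $\mu$) so that $1/(\mu+\epsilon') - \epsilon'' \ge 1/(\mu+\epsilon)$. For instance, take $\epsilon' = \epsilon/2$ and $\epsilon'' = \bigl(1/(\mu+\epsilon/2) - 1/(\mu+\epsilon)\bigr)$, both of which are fixed positive constants depending only on $\mu$ and $\epsilon$, keeping the running time polynomial.

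I do not anticipate any real obstacle here: the only nontrivial content is the already-established invariance of monotone submodularity under truncation and nonnegative summation, which guarantees the LSV2010 subroutine is applicable. The rest is a bookkeeping choice of $\epsilon', \epsilon''$ to repackage the two successive approximation losses into a single $1/(\mu+\epsilon)$ bound.
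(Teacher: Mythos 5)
Your proposal is correct and follows essentially the same route as the paper: truncate each $f_k$ to obtain an $(\eta,1)$-reduction that preserves monotone submodularity, observe that the subproblem \eqref{eq:MWU oracle} is monotone submodular maximization over a $\mu$-matroid intersection, and plug the $1/(\mu+\epsilon')$-approximation of \cite{LSV2010} into Theorem~\ref{thm:mwu_main}. Your explicit bookkeeping of $\epsilon'$ and $\epsilon''$ is a detail the paper leaves implicit, but it is handled correctly.
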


A monotone linear maximization subject to $\mu$-matroid intersection can be viewed as a weighted rank-function (which is monotone submodular) maximization subject to $(\mu-1)$-matroid intersection.
Hence, we also obtain the following theorem.
\begin{theorem}\label{thm:monotone_linear_mu}
	For any fixed positive real $\epsilon>0$,
	there exists a $1/(\mu-1+\epsilon)$ for the robust optimization problem \eqref{eq:problem}
	when $f_1,\dots,f_n$ are monotone linear and $\cI$ is given by a $\mu$-matroid intersection.
\end{theorem}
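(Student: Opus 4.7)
The plan is to instantiate Theorem~\ref{thm:mwu_main} with an $\alpha$-approximate oracle for the per-iteration subproblem \eqref{eq:MWU oracle}, where the oracle itself is obtained by shifting one of the $\mu$ matroids from the constraint into the objective as a weighted rank function. This reduces a linear maximization over a $\mu$-matroid intersection to a monotone submodular maximization over a $(\mu-1)$-matroid intersection, which admits the $1/(\mu-1+\epsilon')$-approximation of Lee--Sviridenko--Vondrák (the same subroutine already used in the proof of Theorem~\ref{thm:monotone_submodular_mu}, but with $\mu$ replaced by $\mu-1$).

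First, I would invoke the linear $(\eta,1/|E|)$-reduction from the list above to realize each $f_k^\eta$ as a nonnegative linear function with weight vector $w_k^\eta$. For any fixed $q\in\Delta_n$, the objective of \eqref{eq:MWU oracle} then collapses to $\sum_{k\in[n]} q_k f_k^\eta(X) = \hat w(X)$, where $\hat w := \sum_{k\in[n]} q_k w_k^\eta\ge 0$ is one nonnegative linear weight. Singling out the matroid $\mathcal{M}_\mu=(E,\cI_\mu)$, I would introduce the weighted rank function $\hat g(X) := \max\{\hat w(Y): Y\subseteq X,\ Y\in\cI_\mu\}$, which is classically monotone submodular. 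Setting $\cI':=\bigcap_{i=1}^{\mu-1}\cI_i$, downward closure of $\cI'$ yields
\[
  \max_{X\in\cI'}\hat g(X)\ =\ \max_{Y\in\cI}\hat w(Y),
\]
and, crucially, given any $X^*\in\cI'$ the matroid greedy on $\mathcal{M}_\mu$ restricted to the ground set $X^*$ outputs a single $Y^*\subseteq X^*$ with $Y^*\in\cI_\mu$ (hence $Y^*\in\cI$) and $\hat w(Y^*)=\hat g(X^*)$ exactly.

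Feeding $\hat g$ over $\cI'$ into the Lee--Sviridenko--Vondrák $1/(\mu-1+\epsilon')$-approximation and then extracting $Y^*$ therefore provides a $1/(\mu-1+\epsilon')$-approximate oracle for \eqref{eq:MWU oracle}. Plugging this oracle with $\alpha=1/(\mu-1+\epsilon')$ into Theorem~\ref{thm:mwu_main} produces a randomized strategy $p^*$ whose value is at least $(\alpha-\epsilon'')\nu^*$. For a prescribed target $\epsilon>0$ I would choose, e.g., $\epsilon'=\epsilon/2$ and $\epsilon''=\Theta(\epsilon/(\mu-1+\epsilon)^2)$, which makes $\alpha-\epsilon''\ge 1/(\mu-1+\epsilon)$ and finishes the argument.

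The main obstacle I expect is ensuring that the greedy extraction of a single $Y^*$ attaining $\hat g(X^*)$ exactly is legitimate, and this hinges on $\hat w$ being one fixed linear vector independent of $k$; an analogous extraction applied after the submodular $(\eta,1)$-reduction $X\mapsto\min\{f_k(X),\eta\}$ would instead yield $k$-dependent witnesses $Y^*_k$ and collapse the argument back to the weaker $1/(\mu+\epsilon)$ of Theorem~\ref{thm:monotone_submodular_mu}. Preserving linearity in the $(\eta,\gamma)$-reduction --- hence the choice $\gamma=1/|E|$ rather than $\gamma=1$ --- is therefore the key ingredient that lets us gain one matroid and reach $1/(\mu-1+\epsilon)$.
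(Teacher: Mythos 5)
Your proposal is correct and follows essentially the same route as the paper, which justifies Theorem~\ref{thm:monotone_linear_mu} with the single observation that monotone linear maximization over a $\mu$-matroid intersection can be viewed as weighted-rank-function (monotone submodular) maximization over a $(\mu-1)$-matroid intersection, handled by the Lee--Sviridenko--Vondr\'ak subroutine inside Theorem~\ref{thm:mwu_main}. Your write-up merely fills in the details the paper leaves implicit (the aggregation into a single nonnegative weight $\hat w$ via the linear $(\eta,1/|E|)$-reduction, the exact greedy extraction of $Y^*$, and the $\epsilon$ bookkeeping), all of which are sound.
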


\subsection{Robust (non-monotone) submodular maximization}\label{sec:mwu_submodular}
For each $k\in[n]$, let $f_k\colon 2^E\to\mathbb{R}_+$ be a submodular function with $f_k(\emptyset) = 0$ that is potentially non-monotone. 
Let $\cI=2^E$.
For a (non-monotone) submodular function maximization without constraint, 
there exists a $1/2$-approximation algorithm and it is best possible~\cite{FMVV2011,BFNS2015}.

In this case, $\min\{f_k,\eta\}$ may not a submodular function.
Hence, for each $k\in[n]$, we define
\begin{align*}
f_k^\eta(X)=\min\{f(Z)+\eta\cdot |X-Z|/|E|\mid Z\subseteq X\}.
\end{align*}
Note that $f_k^\eta(X)$ is a submodular function~\cite{fujishige2005} and
we can evaluate the value $f_k^\eta(X)$ in strongly polynomial time by a submodular function minimization algorithm~\cite{schrijver2000,IFF2001}.
Then, we observe that each $f_k^\eta$ is an $(\eta,1/|E|)$-reduction of $f_k$, because
(i) $f_k^\eta(X)\le f_k(X)$ and $f_k^\eta(X)\le f(\emptyset)+\eta\cdot |X|/|E|\le \eta$, and
(ii) $f_k(X)\le \eta/|E|$ implies $f_k^\eta(X)=\min\{f(Z)+\eta\cdot |X-Z|/|E|\mid Z\subseteq X\}=f_k(X)$.

\begin{theorem}\label{thm:non-monotone_submodular}
	For any fixed positive real $\epsilon>0$,
	there exists a $(1/2-\epsilon)$-approximation algorithm for the robust optimization problem \eqref{eq:problem} 
	when $f_1,\dots,f_n$ are submodular and $\cI=2^E$.
\end{theorem}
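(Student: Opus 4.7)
The plan is to invoke Theorem \ref{thm:mwu_main} with $\alpha = 1/2$, where the role of the $\alpha$-approximation subroutine for \eqref{eq:MWU oracle} is played by the unconstrained non-monotone submodular maximization algorithm of \cite{BFNS2015}. The paragraph preceding the theorem already exhibits the $(\eta, 1/|E|)$-reduction $f_k^\eta(X) = \min\{f_k(Z) + \eta \cdot |X \setminus Z|/|E| \mid Z \subseteq X\}$, verifies that it satisfies the two conditions of Definition \ref{def:reduction}, and notes that each $f_k^\eta$ is submodular and efficiently evaluable via submodular function minimization. So the remaining work is essentially plugging these pieces into the MWU framework.

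First I would check that the input to Algorithm \ref{alg:mwu} is a legitimate instance of the hypothesis of Theorem \ref{thm:mwu_main}. The parameter $\gamma = 1/|E|$ is polynomially bounded (in the sense $1/\gamma$ is polynomial in the input), as required. For each $q \in \Delta_n$ and each $\eta$, the objective $\sum_{k \in [n]} q_k f_k^\eta$ of \eqref{eq:MWU oracle} is a nonnegative combination of submodular functions, hence submodular, and nonnegative since each $f_k^\eta \geq 0$. Because $\cI = 2^E$, the subproblem \eqref{eq:MWU oracle} is precisely unconstrained (potentially non-monotone) submodular maximization, for which \cite{BFNS2015} provides a deterministic $1/2$-approximation (and this bound is tight by \cite{FMVV2011}). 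Thus $\alpha = 1/2$ is achievable.

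Applying Theorem \ref{thm:mwu_main} with $\alpha = 1/2$ and $\gamma = 1/|E|$ then immediately yields a $(1/2 - \epsilon)$-approximation algorithm for \eqref{eq:problem} in time $O\bigl(\frac{n^2 |E| \ln n}{\epsilon^3} \cdot \theta\bigr)$, where $\theta$ is the running time of the submodular maximization subroutine plus the cost of evaluating $f_k^\eta$ (which is polynomial by \cite{schrijver2000,IFF2001}). This gives the statement of Theorem \ref{thm:non-monotone_submodular}.

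The only genuinely non-routine step is the verification that $f_k^\eta$ is submodular, since taking the minimum of a submodular function with a constant does not in general preserve submodularity; but the specific construction above is designed to avoid this pitfall — it is the infimal convolution of $f_k$ with the modular function $X \mapsto \eta |X|/|E|$, which is well known to preserve submodularity (see \cite{fujishige2005}). Everything else is a direct composition of ingredients already available.
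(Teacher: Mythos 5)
Your proposal is correct and follows essentially the same route as the paper: the paper's "proof" of Theorem~\ref{thm:non-monotone_submodular} is precisely the paragraph preceding it, which defines the same infimal-convolution reduction $f_k^\eta(X)=\min\{f_k(Z)+\eta|X\setminus Z|/|E| \mid Z\subseteq X\}$, checks the two conditions of Definition~\ref{def:reduction} with $\gamma=1/|E|$, and plugs the $1/2$-approximation algorithm of \cite{FMVV2011,BFNS2015} into Theorem~\ref{thm:mwu_main}. Your additional remarks (polynomial evaluability of $f_k^\eta$ via submodular function minimization, closure of submodularity under nonnegative combinations, and the explicit running-time bound) are all consistent with the paper.
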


\subsection{Robust linear maximization}\label{sec:mwu_line}
Let us consider the case where $f_1,\dots,f_n$ are monotone linear, i.e.,
$f_k(X)=\sum_{e\in X}w_{ke}$ where $w_{ke}\ge 0$ for each $k\in[n]$ and $e\in E$. 
For each $k\in[n]$, we define
\begin{align*}
f_k^\eta(X)=\sum_{e\in X}\min\{w_{ke},\,\eta/|E|\}.
\end{align*}
Then, each $f_k^\eta$ is an $(\eta,1/|E|)$-reduction of $f_k$, because
(i) $f_k^\eta(X)\le f_k(X)$ and $f_k^\eta(X)\le \eta\cdot |X|/|E|\le \eta$, and
(ii) $f_k(X)\le \eta/|E|$ implies $f_k^\eta(X)=\sum_{e\in X}\min\{w_{ke},\,\eta/|E|\}=f_k(X)$.
Then, \eqref{eq:MWU oracle} can be rewritten as
\begin{align*}
\max_{X\in\cI}\sum_{k\in[n]}q_kf_k^\eta(X)=\max_{X\in\cI}\sum_{e\in X}\left(\sum_{k\in[n]}q_k\min\{w_{ke},\,\eta/|E|\}\right). 
\end{align*}

We focus on the case $\cI$ is defined from a knapsack constraint, i.e., $\cI=\{X\subseteq E\mid \sum_{e\in X}s(e)\le C\}$, where $s(e)$ is size of each item $e \in E$ and $C$ is a capacity. 
In this case, \eqref{eq:MWU oracle} is a knapsack problem instance. 
It is known that the knapsack problem admits an FPTAS~\cite{KMS2000}. 
Hence, by applying Theorem \ref{thm:mwu_main}, we obtain the following result.

\begin{theorem}\label{thm:monotone_linear_knapsack}
	There exists an FPTAS for the robust optimization problem \eqref{eq:problem}
	when $f_1,\dots,f_n$ are monotone linear and $\cI$ is defined from a knapsack constraint.
\end{theorem}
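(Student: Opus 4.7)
The plan is to invoke Theorem~\ref{thm:mwu_main} with the $(\eta,1/|E|)$-reduction $f_k^\eta(X)=\sum_{e\in X}\min\{w_{ke},\,\eta/|E|\}$ already introduced just above. Since the reduction parameter is $\gamma=1/|E|$, which is polynomially bounded, and the preceding verification shows each $f_k^\eta$ is indeed an $(\eta,1/|E|)$-reduction of $f_k$, the hypotheses of Theorem~\ref{thm:mwu_main} are within reach as soon as we supply an approximation algorithm for the subproblem \eqref{eq:MWU oracle}.

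The first step is to observe that for any $q\in\Delta_n$ and any $\eta$, the subproblem reduces to a single classical knapsack instance: the objective $\sum_{k\in[n]}q_k f_k^\eta(X)=\sum_{e\in X}\tilde{w}_e$ is linear with nonnegative weights $\tilde{w}_e=\sum_{k\in[n]}q_k\min\{w_{ke},\eta/|E|\}$, and the feasibility constraint $\sum_{e\in X}s(e)\le C$ is exactly a knapsack constraint. Since the classical knapsack problem admits an FPTAS~\cite{KMS2000}, for any $\epsilon'>0$ we obtain a $(1-\epsilon')$-approximation algorithm for \eqref{eq:MWU oracle} running in time polynomial in $|E|$ and $1/\epsilon'$.

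Next I would plug this subroutine into Algorithm~\ref{alg:mwu} via Theorem~\ref{thm:mwu_main}. With $\alpha=1-\epsilon'$, the theorem yields a randomized strategy of objective value at least $(\alpha-\epsilon_0)\nu^* = (1-\epsilon'-\epsilon_0)\nu^*$ for any fixed $\epsilon_0>0$. Setting $\epsilon'=\epsilon_0=\epsilon/2$ gives a $(1-\epsilon)$-approximation. For the running time, Theorem~\ref{thm:mwu_main} guarantees a total running time of $O\!\bigl(\tfrac{n^2\ln n}{\alpha \epsilon_0^3\gamma}\,\theta\bigr)$, which specializes to $O\!\bigl(\tfrac{n^2|E|\ln n}{(1-\epsilon/2)(\epsilon/2)^3}\cdot\theta\bigr)$ with $\theta=\mathrm{poly}(|E|,1/\epsilon)$; this is polynomial in $n$, $|E|$, and $1/\epsilon$, which is exactly the FPTAS requirement.

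There is no real obstacle here beyond bookkeeping: the only subtle point is that the $\epsilon$ in the FPTAS guarantee must absorb both the loss from the knapsack subroutine and the additive loss from the MWU analysis, which is handled by splitting the error budget in half. All the substantive work has already been done in Theorem~\ref{thm:mwu_main}, in the verification that $f_k^\eta$ is an $(\eta,1/|E|)$-reduction, and in the existing knapsack FPTAS.
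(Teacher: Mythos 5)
Your proposal is correct and follows essentially the same route as the paper: apply the $(\eta,1/|E|)$-reduction $f_k^\eta(X)=\sum_{e\in X}\min\{w_{ke},\eta/|E|\}$, observe that the subproblem \eqref{eq:MWU oracle} becomes a single knapsack instance with nonnegative linear weights solvable by the known FPTAS, and feed this into Theorem~\ref{thm:mwu_main}. The only difference is that you spell out the error-budget split $\epsilon'=\epsilon_0=\epsilon/2$, which the paper leaves implicit.
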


\subsection{Cardinality Robustness for the Knapsack Problem}\label{sec:mwu_cardinality}
Finally, we apply Theorem \ref{thm:mwu_main} to the problem of maximizing the cardinality robustness for the knapsack problem.
We are given a set of items $E=\{1,2,\dots,n\}$ with size $s:E\to\mathbb{R}_{++}$, and value $v:E\to\mathbb{R}_{++}$, and a knapsack capacity $C$. 
Without loss of generality, we assume $s(e)\le C$ for each $e\in E$. 
The set of feasible solutions is $\cI=\{X\subseteq E\mid s(X)\le C\}$.

Let us denote
\begin{align*}
v_{\le k}(X)=\max\{\textstyle\sum_{e\in X'}v(e)\mid |X'|\le k,~X'\subseteq X\}
\end{align*}
and let $X_k^*\in\argmax_{X\in\cI}v_{\le k}(X)$.
The \emph{cardinality robustness} of $X \in \cI$ is defined as
\begin{align*}
\min_{k\in[n]}\frac{v_{\le k}(X)}{v_{\le k}(X_k^*)}.
\end{align*}
The \emph{maximum cardinality robustness problem} for the knapsack problem is to find a randomized solution $p \in \Delta(\cI)$ with the maximum cardinality robustness under the knapsack constraint. 
We show that this problem is NP-hard (see Theorem~\ref{thm:cknapsack_hard} in Section~\ref{sec:cardinality NP-hard}) but admits an FPTAS. 
Note that Kakimura et al.~\cite{Kakimura2012} proved that its deterministic version is NP-hard but also admits an FPTAS. 

We can see the maximum cardinality robustness problem as the robust optimization problem \eqref{eq:problem} with constraint $\cI$ and objective functions
\begin{align*}
  f_k(X)=\frac{v_{\le k}(X)}{v_{\le k}(X_k^*)}\qquad(k\in[n]).
\end{align*}
We remark that although the evaluation of $f_k(X)$ for a given solution $X$ is NP-hard due to $v_{\le k}(X_k^*)$, it holds that $\max_{X\in\cI}f_k(X)=f_k(X_k^*)=1$ for each $k\in [n]$.  
Thus, when we choose $\eta \ge 1$ and set $f_k^\eta=f_k$, we observe that $f_k^\eta$ is a $(\eta, 1)$-reduction of $f_k$. 
To use Algorithm~\ref{alg:mwu}, we need to construct an approximation algorithm to \eqref{eq:MWU oracle} for any $q\in\Delta_n$.
We provide an FPTAS for this, which implies that Algorithm~\ref{alg:mwu} is an FPTAS to the maximum cardinality robustness problem. 

In our FPTAS for \eqref{eq:MWU oracle}, we use the following FPTAS as a subroutine.
\begin{lemma}[\cite{CapraraKPP2000}]\label{lem:ck_FPTAS}
  There exists an FPTAS to compute the value of $v_{\le k}(X_k^*)$ for each $k\in [n]$.
\end{lemma}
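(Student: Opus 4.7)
The plan is to reformulate $v_{\le k}(X_k^*)$ as the standard cardinality-constrained knapsack problem and then construct an FPTAS by value scaling combined with a pseudo-polynomial dynamic program, following Caprara et al.~\cite{CapraraKPP2000}.

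First I would observe that since $\cI$ is downward-closed, for any $X' \subseteq E$ with $\sum_{e \in X'} s(e) \le C$ and $|X'| \le k$ we have $X' \in \cI$, so setting $X = X'$ in the definition of $v_{\le k}$ yields
\[
v_{\le k}(X_k^*) \;=\; \max\left\{\sum_{e \in X'} v(e) \;\middle|\; X' \subseteq E,\; |X'| \le k,\; \sum_{e \in X'} s(e) \le C\right\}.
\]
This is exactly the $k$-item knapsack problem, so it suffices to give an FPTAS for the latter.

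For a pseudo-polynomial baseline, let $v_{\max} = \max_{e \in E} v(e)$ and (after rounding to integers) define $\phi[i,j,u]$ as the minimum total size of a subset of $\{1,\ldots,i\}$ containing exactly $j$ items of total value exactly $u$, with $\phi[i,j,u] = +\infty$ if none exists, over the range $(i,j,u) \in \{0,\ldots,n\}\times\{0,\ldots,k\}\times\{0,\ldots,n v_{\max}\}$. A standard take/leave recurrence fills $\phi$ in $O(n^2 k\, v_{\max})$ time, and the answer is the largest $u$ for which $\min_{j\le k}\phi[n,j,u] \le C$.

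To obtain the FPTAS, fix $\varepsilon > 0$, set $K = \varepsilon v_{\max}/k$, and replace each $v(e)$ by $\tilde v(e) = \lfloor v(e)/K\rfloor$. Running the DP above on the scaled values takes $O(n k \sum_e \tilde v(e)) = O(n^2 k^2/\varepsilon)$ time, which is polynomial in $n$ and $1/\varepsilon$. Rounding costs at most $K$ per chosen item, hence at most $k K = \varepsilon v_{\max}$ in total; since the standing assumption $s(e) \le C$ guarantees that every singleton is feasible, the optimum is at least $v_{\max}$, converting the additive error into a multiplicative $(1-\varepsilon)$ guarantee. The one delicate point is precisely this additive-to-multiplicative conversion, which hinges on the singleton-feasibility lower bound $v_{\max}$ on the optimum; everything else is routine knapsack-DP bookkeeping.
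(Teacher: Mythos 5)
Your proposal is correct. Note, however, that the paper does not prove this lemma at all: it is imported as a black box from Caprara et al.~\cite{CapraraKPP2000}, so there is no internal proof to compare against. Your argument is a valid self-contained substitute. The reduction to the cardinality-constrained ($k$-item) knapsack problem is exactly right and uses the only nontrivial observation, namely that downward-closedness of $\cI$ lets you collapse the two-level maximization $\max_{X\in\cI}\max_{X'\subseteq X,\,|X'|\le k}\sum_{e\in X'}v(e)$ into a single maximization over $X'$ with both the size bound and the cardinality bound. The scaling-plus-DP construction is the textbook FPTAS for this problem, and the one delicate step you flag --- converting the additive loss $kK=\varepsilon v_{\max}$ into a multiplicative $(1-\varepsilon)$ factor via the lower bound $\mathrm{OPT}\ge v_{\max}$ --- is justified here because Section~\ref{sec:mwu_cardinality} explicitly assumes $s(e)\le C$ for all $e$, so every singleton is feasible and $k\ge 1$. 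The cited work of Caprara et al.\ obtains sharper running times and space bounds by more refined techniques, but for the lemma as stated (mere existence of an FPTAS, run once per $k\in[n]$) your elementary argument fully suffices.
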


Then we construct our FPTAS based on the dynamic programming.
\begin{lemma}\label{lem:crk_FPTAS}
  Given $q\in\Delta_n$, there exists an FPTAS to solve \eqref{eq:MWU oracle}. 
\end{lemma}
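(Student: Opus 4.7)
The plan is to reduce \eqref{eq:MWU oracle} to a knapsack-style dynamic program with position-dependent values, and then apply value scaling to obtain an FPTAS. Two preprocessing steps are needed: replacing the unknown normalizers $v_{\le k}(X_k^*)$ by the FPTAS estimates from Lemma~\ref{lem:ck_FPTAS}, and rewriting the objective $\sum_{k\in[n]}q_k f_k^\eta(X)$ in a form whose DP state space is polynomial in $n$ and in the reciprocal of the target accuracy.

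For the first step, I would apply Lemma~\ref{lem:ck_FPTAS} with accuracy $\epsilon'$ to obtain values $\tilde v_k$ with $(1-\epsilon')\,v_{\le k}(X_k^*)\le \tilde v_k\le v_{\le k}(X_k^*)$ for every $k\in[n]$, set $\tilde w_k=q_k/\tilde v_k$, and replace $F(X)=\sum_k q_k f_k^\eta(X)$ by the surrogate $\tilde F(X)=\sum_k \tilde w_k\, v_{\le k}(X)$, which satisfies $F(X)\le \tilde F(X)\le F(X)/(1-\epsilon')$. Thus a $(1-\epsilon')$-approximate maximizer of $\tilde F$ is a $(1-\epsilon')^2$-approximate maximizer of $F$. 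For the second step, I would sort items so that $v(e_1)\ge\cdots\ge v(e_n)$. Then for $X=\{e_{i_1},\dots,e_{i_m}\}$ with $i_1<\cdots<i_m$, the identity $v_{\le k}(X)=\sum_{j=1}^{\min(k,m)}v(e_{i_j})$ and an exchange of summation give
\[
\tilde F(X)=\sum_{j=1}^{|X|} v(e_{i_j})\,\tilde W_j,\qquad \tilde W_j:=\sum_{k=j}^n \tilde w_k.
\]
The key structural observation is that, when items are processed in this sorted order, the contribution of $e_i$ to $\tilde F$ depends only on $e_i$ itself and on the rank $j$ of $e_i$ among the already-selected items.

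I would then set up a DP table $B[i,j,T]$ equal to the minimum total size of a subset of $\{e_1,\dots,e_i\}$ of cardinality $j$ whose scaled objective equals $T$, with transitions (skip $e_i$, or select it as the $j$-th item) using integer scaled values $\hat v_{i,j}:=\lfloor K\cdot v(e_i)\,\tilde W_j\rfloor$ and scaling parameter $K=n^2/\epsilon'$. The returned solution corresponds to the maximum $T$ with $B[n,j,T]\le C$ for some $j$. Since $v(e_i)\,\tilde W_j\le v(e_i)\,\tilde W_1=\tilde F(\{e_i\})\le 1/(1-\epsilon')$ (using the standing assumption $s(e_i)\le C$), the total scaled objective is $O(nK)$, so the DP has $O(n^2\cdot nK)=O(n^5/\epsilon')$ states and runs in time polynomial in $n$ and $1/\epsilon'$.

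The main point to verify is the approximation guarantee of the scaling. Each of the at most $n$ selected items contributes additive rounding error at most $1/K$, so the scaled optimum underestimates $\tilde F$ by at most $n/K=\epsilon'/n$. For $\hat k=\argmax_k q_k$ the choice $X=X_{\hat k}^*$ satisfies $\tilde F(X_{\hat k}^*)\ge q_{\hat k}\ge 1/n$, so the additive loss is at most $\epsilon'\cdot \max_X \tilde F(X)$, giving a multiplicative $(1-\epsilon')$ guarantee. Composing with the $(1-\epsilon')^2$ loss from the normalizer approximations and choosing $\epsilon'=\Theta(\epsilon)$ yields a $(1-\epsilon)$-approximate maximizer of \eqref{eq:MWU oracle} in time polynomial in $n$ and $1/\epsilon$. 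The primary obstacle is the position-dependent, nonlinear structure of the objective; once exposed by sorting and swapping sums, it is tractable by DP, but without this reformulation the natural DP would need to encode the exact selected set, making polynomial-time complexity hopeless.
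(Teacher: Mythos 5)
Your proposal is correct and follows essentially the same route as the paper's proof: estimate the normalizers $v_{\le k}(X_k^*)$ via Lemma~\ref{lem:ck_FPTAS}, sort items by value and swap the order of summation so that each item's contribution depends only on its rank among selected items, then run a dynamic program over (prefix, cardinality, scaled objective value) that minimizes total size, with floor-scaling by $\Theta(n^2/\epsilon)$ and the bound $\max_X \sum_k q_k f_k^\eta(X) \ge 1/n$ controlling the additive rounding loss. The only differences are cosmetic (you underestimate the normalizers where the paper overestimates them, and you derive the $1/n$ lower bound directly from $q_{\hat k}\ge 1/n$ rather than via Proposition~\ref{prop:nustar}).
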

\begin{proof}
  Let $\epsilon$ be any positive real and 
  let $\nu'=\max_{X\in\cI}\sum_{k\in[n]}\frac{q_k\cdot v_{\le k}(X)}{v_{\le k}(X_k^*)}$ be the optimal value of \eqref{eq:MWU oracle}. 
  For each $k \in[n]$, let $v_k^*$ be any value such that $v_{\le k}(X_k^*)\le v_k^*\le v_{\le k}(X_k^*)/(1-\epsilon)$ $(k\in[n])$.
  We can compute such values in polynomial time with respect to $n$ and $1/\epsilon$ by Lemma~\ref{lem:ck_FPTAS}. 
  For each $X \in \cI$, let $f(X)=\sum_{k\in [n]}\frac{q_k\cdot v_{\le k}(X)}{v_k^*}$.
  Then by the definition of $v_k^*$, it holds that for any $X \in \cI$ 
  \begin{align*}
    \nu'\ge \max_{X\in\cI}f(X)\ge (1-\epsilon)\nu'.
  \end{align*}
  Hence, we aim to solve $\max_{X\in\cI}f(X)$ to know an approximate solution to \eqref{eq:MWU oracle}. 
  
  A simple dynamic programming based algorithm to maximizing $f$ runs in time depending on the value of $f$. 
  By appropriately scaling the value of $f$ according to $\epsilon$, we will be able to obtain a solution whose objective value is at least $(1-2\epsilon)\nu'$ in polynomial time with respect to both $n$ and $1/\epsilon$. 
  Let $\kappa=\lceil\frac{n^2}{\epsilon}\rceil$. 
  For each $X=\{e_1,\dots,e_\ell\}$ with $v(e_1)\ge \dots\ge v(e_\ell)$, we define
  \begin{align*}
    \bar{f}(X)&=\sum_{k=1}^\ell \left\lfloor\left(\sum_{\iota=k}^n\frac{q_\iota}{v^*_\iota}\right)v(e_k)\cdot\kappa\right\rfloor/\kappa.
  \end{align*}
  Note that 
  \begin{align*}
  f(X)
  =\sum_{k=1}^n\frac{q_k\cdot v_{\le k}(X)}{v_k^*}
  =\sum_{k=1}^n \frac{q_k}{v^*_k}\sum_{\iota=1}^{\min\{k,\ell\}}v(e_\iota)
  =\sum_{k=1}^\ell \left(\sum_{\iota=k}^n\frac{q_\iota}{v^*_\iota}\right)v(e_k).
  \end{align*}
  Thus $f(X) \geq \bar{f}(X)$ holds for all $X$. 
  Then we have
  \begin{align*}
  \nu' \geq \max_{X\in\cI}f(X)\ge \max_{X\in\cI}\bar{f}(X). 
  \end{align*}
  We observe that  $\nu'\ge\ \min_{q' \in \Delta_n} \max_{X \in \cI} \sum_{k \in [n]}q'_k f_k^\eta(X)$ and 
  this is at least $\min_{k\in[n]}f_k(X_k^*)/n=1/n$ by $f_k^\eta = f_k \ (k \in [n])$ and Proposition~\ref{prop:nustar}. 
  Thus we have $\nu' \geq 1/n$. 
  Then 
  \begin{align*}
    \max_{X\in\cI}\bar{f}(X)
    &\ge \max_{X\in\cI}f(X)-\frac{n}{\kappa}
    \ge \max_{X\in\cI}f(X)-\frac{\epsilon}{n}
    \ge (1-\epsilon)\nu'-\epsilon\cdot \nu'=(1-2\epsilon)\nu'.
  \end{align*}
  This implies that there exists an FPTAS to solve \eqref{eq:MWU oracle} if we can compute $\max_{X\in\cI}\bar{f}(X)$ in polynomial time in $n$ and $1/\epsilon$.

  Let $\tau(\zeta,\xi,\phi)=\min\{s(X)\mid X\subseteq\{1,\dots,\zeta\},~|X|=\xi,~\kappa\cdot \bar{f}(X)=\phi\}$.
  Here, we assume that $E=\{1,2,\dots,n\}$ and $v(1)\ge v(2)\ge \dots\ge v(n)$. 
  Then we can compute the value of $\tau(\zeta,\xi,\phi)$ by dynamic programming as follows:
  \begin{align*}
    \tau(\zeta,\xi,\phi)=
    \min\biggl\{\tau(\zeta-1,\xi,\phi),s(\zeta)+\tau\biggl(\zeta-1,\xi-1,\phi-\biggl\lfloor\biggl(\sum_{\iota=\zeta}^n\frac{q_\iota}{v^*_\iota}\biggr)v(e_\iota)\cdot\kappa\biggr\rfloor\biggr)\biggr\}. 
  \end{align*}
  We see that $\max_{X\in\cI}\bar{f}(X)=\max\{\phi\mid \tau(n,\xi,\phi)\le C, 0\le \xi\le n\}$. 
  
  It remains to discuss the running time. 
  For all $X$, since $f(X) \leq \sum_{k\in [n]} q_k \leq 1$ and $f(X) \geq \bar{f}(X)$, we see that $\kappa \cdot \bar{f}(X)$ is an integer in $[0, \kappa]$. 
  Hence, there exist $\kappa+1$ possiblities of $\bar{f}(X)$. 
  Therefore, we can compute $\max_{X\in\cI}\bar{f}(X)$ in $O(n^2\kappa)=O(n^4/\epsilon)$ time. 
\end{proof}

Therefore, we can see the following theorem by combining Theorem~\ref{thm:mwu_main} and Lemma~\ref{lem:crk_FPTAS}. 
\begin{theorem}\label{thm:cardinality_knapsack}
	There exists an FPTAS to solve the maximum cardinality robustness problem for the knapsack problem.
\end{theorem}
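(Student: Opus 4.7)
The plan is to cast the maximum cardinality robustness problem as an instance of \eqref{eq:problem} with $f_k(X) = v_{\le k}(X)/v_{\le k}(X_k^*)$ and invoke the MWU scheme of Theorem~\ref{thm:mwu_main}, using Lemma~\ref{lem:crk_FPTAS} as its internal approximation oracle. The structural fact I intend to exploit is that $\max_{X \in \cI} f_k(X) = 1$ for every $k$: the objectives are automatically normalized, so picking $\eta = 1$ and $f_k^\eta := f_k$ yields an $(\eta,1)$-reduction. Indeed, condition (i) of Definition~\ref{def:reduction} becomes $f_k(X) \le \min\{f_k(X),1\}$, which is trivial, and condition (ii) holds vacuously since $f_k^\eta = f_k$ throughout. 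This places the parameter $1/\gamma = 1$ in the polynomially bounded regime required by Theorem~\ref{thm:mwu_main}.

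Once the reduction is verified, I supply the approximation oracle for \eqref{eq:MWU oracle} via Lemma~\ref{lem:crk_FPTAS}: for any $\epsilon_1 > 0$ this returns a $(1-\epsilon_1)$-approximation of $\max_{X\in\cI}\sum_{k\in[n]} q_k f_k^\eta(X)$ in time polynomial in $n$ and $1/\epsilon_1$. Plugging $\alpha = 1-\epsilon_1$ into Theorem~\ref{thm:mwu_main} and choosing $\epsilon_1 = \epsilon_2 = \epsilon/2$ gives a $(1 - \epsilon_1 - \epsilon_2) = (1-\epsilon)$-approximation, while the total running time $O\bigl(\tfrac{n^2 \ln n}{(1-\epsilon_1)\epsilon_2^{3}} \cdot \theta(\epsilon_1)\bigr)$ remains polynomial in both $n$ and $1/\epsilon$, since $\theta(\epsilon_1)$ (the cost of the Lemma~\ref{lem:crk_FPTAS} oracle) is polynomial in $n$ and $1/\epsilon_1$. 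This yields the desired FPTAS.

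The main obstacle I anticipate is the evaluation of $f_k(X^{(t)})$ inside the weight update of Algorithm~\ref{alg:mwu}: the denominator $v_{\le k}(X_k^*)$ is NP-hard to compute exactly, so the algorithm cannot literally manipulate $f_k$. I plan to handle this by replacing $v_{\le k}(X_k^*)$ with the FPTAS estimate $v_k^*$ from Lemma~\ref{lem:ck_FPTAS} (exactly as in the proof of Lemma~\ref{lem:crk_FPTAS}). Since $v_k^* \in [v_{\le k}(X_k^*),\, v_{\le k}(X_k^*)/(1-\epsilon_1)]$, the surrogate $\tilde f_k := v_{\le k}/v_k^*$ satisfies $(1-\epsilon_1)\, f_k \le \tilde f_k \le f_k$, so running the MWU scheme on $\tilde f_k$ loses only another multiplicative $(1-\epsilon_1)$ factor in the final guarantee, which is absorbed by a further constant rescaling of $\epsilon_1$ and $\epsilon_2$. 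Everything else is a direct combination of Theorem~\ref{thm:mwu_main} and Lemma~\ref{lem:crk_FPTAS}.
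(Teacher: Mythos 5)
Your proposal is correct and follows essentially the same route as the paper: view the problem as \eqref{eq:problem} with $f_k(X)=v_{\le k}(X)/v_{\le k}(X_k^*)$, observe that $\max_{X\in\cI}f_k(X)=1$ so that $f_k^\eta=f_k$ is an $(\eta,1)$-reduction, and combine Theorem~\ref{thm:mwu_main} with the dynamic-programming oracle of Lemma~\ref{lem:crk_FPTAS}. Your extra paragraph on replacing the NP-hard denominator $v_{\le k}(X_k^*)$ by the estimate $v_k^*$ in the weight update is a sound implementation detail that the paper leaves implicit.
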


\subsubsection{NP-hardness of the cardinality robustness for the knapsack problem}\label{sec:cardinality NP-hard} 
We give a reduction from the partition problem with restriction that two partitioned subsets are restricted to have equal cardinality, which is an NP-complete problem \cite{garey1979cai}. 
Given even number of positive integers $a_1,a_2,\dots,a_{2n}$, the problem is to find a subset $I\subseteq [2n]$ such that $|I|=n$ and $\sum_{i\in I}a_i=\sum_{i\in [2n]\setminus I}a_i$. 
Recall that $[2n]=\{1,2,\dots,2n\}$.

\begin{theorem}\label{thm:cknapsack_hard}
	It is NP-hard to find a solution $p\in \Delta(\cI)$ with the maximum cardinality robustness for the knapsack problem.
\end{theorem}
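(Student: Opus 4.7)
The plan is to reduce from the equal-cardinality partition problem, adapting the deterministic reduction of Kakimura~et~al.~\cite{Kakimura2012}. Given positive integers $a_1,\dots,a_{2n}$ with $\sum_i a_i=2S$, I will construct a knapsack instance and a rational threshold $\alpha^*$ such that the instance admits a randomized strategy of cardinality robustness at least $\alpha^*$ if and only if a balanced partition exists.

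For the construction I would take items with sizes $s(i)=a_i$ and values $v(i)=a_i$ and knapsack capacity $C=S$ (possibly padded with a few auxiliary items of carefully chosen sizes and values to eliminate degeneracies). Then $\cI$ consists of all subsets of total weight at most $S$, and $v(X)\le S$ with equality exactly when $X$ encodes a balanced partition. The target values $v_{\le k}(X_k^*)$ can be computed explicitly in terms of the sorted input $a_{(1)}\ge\dots\ge a_{(2n)}$, and together they determine the candidate threshold $\alpha^*$.

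For the completeness direction, suppose a balanced partition $I$ exists. I would then consider the mixed strategy $p^\star$ that plays $I$ and $[2n]\setminus I$ with probability $1/2$ each. A direct calculation gives $\sum_{X} p^\star_X v_{\le k}(X)\ge \alpha^*\cdot v_{\le k}(X_k^*)$ for every $k\in[2n]$, with the binding scenarios being those large enough that $v(X)=S$ must be achieved in expectation; the small-$k$ scenarios are then verified from the symmetry of $I$ and $[2n]\setminus I$ with respect to the sorted values.

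The main obstacle is the soundness direction: if no balanced partition exists, I need to show that every $p\in\Delta(\cI)$ has cardinality robustness strictly below $\alpha^*$. Here I would exploit the large-$k$ scenarios, where $v_{\le k}(X_k^*)=\max_{X\in\cI}v(X)<S$; since $\sum_X p_X v(X)\le \max_{X\in\cI}v(X)<S$ for every $p$, and the small-$k$ scenarios force $p$ to be spread across sets that realize the top-$k$ values, the minimax identity from the preliminaries combined with this deficit propagates to a strict decrease in the minimum ratio across $k$. To convert the strict inequality into an inverse-polynomial separation, as required for a Karp reduction, I anticipate rescaling the values or inserting dummy items to amplify the gap. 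The delicate step is harmonizing the scenarios that are tight under completeness with the one witnessing soundness so that a single threshold $\alpha^*$ cleanly separates yes- and no-instances; this is the heart of the argument.
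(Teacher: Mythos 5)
Your proposal has the right reduction source (equal-cardinality partition, as in the paper), but the construction and the soundness argument both have genuine gaps. The central flaw is in your soundness reasoning: you argue that if no balanced partition exists then $\sum_X p_X v(X)\le\max_{X\in\cI}v(X)<S$, and that this deficit forces the robustness below $\alpha^*$. But cardinality robustness is measured against $v_{\le k}(X_k^*)$, and for large $k$ that benchmark \emph{is} $\max_{X\in\cI}v(X)$ — it shrinks by exactly the same deficit. A deterministic strategy that plays a maximizer of $v$ still attains ratio $1$ in the large-$k$ scenarios, so no separation appears. In your plain construction $s(i)=v(i)=a_i$, $C=S$ there is in fact no tension between small-$k$ and large-$k$ scenarios at all: a single well-chosen set tends to be simultaneously near-optimal for every $k$, so the instance does not distinguish yes- from no-instances, and your mixed strategy over $I$ and $[2n]\setminus I$ buys nothing that a deterministic set does not already give.

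The missing ideas are precisely the two gadgets the paper builds in. First, a special item $0$ with $s(0)=C/2$ but value only $2v(1)$: it is indispensable for the $k=1$ scenario (where $v_{\le 1}(X_1^*)=v(0)$) yet consumes half the capacity, so any set containing it loses value in the $k=2n,2n+1$ scenarios, where $[2n]$ attains $C$ but $\{0\}\cup Y$ attains at most $v(0)+C/2<C$ — and attains $v(0)+C/2$ exactly iff $\sum_{i\in Y}s(i)=C/2$. This is what makes randomization necessary and ties the achievable threshold $\alpha=\frac{1}{4}\cdot\frac{3C-2v(0)}{C-v(0)}$ to the existence of a partition: the $k=1$ scenario lower-bounds the probability mass on sets containing $0$, and the $k=2n+1$ scenario then strictly penalizes that mass unless some $\{0\}\cup Y\in\cI$ reaches value exactly $v(0)+C/2$. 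Second, the uniform shift $s(i)=v(i)=a_i+2na_1$ forces $|Y|=n$ in the equation $\sum_{i\in Y}s(i)=C/2$ (the cases $|Y|>n$ and $|Y|<n$ over- or undershoot by more than the $a_i$'s can compensate), which is what makes the reduction respect the equal-cardinality constraint; your construction has no mechanism for this. Finally, since the separation obtained is an exact yes/no dichotomy at the threshold $\alpha$ (strict inequality in the no-case), no amplification step of the kind you anticipate is needed — but as written your plan leaves both the construction of the gadget and the quantitative soundness bound, which you yourself flag as ``the heart of the argument,'' unresolved.
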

\begin{proof}
	Let $(a_1,a_2,\dots,a_{2n})$ be an instance of the partition problem. 
	Without loss of generality, we assume that $a_1\ge a_2\ge \dots\ge a_{2n}~(\ge 1)$ and $n\ge 4$.
	Define $A=\sum_{i=1}^{2n}a_i/2$.
	We construct the following instance of the maximum cardinality robustness problem for the knapsack problem:
	\begin{itemize}
		\item $E=\{0, 1, \ldots, 2n\}$,
		\item $s(0)=A+2n^2a_1$, $v(0)=2(2n+1)a_1$,
		\item $s(i)=v(i)=a_i+2na_1$ $(i=1,\dots,2n)$,
		\item $C=\sum_{i=1}^{2n} s(i) = 2A+4n^2a_1$.
	\end{itemize}
	Note that $C/2 = s(0) \geq s(1) \geq \dots \geq s(2n)$ and $v(0) = 2v(1) > v(1) \geq \dots \geq v(2n)$. 
	We denote by $\cI$ the set of knapsack solutions. 
	
	Let 
	\begin{align*}
	\alpha= \frac{1}{4} \cdot \frac{3C-2v(0)}{C-v(0)} =  \frac{1}{4}\cdot\frac{3(A+2n^2a_1)-2(2n+1)a_1}{A+2n^2a_1-(2n+1)a_1}.
	\end{align*}
	We claim that this instance has a randomized $\alpha$-robust solution $p \in \Delta(\cI)$ if and only if the partition problem instance has a solution. 
	Recall that $p \in \Delta(\cI)$ is called $\alpha$-robust if $\sum_{X\in\cI} p_X \cdot v_{\le k}(X) \geq \alpha \cdot \max_{Y\in\cI}v_{\le k}(Y)$ for any $k\in[n]$.
	
	Suppose that $I\subseteq [2n]$ is a solution to the partition problem instance, i.e., $|I|=n$ and $\sum_{i\in I}a_i=\sum_{i\in [2n]\setminus I}a_i~(=A)$. 
	Let
	\begin{align*}
	r= \frac{1}{2} \cdot \frac{C-2v(0)}{C-v(0)} =\frac{1}{2}\cdot\frac{A+2n^2a_1-2(2n+1)a_1}{A+2n^2a_1-(2n+1)a_1}.
	\end{align*} 
	Note that $r=2(1-\alpha)$. 
	We define a randomized solution $p \in \Delta(\cI)$ by $p_Y = r$ if $Y=[2n]$, $p_Y = 1-r$ if $Y=\{0\}\cup I$, and $p_Y = 0$ otherwise. 
	Note that $[2n] \in \cI$ by the definition of $C$ and $\{0\} \cup I \in \cI$ because $s(0) = C/2$ and $\sum_{i \in I} s(i) = C/2$. 
	We claim that $p$ is an $\alpha$-robust solution by showing that $v_{\le k}(p)/v_{\le k}(X_k^*) \geq \alpha$ for all $k=1, \ldots, 2n+1$. 
	Recall that $X_k^*\in\argmax_{X\in\cI}v_{\le k}(X)$.

	First, take arbitrarily $k \in \{1, \ldots, n+1\}$. 
	We have $v_{\le k}(X_k^*)\le \sum_{i=0}^{k-1} v(i) \leq (k+1)v(1)$. 
	Moreover, $v_{\le k}([2n]) = \sum_{i=1}^{k}v(i) \geq v(1)+(k-1)2na_1$, and $v_{\le k}(\{0\} \cup I) \geq v(0)+(k-1)2na_1$. 
	Thus we see that 
	\begin{align*}
	\frac{v_{\le k}(p)}{v_{\le k}(X_k^*)}&\ge \frac{(r(2n+1)a_1+(1-r)\cdot 2(2n+1)a_1)+(k-1)\cdot 2na_1}{(k+1)(2n+1)a_1}\\
	&= \frac{2n}{2n+1} + \left(\frac{2}{2n+1}-r\right)\cdot \frac{1}{k+1}. 
	\end{align*}
	Here, because $a_1 \leq A \leq 2na_1$ and $n \geq 4$, it follows that
	\begin{align}\label{eq:alphaupper}
	\alpha
	=\frac{1}{4}\cdot\frac{3(A+2n^2a_1)-2(2n+1)a_1}{A+2n^2a_1-(2n+1)a_1}
	\le \frac{1}{4}\cdot\frac{6n^2-4n+1}{2n(n-1)} 
	\le \frac{2n}{2n+1}. 
	\end{align}
	Then, we see that $\frac{2}{2n+1}-r \leq 0$ since $r = 2(1-\alpha)\ge 2(1-\frac{2n}{2n+1})=\frac{2}{2n+1}$. 
	This implies that 
	\begin{align*}
	\frac{v_{\le k}(p)}{v_{\le k}(X_k^*)} \geq \frac{2n}{2n+1} + \left(\frac{2}{2n+1}-r\right)\cdot \frac{1}{2} 
	= 1 - \frac{r}{2} = \alpha.
	\end{align*}
	
        Next, assume that $k=n+2$. 
        We claim that $v_{\le k}(X_k^*)\le (n+2)\cdot (2n+1)a_1$. 
        To describe this, let $X$ be any set in $\cI$ with $0 \in X$. 
        Since $s(0)=C/2$ and $s(i)=v(i)$ for all $i \geq 1$, we have $\sum_{i \in X} v(i) \leq v(0) + C/2 = v(0) + \sum_{i \in I}v(i) \leq v(0)+\sum_{i\in I} v(1) \leq (n+2)v(1)$. 
        On the other hand, for any set $X \in \cI$ with $0 \not\in X$, we have $v_{\le k}(X_k^*) \leq \sum_{i=1}^{n+2} v(i) \leq (n+2)v(1)$. 
        Hence, $v_{\le k}(X_k^*)\le (n+2)\cdot (2n+1)a_1$. 
        In addition, we observe that $v_{\le k}([2n]) = \sum_{k=1}^{n+2}v(i) \geq (n+2)\cdot 2na_1$ and $v_{\le k}(\{0\} \cup I) \geq v(0)+n\cdot 2na_1 \geq (n+2)2na_1$. 
        Thus, $v_{\le k}(p) \ge (n+2)\cdot 2na_1$.
        These facts together with \eqref{eq:alphaupper} imply that
        \begin{align*}
          \frac{v_{\le k}(p)}{v_{\le k}(X_k^*)}
          &\ge \frac{(n+2)2na_1}{(n+2)(2n+1)a_1}
          =\frac{2n}{2n+1}
          \ge \alpha. 
	\end{align*}

        Let us consider the case when $k \in \{n+3, \ldots, 2n-1\}$.
        For any $X \in \cI$ with $0 \in X$, we observe that $v_{\le k}(X) \leq (n+2)(2n+1)a_1 \leq 2n(n+3)a_1\le \sum_{i=1}^k v(i) = v_{\le k}([2n])$, where the first inequality holds by a similar argument to the one in the above case. 
        Thus, we see that $v_{\le k}(X_k^*)=v_{\le k}([2n])$. 
        Because $v_{\le k}([2n]) \leq v_{\le k+1}([2n])$ and $v_{\le k}(\{0\} \cup I) = v_{\le k+1}(\{0\} \cup I)$, it holds that
        \begin{align*}
          \frac{v_{\le k}(p)}{v_{\le k}(X_k^*)}
          &=\frac{r\cdot v_{\le k}([2n])+(1-r)v_{\le k}(\{0\} \cup I)}{v_{\le k}([2n])}\\
          &=r+\frac{(1-r)v_{\le k}(\{0\}\cup I )}{v_{\le k}([2n])}\\
          &\ge r+\frac{(1-r)v_{\le k}(\{0\} \cup I)}{v_{\le k+1}([2n])}\\
          &=\frac{r\cdot v_{\le k+1}([2n])+(1-r)v_{\le k}(\{0\} \cup I)}{v_{\le k+1}([2n])}\\
          &=\frac{v_{\le k+1}(p)}{v_{\le k+1}(X_{k+1}^*)}.
        \end{align*}
        
        Hence, it remains to show that $\frac{v_{\le k}(p)}{v_{\le k}(X_k^*)} \geq \alpha$ when $k =2n$ and $k=2n+1$. 
        It is clear that $v_{\le 2n}(p) = v_{\le 2n+1}(p)$ since $v_{\le 2n}([2n]) = v_{\le 2n+1}([2n])$ and $v_{\le 2n}(\{0\} \cup I) = v_{\le 2n+1}(\{0\} \cup I)$. 
        We have also $v_{\le 2n+1}(X_k^*) = v_{\le 2n}(X_k^*) = \sum_{i=1}^{2n}v(i) = C$ because $[2n+1] \not\in \cI$. 
        Thus, it follows that
	\begin{align*}
        \frac{v_{\le 2n+1}(p)}{v_{\le 2n+1}(X_k^*)}
        &= \frac{v_{\le 2n}(p)}{v_{\le 2n}(X_{2n}^*)}\\
        &=\frac{rC+(1-r)(v(0)+C/2)}{C} \\
        &=\frac{1+r}{2}+\frac{(1-r)v(0)}{C}\\
        &=1-\frac{r}{2} = \alpha,
	\end{align*}
	where the last second equation holds because $r = (C-2v(0))/2(C-v(0)) \Leftrightarrow (1-r)v(0) = C(1/2-r)$. 
	Therefore, $p$ is $\alpha$-robust.
	
	It remains to prove that if the partition problem instance has no solution, then there exists no $\alpha$-robust solution. 
	Let $p\in\Delta(\cI)$ be a solution and let $r=\sum_{X : 0\not\in X\in\cI} p_X$.
	To show a contradiction, we assume that $p$ is $\alpha$-robust.
	Then it must hold that 
	\begin{align*}
	\frac{v_{\le 1}(p)}{v_{\le 1}(X_1^*)}
	&=\frac{r \cdot v(1)+(1-r)\cdot v(0)}{v(0)} = \frac{(2-r)v(1)}{2v(1)} =\frac{2-r}{2}\\
	&\ge \alpha = \frac{1}{4}\cdot \frac{3C-2v(0)}{C-v(0)}
	\end{align*}
	and hence
	\begin{align*}
	r\le\frac{1}{2}\cdot \frac{C-2v(0)}{C-v(0)}. 
	\end{align*}
	This implies that $r < 1$ and $p_{X} > 0$ for some $X \in \cI$ with $0 \in X$. 
	
	On the other hand, we claim that ${v_{\le 2n+1}(p)}/{v_{\le 2n+1}(X_{2n+1}^*)} < \alpha$. 
	For any $X \in \cI$ with $0 \not\in X$, we observe that $v_{\le 2n+1}(X) \leq \sum_{i=1}^{2n} v(i) =C$. 
	Take an arbitrary set $Y$ such that $Y \cup \{0\} \in \cI$. 
	It holds that $v_{\le 2n+1}(Y\cup \{0\}) = v(0)+\sum_{i \in Y} s(i) \leq v(0)+C/2$.
        We claim that $\sum_{i \in Y} s(i) \ne C/2$.
        If $|Y|>n$, then $\sum_{i \in Y} s(i)\ge (n+1)\cdot 2na_1>na_1+2n^2a_1\ge A+2n^2a_1=C/2$.
        If $|Y|<n$, then $\sum_{i \in Y} s(i)\le (n-1)\cdot (2n+1)a_1\le -n-1+2n^2a_1<A+2n^2a_1=C/2$.
        If $|Y|=n$, then $\sum_{i \in Y} s(i) = \sum_{i \in Y} a_i + 2n^2 a_1 \ne A+2n^2a_1 \, (= C/2)$ since the partition problem instance has no solution.
	Thus we see that $v_{\le 2n+1}(p) < r\cdot C + (1-r)\cdot (v(0)+C/2)$ by $r<1$. 
	Since $v_{\le 2n+1}(X_{2n+1}^*) = C$ and $r\le\frac{1}{2}\cdot \frac{C-2v(0)}{C-v(0)}$, we have
	\begin{align*}
	\frac{v_{\le 2n+1}(p)}{v_{\le 2n+1}(X_{2n+1}^*)}
	&<\frac{rC+(1-r)(v(0)+C/2)}{C}\\
	&=\frac{r(C-2v(0))}{2C} + \frac{C+2v(0)}{2C} \\
	&\leq \frac{1}{4}\cdot \frac{(C-2v(0))^2 + 2(C-v(0))(C+2v(0))}{C(C-v(0))} \\
	&= \frac{1}{4}\cdot \frac{3C-2v(0)}{C-v(0)} = \alpha. 
	\end{align*}
	This implies that $p$ cannot be $\alpha$-robust. 
	
	Therefore, there exists a randomized $\alpha$-robust solution $p \in \Delta(\cI)$ if and only if the partition problem instance has a solution. 
\end{proof}

\section*{Acknowledgments}
The first author is supported by JSPS KAKENHI Grant Number JP16K16005 and JST ACT-I Grant Number JPMJPR17U7.
The second author is supported by JST ERATO Grant Number JPMJER1201, Japan, and JSPS KAKENHI Grant Number JP17K12646.

\bibliography{robust}

\begin{thebibliography}{10}

\bibitem{ABV2009}
Hassene Aissi, Cristina Bazgan, and Daniel Vanderpooten.
\newblock Min--max and min--max regret versions of combinatorial optimization
  problems: A survey.
\newblock {\em European journal of operational research}, 197(2):427--438,
  2009.

\bibitem{APS2011}
Bo~An, James Pita, Eric Shieh, Milind Tambe, Chris Kiekintveld, and Janusz
  Marecki.
\newblock Guards and protect: Next generation applications of security games.
\newblock {\em ACM SIGecom Exchanges}, 10(1):31--34, 2011.

\bibitem{AHK2012}
Sanjeev Arora, Elad Hazan, and Satyen Kale.
\newblock The multiplicative weights update method: a meta-algorithm and
  applications.
\newblock {\em Theory of Computing}, 8(1):121--164, 2012.

\bibitem{bienstock2008}
Daniel Bienstock.
\newblock Approximate formulations for 0--1 knapsack sets.
\newblock {\em Operations Research Letters}, 36(3):317--320, 2008.

\bibitem{bowles2009}
Samuel Bowles.
\newblock {\em Microeconomics: behavior, institutions, and evolution}.
\newblock Princeton University Press, 2009.

\bibitem{BV2004}
Stephen Boyd and Lieven Vandenberghe.
\newblock {\em Convex optimization}.
\newblock Cambridge university press, 2004.

\bibitem{BFNS2015}
Niv Buchbinder, Moran Feldman, Joseph~Seffi Naor, and Roy Schwartz.
\newblock A tight linear time (1/2)-approximation for unconstrained submodular
  maximization.
\newblock {\em SIAM Journal on Computing}, 44(5):1384--1402, 2015.

\bibitem{CCPV2007}
Gruia Calinescu, Chandra Chekuri, Martin P{\'a}l, and Jan Vondr{\'a}k.
\newblock Maximizing a submodular set function subject to a matroid constraint.
\newblock In {\em IPCO}, volume~7, pages 182--196. Springer, 2007.

\bibitem{CapraraKPP2000}
Alberto Caprara, Hans Kellerer, Ulrich Pferschy, and David Pisinger.
\newblock Approximation algorithms for knapsack problems with cardinality
  constraints.
\newblock {\em European Journal of Operational Research}, 123(2):333--345,
  2000.

\bibitem{CVZ2014}
Chandra Chekuri, Jan Vondr{\'a}k, and Rico Zenklusen.
\newblock Submodular function maximization via the multilinear relaxation and
  contention resolution schemes.
\newblock {\em SIAM Journal on Computing}, 43(6):1831--1879, 2014.

\bibitem{CLSS2017}
Robert~S. Chen, Brendan Lucier, Yaron Singer, and Vasilis Syrgkanis.
\newblock Robust optimization for non-convex objectives.
\newblock In {\em Advances in Neural Information Processing Systems}, pages
  4708--4717, 2017.

\bibitem{CC1984}
Michele Conforti and G{\'e}rard Cornu{\'e}jols.
\newblock Submodular set functions, matroids and the greedy algorithm: tight
  worst-case bounds and some generalizations of the rado-edmonds theorem.
\newblock {\em Discrete applied mathematics}, 7(3):251--274, 1984.

\bibitem{cunningham1984}
William~H. Cunningham.
\newblock Testing membership in matroid polyhedra.
\newblock {\em Journal of Combinatorial Theory, Series B}, 36(2):161--188,
  1984.

\bibitem{FMVV2011}
Uriel Feige, Vahab~S. Mirrokni, and Jan Vondr{\'a}k.
\newblock Maximizing non-monotone submodular functions.
\newblock {\em SIAM Journal on Computing}, 40(4):1133--1153, 2011.

\bibitem{FW2012}
Yuval Filmus and Justin Ward.
\newblock A tight combinatorial algorithm for submodular maximization subject
  to a matroid constraint.
\newblock In {\em Proceedings of FOCS}, pages 659--668. IEEE, 2012.

\bibitem{FNW1978}
Marshall~L. Fisher, George~L. Nemhauser, and Laurence~A. Wolsey.
\newblock An analysis of approximations for maximizing submodular set
  functions--ii.
\newblock {\em Polyhedral combinatorics}, pages 73--87, 1978.

\bibitem{FS1999}
Yoav Freund and Robert~E. Schapire.
\newblock Adaptive game playing using multiplicative weights.
\newblock {\em Games and Economic Behavior}, 29(1-2):79--103, 1999.

\bibitem{fujishige2005}
Satoru Fujishige.
\newblock {\em Submodular functions and optimization}, volume~58.
\newblock Elsevier, 2005.

\bibitem{fujita2013}
Ryo Fujita, Yusuke Kobayashi, and Kazuhisa Makino.
\newblock Robust matchings and matroid intersections.
\newblock {\em SIAM Journal on Discrete Mathematics}, 27:1234--1256, 2013.

\bibitem{fujito2000}
Toshihiro Fujito.
\newblock Approximation algorithms for submodular set cover with applications.
\newblock {\em IEICE Transactions on Information and Systems}, 83(3):480--487,
  2000.

\bibitem{garey1979cai}
Michael~R. Garey and David~S. Johnson.
\newblock {\em Computers and Intractability: A Guide to the Theory of
  {NP}-Completeness}.
\newblock Freeman New York, 1979.

\bibitem{GLS2012}
Martin Gr{\"o}tschel, L{\'a}szl{\'o} Lov{\'a}sz, and Alexander Schrijver.
\newblock {\em Geometric algorithms and combinatorial optimization}, volume~2.
\newblock Springer Science \& Business Media, 2012.

\bibitem{hassin2002rm}
Rafael Hassin and Shlomi Rubinstein.
\newblock Robust matchings.
\newblock {\em SIAM Journal on Discrete Mathematics}, 15(4):530--537, 2002.

\bibitem{IFF2001}
Satoru Iwata, Lisa Fleischer, and Satoru Fujishige.
\newblock A combinatorial strongly polynomial algorithm for minimizing
  submodular functions.
\newblock {\em Journal of the ACM}, 48(4):761--777, 2001.

\bibitem{kakimura2013ris}
Naonori Kakimura and Kazuhisa Makino.
\newblock Robust independence systems.
\newblock {\em SIAM Journal on Discrete Mathematics}, 27(3):1257--1273, 2013.

\bibitem{Kakimura2012}
Naonori Kakimura, Kazuhisa Makino, and Kento Seimi.
\newblock Computing knapsack solutions with cardinality robustness.
\newblock {\em Japan Journal of Industrial and Applied Mathematics},
  29(3):469--483, 2012.

\bibitem{Kale2007}
Satyen Kale.
\newblock {\em Efficient algorithms using the multiplicative weights update
  method}.
\newblock PhD thesis, Princeton University, 2007.

\bibitem{KZ2016}
Adam Kasperski and Pawe{\l} Zieli{\'{n}}ski.
\newblock {\em Robust Discrete Optimization Under Discrete and Interval
  Uncertainty: A Survey}, pages 113--143.
\newblock Springer International Publishing, 2016.

\bibitem{KMS2000}
Hans Kellerer, Renata Mansini, and Maria~Grazia Speranza.
\newblock Two linear approximation algorithms for the subset-sum problem.
\newblock {\em European Journal of Operational Research}, 120:289--296, 2000.

\bibitem{KPP2004}
Hans Kellerer, Ulrich Pferschy, and David Pisinger.
\newblock {\em Knapsack Problems}.
\newblock Springer, 2004.

\bibitem{KT2016}
Yusuke Kobayashi and Kenjiro Takazawa.
\newblock Randomized strategies for cardinality robustness in the knapsack
  problem.
\newblock {\em Theoretical Computer Science}, 2016.

\bibitem{KG2014}
Andreas Krause and Daniel Golovin.
\newblock Submodular function maximization.
\newblock In {\em In Tractability: Practical Approaches to Hard Problems (to
  appear)}. Cambridge University Press, 2014.

\bibitem{krause2008}
Andreas Krause, H.~Brendan McMahan, Carlos Guestrin, and Anupam Gupta.
\newblock Robust submodular observation selection.
\newblock {\em Journal of Machine Learning Research}, 9:2761--2801, 2008.

\bibitem{krause2011}
Andreas Krause, Alex Roper, and Daniel Golovin.
\newblock Randomized sensing in adversarial environments.
\newblock In {\em Proceedings of IJCAI}, volume~22, pages 2133--2139, 2011.

\bibitem{LSV2010}
Jon Lee, Maxim Sviridenko, and Jan Vondr{\'a}k.
\newblock Submodular maximization over multiple matroids via generalized
  exchange properties.
\newblock {\em Mathematics of Operations Research}, 35(4):795--806, 2010.

\bibitem{lovasz1983}
L{\'a}szl{\'o} Lov{\'a}sz.
\newblock Submodular functions and convexity.
\newblock In {\em Mathematical Programming The State of the Art}, pages
  235--257. Springer, 1983.

\bibitem{matuschke2015}
Jannik Matuschke, Martin Skutella, and Jos{\'e}~A Soto.
\newblock Robust randomized matchings.
\newblock In {\em Proceedings of SODA}, pages 1904--1915. SIAM, 2015.

\bibitem{NW1978}
George~L. Nemhauser and Laurence~A. Wolsey.
\newblock Best algorithms for approximating the maximum of a submodular set
  function.
\newblock {\em Mathematics of operations research}, 3(3):177--188, 1978.

\bibitem{nisan2007agt}
Noam Nisan, Tim Roughgarden, {\'E}va Tardos, and Vijay~V. Vazirani.
\newblock {\em Algorithmic Game Theory}.
\newblock Cambridge University Press, 2007.

\bibitem{OSU2016}
James~B. Orlin, Andreas~S. Schulz, and Rajan Udwani.
\newblock Robust monotone submodular function maximization.
\newblock In {\em Proceedings of IPCO}, pages 312--324. Springer, 2016.

\bibitem{PJM+2008}
James Pita, Manish Jain, Janusz Marecki, Fernando Ord{\'o}{\~n}ez, Christopher
  Portway, Milind Tambe, Craig Western, Praveen Paruchuri, and Sarit Kraus.
\newblock Deployed armor protection: the application of a game theoretic model
  for security at the los angeles international airport.
\newblock In {\em Proceedings of AAMAS}, pages 125--132, 2008.

\bibitem{schrijver2000}
Alexander Schrijver.
\newblock A combinatorial algorithm minimizing submodular functions in strongly
  polynomial time.
\newblock {\em Journal of Combinatorial Theory Series B}, 80:346--355, 2000.

\bibitem{schrijver2003}
Alexander Schrijver.
\newblock {\em Combinatorial Optimization: Polyhedra and efficiency}.
\newblock Algorithms and Combinatorics. Springer, 2003.

\bibitem{Sviridenko04}
Maxim Sviridenko.
\newblock A note on maximizing a submodular set function subject to a knapsack
  constraint.
\newblock {\em Operations Research Letters}, 32(1):41--43, 2004.

\bibitem{tambe2011}
Milind Tambe.
\newblock {\em Security and game theory: algorithms, deployed systems, lessons
  learned}.
\newblock Cambridge University Press, 2011.

\bibitem{TRKOT2009}
Jason Tsai, Shyamsunder Rathi, Christopher Kiekintveld, Fernando
  Ord{\'o}{\~n}ez, and Milind Tambe.
\newblock Iris - a tool for strategic security allocation in transportation
  networks.
\newblock In {\em Proceedings of AAMAS}, pages 37--44, 2009.

\bibitem{vonneumann1928}
John von Neumann.
\newblock Zur {T}heorie der {G}esellschaftsspiele.
\newblock {\em Mathematische Annalen}, 100:295--320, 1928.

\bibitem{YY1998}
Gang Yu and Jian Yang.
\newblock On the robust shortest path problem.
\newblock {\em Computers \& Operations Research}, 25(6):457--468, 1998.

\end{thebibliography}

\end{document}